\DeclarePairedDelimiter{\ceil}{\lceil}{\rceil}
\renewenvironment{proof}[1][\normalfont\bfseries\proofname]{\par
  \pushQED{\qed}%
  \normalfont \topsep6\p@\@plus6\p@\relax
  \trivlist
  \item\relax
        {\itshape
    #1\@addpunct{.}}\hspace\labelsep\ignorespaces
}{%
  \popQED\endtrivlist\@endpefalse
}
\def\thm@space@setup{%
  \thm@preskip=0\parskip \thm@postskip=0pt
}
\newcommand{\mnew}[1]{{\color{black}{#1}}}
\newcommand\argmax{\mbox{argmax}}
\newcommand\W{\mathcal{W}}
\renewcommand\S{\mathcal{S}}
\newcommand\T{\mathcal{T}}
\newcommand\I{\mathcal{I}}
\newcommand\K{\mathcal{K}}
\newcommand\C{\mathcal{C}}
\newcommand\A{\mathcal{A}}
\newcommand\cl{\mathtt{cl}}
\newcommand\LB{\mbox{LB}}
\renewcommand\P{\mathcal{P}}
\newcommand\mmspp{$\mathtt{min}$-$\mathtt{degree}$-$\mathtt{SPP}$}
\newcommand\hs{$\mathtt{Hitting}$-$\mathtt{set}$}
\newtheorem{theorem}{Theorem}
\newtheorem{prop}[theorem]{Proposition}
\newtheorem{corollary}[theorem]{Corollary}
\newtheorem{lemma}[theorem]{Lemma}
\newtheorem{definition}[theorem]{Definition}
\newtheorem{open}[theorem]{Open question}
\begin{document}

\title{Fast Combinatorial Algorithms for Efficient Sortation}

\author{Madison Van Dyk\thanks{This work was partially supported by the NSERC Discovery Grant Program, grant number RGPIN-03956-2017 and by an Amazon Post-Internship Fellowship. The work presented here does not relate to the authors' positions at Amazon} \and Kim Klause\thanks{Faculty of Mathematics and Computer Science, University of Bremen} \and Jochen Koenemann$^*$\thanks{Modeling and Optimization, Amazon.com} \and Nicole Megow$^\dagger$}
\date{}
\maketitle

\begin{abstract}
Modern parcel logistic networks are designed to ship demand between given origin, destination pairs of nodes in an underlying directed network. Efficiency dictates that volume needs to be consolidated at intermediate nodes in typical hub-and-spoke fashion. In practice, such consolidation requires parcel sortation. In this work, we propose a mathematical model for the physical requirements, and limitations of parcel sortation. We then show that it is NP-hard to determine whether a feasible sortation plan exists. We discuss several settings, where \mbox{(near-)}feasibility of a given sortation instance can be determined efficiently. The algorithms we propose are fast and build on combinatorial {\em witness set} type lower bounds that are reminiscent and extend those used in earlier work on degree-bounded spanning trees and arborescences. 
\end{abstract}

\section{Introduction}

In modern parcel logistics operations, one broadly faces the problem of finding an {\em optimal} way to ship a given input demand between source-sink node pairs within an underlying fulfillment network, typically represented as a directed graph~${D = (N, A)}$. The input demand is given as a collection of {\em commodities}~$\{(s_k,t_k)\}_{k \in \K}$, consisting of pairs of source and sink nodes in $N$. In this work, we assume each commodity $k$ comes with a directed $s_k,t_k$-path, $P_k$, along which all packages associated with commodity $k$ must travel. This assumption is often made in practical applications, as discussed in \cite{LK+23}.  

\begin{wrapfigure}{r}{.38 \textwidth}
\centering
 \includegraphics[width=.3\textwidth]{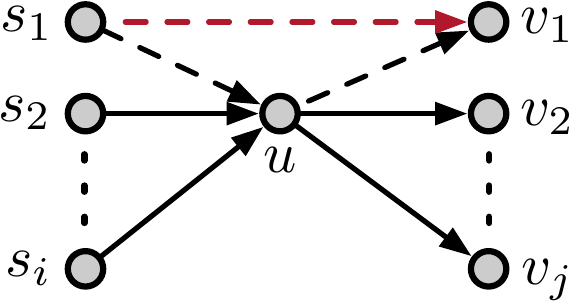}
 \caption{Sortation introduction}\label{fig:sort-intro}
\end{wrapfigure}

This paper focuses on {\em parcel sortation}, an aspect of routing that has previously been left unaddressed from a theory perspective. In the simplest setting, each package that travels from~$s_k$ to~$t_k$ via some internal node $u$ must be {\em sorted} at $u$ for its subsequent downstream node. In Figure \ref{fig:sort-intro}, packages arrive at node $u$ from nodes $s_1, \ldots s_i$ and travel on to downstream nodes $v_1, \ldots, v_j$. This requires sortation at node $u$ to sub-divide the stream of incoming parcels between the $j$ possible next stops. We refer to the physical device tasked with packages destined for one specific downstream node as a {\em sort point}. If the vast majority of the traffic on arc $(u,v_1)$ arrives at $u$ from $s_1$, then we could sort $s_1,v_1$ volume at $s_1$ to $v_1$ instead of to $u$. In practice this entails {\em containerizing} $s_1, v_1$ volume at $s_1$. These containers are not opened and sorted at intermediate node $u$, instead, they are {\em cross-docked}, an operation that is significantly faster and less costly than the process of sorting all individual parcels \cite{CrossDocking,SOAP}. 

Figure \ref{fig:sort-intro} illustrates the  containerization of $s_1,v_1$ volume at $s_1$ by replacing the two dashed arcs $(s_1,u)$ and $(u,v_1)$ by one red dashed arc $(s_1,v_1)$. Sort points are required at each node, for each outgoing arc. Hence, the arc replacement operation in Figure \ref{fig:sort-intro} reduces the number of required sort points at node $u$, but increases the number of sort points required at $s_1$ by 1. \mnew{Note that sort point capacity is often limited, and determining the maximum number of sort points required at any warehouse to route all commodities is important in both short- and long-term planning.} 

\textbf{A formal model for sortation.} Let $\K$ be a set of commodities, where each commodity $k \in \mathcal{K}$ has a source $s_k$, sink $t_k$, and a designated directed path $P_k$ in $D$. Let the \emph{transitive closure} of digraph $D$, denoted $\mathtt{cl}(D)$, be the graph obtained by introducing short-cut arcs $(i,j)$ whenever $j$ can be reached from $i$ through a directed path in $D$. An example is shown in Figure~\ref{fig:cl_example}. 
\begin{figure}[h!]
	\begin{center}
		\scalebox{0.36}{
			\includegraphics{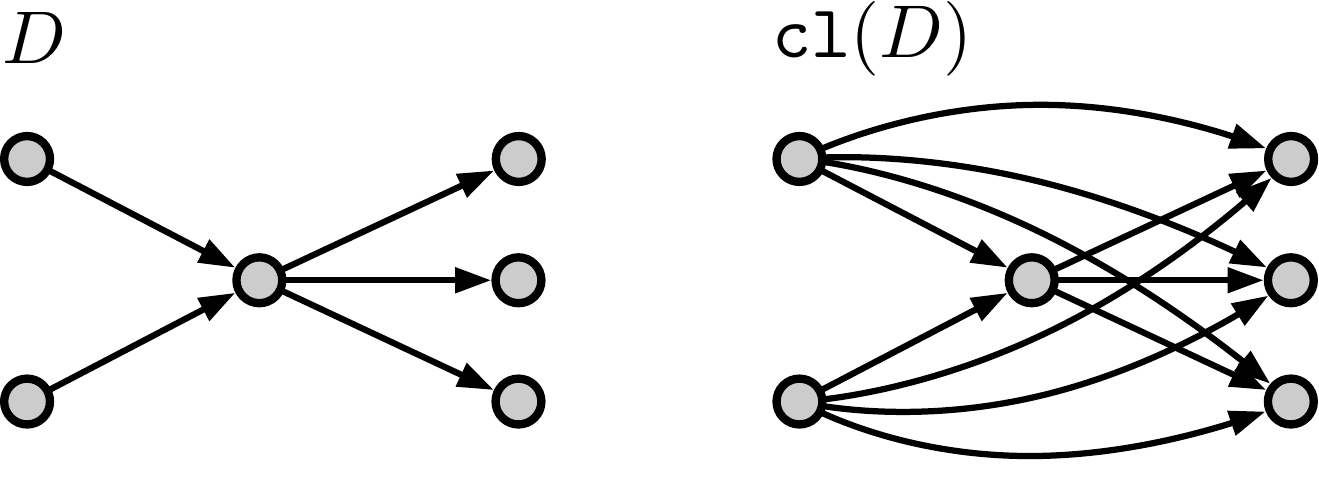}}
		\caption{Digraph $D$ and its transitive closure $\cl(D)$.}
		\label{fig:cl_example}
\end{center}\end{figure} 
\FloatBarrier
We say that a subgraph $H$ of $\mathtt{cl}(D)$ is \emph{feasible} if for each commodity $k \in \mathcal{K}$, there is an $s_k,t_k$-dipath in $H \cap \mathtt{cl}(P_k)$.  We define the min-degree sort point problem~as~follows. 
{\bf \mmspp}: given a directed graph $D$ and commodity set $\K$, find a feasible subgraph $H$ of $\cl(D)$ with minimum max out-degree. 

Given a graph $H$, let $\Delta^+(H)$ denote the maximum out-degree of a node in $H$. Let $\Delta^*$ denote the minimum max out-degree in any feasible subgraph. In the corresponding decision version of the problem, we are given a positive integer \emph{target}, $T$, and the problem is to determine whether or not $\Delta^* \leq T$. 

Let~$\mathcal{S} = \{s_1, s_2, \cdots, s_{k_s}\}$ be the set of sources, and $\T = \{t_1, t_2, \cdots, t_{k_t}\}$ be the set of sinks. For each $s \in \mathcal{S}$, let 
$\mathcal{T}(s) = \{t_k: (s, t_k) \in \K\}$ and for each~$t \in \mathcal{T}$, let $\mathcal{S}(t) = \{s_k: (s_k, t) \in \K\}$. 

We assume w.l.o.g.\  that $D$ is connected, the commodity set is non-empty {($\Delta^* \geq 1$),} all commodities are non-trivial ($s_k \neq t_k)$ and unique, and all nodes with no out-arcs in $D$ serve as the sink for some commodity. Any instance can be reduced to a (set of) equivalent instances that satisfy these assumptions. 


\subsection{Our results}
We say that an instance $\I = (D, \K)$ of \mmspp~is a \emph{tree instance} (\emph{star instance}) if the underlying undirected graph of $D$ is a tree (a star). We prove that \mmspp~is NP-hard, even when we restrict to the set of star instances.

\begin{restatable}{theorem}{hardness}\label{thm:hardness}
\mmspp~is NP-hard, even when restricted to star instances. 
\end{restatable}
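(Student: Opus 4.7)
The plan is to reduce from \textsc{Vertex Cover}, which is \np-hard. Given an instance $(G = (V, E), k)$ of \textsc{Vertex Cover} with $n := |V|$, I will assume, by padding $G$ with isolated vertices (which changes neither the instance nor the minimum vertex cover size in an essential way), that $n \geq 2k + 3$.

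First I would analyze the structure of feasible subgraphs for a star. In a star $D$ with center $c$, source leaves, and sink leaves, every designated path has the form $s \to c \to t$, so feasibility requires, for each commodity $(s, t)$, either the shortcut arc $(s, t) \in H$, or both $(s, c), (c, t) \in H$. Parameterizing by $X = \{t \in \T : (c, t) \in H\}$, it is optimal to shortcut $(s, t)$ exactly when $t \notin X$, which gives $\Delta^+(c) = |X|$ and $\Delta^+(s) = |\T(s) \setminus X| + \mathbf{1}[\T(s) \cap X \neq \emptyset]$ for each source $s$. Hence the decision problem with target $T$ is equivalent to asking for a set $X \subseteq \T$ with $|X| \leq T$ and $|\T(s) \setminus X| \leq T - 1$ for every source $s$ with $|\T(s)| \geq T + 1$.

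Next I would build a star instance from $(G, k)$. Set $\T = \{t_v : v \in V\} \cup \{d_1, \ldots, d_{n - 2k}\}$, so $|\T| = 2(n - k)$, and target $T = n - k$. For each edge $e = \{u, v\} \in E$ introduce an edge-source $s_e$ with $\T(s_e) = \T \setminus \{t_u, t_v\}$; for each $i$ introduce a dummy-source $s^D_i$ with $\T(s^D_i) = \T \setminus \{d_i\}$. All designated paths are the length-two path through $c$.

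Finally I would verify correctness. Each dummy source satisfies $|\T(s^D_i)| = 2(n-k)-1 > T$, and its constraint rearranges to $|X| \geq (n - k) + \mathbf{1}[d_i \in X]$. Combined with $|X| \leq T = n-k$, this forces $|X| = n - k$ and $d_i \notin X$ for every $i$, so $X = \{t_v : v \in V_X\}$ for some $V_X \subseteq V$ with $|V_X| = n - k$. Under $|X| = n-k$, the edge constraint for $s_e$ simplifies to $|\{t_u, t_v\} \cap X| \leq 1$, i.e.\ $V_X$ is an independent set in $G$ of size $n - k$, equivalently $V \setminus V_X$ is a vertex cover of size $k$; the converse direction is immediate from the same correspondence. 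The main obstacle will be keeping the dummy sinks $d_i$ from offering cheap slack in the edge constraints, which would collapse the reduction: the extreme-size choice $|\T(s^D_i)| = |\T| - 1$ is what resolves this, since it simultaneously saturates the budget and forces every dummy out of $X$, leaving precisely the space of vertex subsets of $V$ to be explored.
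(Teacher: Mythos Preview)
Your reduction is correct. The structural lemma---that for a star with leaf sources and leaf sinks, $\Delta^*\le T$ iff there exists $X\subseteq\T$ with $|X|\le T$ and $|\T(s)\setminus X|\le T-1$ for every source with $|\T(s)|\ge T+1$---is sound, and the dummy-source gadget cleanly forces $|X|=n-k$ with $X\subseteq\{t_v:v\in V\}$, after which the edge constraints are exactly the independent-set condition.

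The paper proceeds differently. It reduces from \textsc{Hitting Set} using a \emph{direct} encoding: each set $S_i$ becomes a source with sinks $\{t_j:e_j\in S_i\}$, the center is used both as a source (with private sinks $T_v$ to consume $c-b$ units of its budget) and as a sink (so that every $s_i$ must spend one arc on $(s_i,v)$), and each $s_i$ gets private padding sinks $T_i$ to make its commodity count exactly $c+1$. The correctness argument then needs a small exchange step to rule out the center routing to padding sinks. Your approach instead uses a \emph{complement} encoding ($\T(s_e)=\T\setminus\{t_u,t_v\}$), which turns the constraint ``$X$ hits $\{u,v\}$'' into ``$X$ avoids at least one of $\{t_u,t_v\}$''; combined with the saturated budget this is precisely independence. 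What this buys you is that no exchange argument is needed and the center is never a source or sink, at the cost of restricting to the size-two special case (Vertex Cover) and inflating the commodity count per source to $\Theta(n)$. Both constructions are polynomial and both are valid; yours is arguably tighter in the analysis, while the paper's is closer in spirit to the covering formulation one might expect.
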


Henceforth, we focus on tree instances of \mmspp. Note that in this class of  instances, any $s_k,t_k$-dipath in $\mathtt{cl}(D)$ can only use arcs in $\mathtt{cl}(P_k)$, where~$P_k$ is the unique $s_k,t_k$-dipath in~$D=(N,A)$. 

First, we construct combinatorial lower bounds that are motivated by the witness set construction for undirected min-degree spanning trees~\cite{FR_one}. Specifically, we define a function $\mathtt{LB}$ such that for all $W \subseteq N$ and $\K' \subseteq \K$, $\mathtt{LB}(W, \K') \leq \Delta^*$. We show that in instances with a single source, this construction is the best possible. We develop an exact polynomial-time local search algorithm for single-source instances and certify its optimality by determining values of $W$ and $\K'$ such that $\mathtt{LB}(W, \K') = \Delta^+(H)$ for the graph $H$ returned. Note that the single-source setting reduces to the problem of finding a min-degree arborescence in a directed acyclic graph. This problem can be solved via matroid intersection in~$O(n^3 \log n)$ time \cite{schrijver2003combinatorial,ChakrabartyLS0W19} or by a combinatorial algorithm in~$O(nm\log n)$ time~\cite{Yao}. Our approach uses the structure of the transitive closure, and allows us to obtain a very simple algorithm that beats the runtime of previous results by a quadratic factor. Moreover, it motivates our other algorithms in more complex settings that cannot be modeled as min-degree arborescence problems. 

\begin{restatable}{theorem}{singlesource}
There is an $O(n \log^2n)$-time exact algorithm for tree instances of \mmspp~with a single source.
\end{restatable}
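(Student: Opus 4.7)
The plan is to exploit the single-source tree structure. Rooting $D$ at $s$, the subgraph of $D$ reachable from $s$ is an out-arborescence (since every sink is reachable from $s$), so WLOG $D$ itself is the out-arborescence with root $s$. A feasible $H$ is then precisely an arborescence in $\cl(D)$, rooted at $s$, spanning $\T$, and using only ancestor-to-descendant arcs of $D$; the path-containment condition that each $s,t_k$-path lies in $\cl(P_k)$ is automatic, because by induction on depth in $H$ every $H$-ancestor of a sink $t_k$ is a $D$-ancestor of $t_k$ and therefore lies on $P_k$. WLOG we may also restrict to arborescences rather than general subgraphs, since from any feasible $H$ we can select a single incoming arc per reachable node without increasing any out-degree.

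With this reformulation I would run an iterative local search. Initialise $H$ as the minimal subtree of $D$ rooted at $s$ that spans $\T$. While $\Delta := \Delta^+(H)$ can still be reduced, pick a $\Delta$-node $u$ and look for an \emph{improving swap}: a child $v$ of $u$ in $H$ together with a $V(H)$-ancestor $w \neq u$ of $v$ in $D$ satisfying $\deg_H(w) \leq \Delta - 2$; replace the arc $(u,v)$ by $(w,v)$. This drops $\deg_H(u)$ by one while keeping every other out-degree at most $\Delta - 1$. Exhausting improving swaps at the current $\Delta$ strictly reduces the population of $\Delta$-nodes and eventually $\Delta$ itself.

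When no $\Delta$-node admits a swap, I would certify optimality by exhibiting a witness $(W, \K')$ with $\LB(W, \K') = \Delta$. Starting from a stuck $\Delta$-node $u$, grow $W$ by repeatedly adjoining every $V(H)$-ancestor in $D$ of an already-added node whose out-degree is at least $\Delta - 1$; by the no-swap condition, every candidate re-parent target for a child of a $W$-node is already inside $W$, so the construction closes. Taking $\K'$ to be the commodities whose sinks sit strictly below $W$, a pigeonhole count matching the distinct forced exit arcs against $|W|$ then yields $\Delta^* \geq \Delta$ via the bound $\LB$ introduced earlier in the paper.

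The main obstacle will be the running time, since a direct implementation of the local search is $\Theta(n^2)$. To reach $O(n \log^2 n)$ I would maintain $H$ on top of a heavy-path decomposition of $D$, with each heavy chain carrying a balanced BST over its current $V(H)$-nodes keyed on depth and bucketed by out-degree. The critical query ``deepest $V(H)$-ancestor of $v$ in $D$ with out-degree at most $\Delta - 2$'' then costs $O(\log^2 n)$ (logarithmic in the number of chains crossed, logarithmic in the chain length), each swap updates $O(1)$ BST entries in $O(\log n)$ time, and a potential argument over the sorted degree sequence caps the total swap count at $O(n)$, yielding $O(n \log^2 n)$ overall.
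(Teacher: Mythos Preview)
Your local search and its termination condition match the paper's Algorithm~3, but there are two genuine gaps.

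\textbf{Witness set.} Growing $W$ only by adjoining $D$-ancestors of already-added nodes yields merely the root-to-$u$ path in $D$, and this need not certify $\Delta^*=\Delta$. Take $s\to a$, $a\to b$, $a\to c$, with five leaf sinks under each of $b$ and $c$. The local search terminates with $\deg_H(a)=4$ and $\deg_H(s)=\deg_H(b)=\deg_H(c)=3$; the unique stuck $\Delta$-node is $u=a$, so your $W=\{s,a\}$, but $|\delta^+_D(W)|=2$ and $\LB(W)=2\neq 4$. Your pigeonhole count is really bounding $|\delta^+_H(W)|$, which can far exceed $|\delta^+_D(W)|$ once arcs have been shifted up from descendants lying outside $W$; the commodities whose sinks sit below $W$ then fail the disjoint-cut condition of Lemma~\ref{lemma:LB_multi}. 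The paper's construction additionally absorbs into $W$ every node from which an arc was ever shifted to a current $W$-node; only with that closure does each $H$-exit arc correspond to a distinct $D$-exit arc, yielding $|\delta^+_D(W)|\ge|\delta^+_H(W)|$.

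\textbf{Running time.} The $O(n)$ swap bound is unsupported. With the nearest-ancestor rule on a path $v_0,\dots,v_m$ whose last node carries $m$ leaves, the degrees balance in harmonic stages and the swap count is $\Theta(m\log m)=\Theta(n\log n)$; you give no actual potential over the sorted degree sequence, and none is apparent that drops this to $O(n)$ without a much more specific swap-selection rule than ``pick an improving swap''. Crucially, the paper does \emph{not} reach $O(n\log^2 n)$ by accelerating the local search: it abandons local search for the timing argument and instead runs a separate bottom-up \emph{target} procedure that, given a candidate $T$, keeps $T$ children per node and passes the excess to the parent. This is implemented in $O(n\log n)$ via a union--find style list-merging argument, and a binary search over $T$ contributes the remaining $\log n$ factor.
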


When there is a single source and the undirected graph of $D$ is a tree, each node has at most one entering arc in $D$. We refer to such instances as \emph{out-tree} instances. We prove that $\max_{W \subseteq N, \K' \subseteq \K} \mathtt{LB}(W, \K') \geq \Delta^* - 1$ for out-tree instances, by showing that there is a polynomial-time algorithm that returns a solution with out-degree at most one greater than the best lower bound. We also show that there are out-tree instances where $\max_{W \subseteq N, \K' \subseteq \K} \mathtt{LB}(W, \K') = \Delta^* - 1$. 

Our polytime algorithm for multi-source out-tree instances is again combinatorial in nature, but the details are significantly more involved. In the algorithm for the single-source setting, in each iteration an arc~$vw$ is exchanged for an arc~$uw$, where $u$ is on the path between the root (the unique source) and $v$ in $D$. However, in the multi-source setting, the same action is not sufficient to ensure a high-quality solution is found. We instead define an algorithm for \mmspp~on out-trees which takes as input a target $T$, and returns a feasible solution~$H$ with~$\Delta^+(H) \leq T$ whenever $\Delta^* \leq T-1$. The additional input of the target as well as a careful selection of which arcs to select in each iteration prevent the need to backtrack. In the proof of the performance of this algorithm, we provide an explicit construction of the lower bound certificate with value at least $\Delta^* - 1$.

\begin{restatable}{theorem}{outtree}
There is a polynomial-time additive $1$-approximation algorithm for out-tree instances of \mmspp.
\end{restatable}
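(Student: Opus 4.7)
The plan is to construct a parameterized subroutine $\mathcal{A}(T)$ that, on an out-tree instance $\mathcal{I}=(D,\K)$ and a target $T\in\mathbb{Z}_{>0}$, either returns a feasible $H\subseteq \cl(D)$ with $\Delta^+(H)\le T$, or reports failure; and to prove the guarantee: whenever $\Delta^*\le T-1$, $\mathcal{A}(T)$ succeeds. Running $\mathcal{A}(T)$ for $T=1,2,\ldots$ and returning the first successful output produces $H$ with $\Delta^+(H)\le T^*$, where $T^*$ is the smallest successful target. Because $\mathcal{A}(T^*-1)$ failed, the contrapositive of the guarantee gives $\Delta^*\ge T^*-1$, hence $\Delta^+(H)\le \Delta^*+1$. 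Since $T^*\le \Delta^*+1\le n+1$ and each call to $\mathcal{A}$ will be polynomial, the overall algorithm is polynomial.

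To design $\mathcal{A}(T)$, I would initialize $H$ with the canonical feasible subgraph $\bigcup_{k\in\K}P_k$ and iteratively reduce out-degrees at violating nodes $v$ (those with $\deg^+_H(v)>T$). For each out-arc $vw\in H$, the candidate swap replaces $vw$ by $uw$ where $u$ is an ancestor of $v$ in $D$, so that $uw\in\cl(D)$ and, for every commodity $k$ whose current routing uses $vw$, the rerouted path through $uw$ still lies in $H\cap\cl(P_k)$. In contrast to the single-source case, $v$ may itself be the source of a commodity using $vw$ (constraining which ancestors $u$ are legitimate substitutes), and $vw$ may carry commodities originating at different ancestors of $v$. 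The careful selection rule, made possible by knowledge of $T$, will only swap onto an ancestor $u$ whose current out-degree is strictly less than $T$, so no swap ever creates a new violation; this is precisely what removes the need for backtracking.

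Correctness reduces to the claim that if $\mathcal{A}(T)$ terminates in failure, then $\mathtt{LB}(W,\K')\ge T$ for some explicit $(W,\K')$, forcing $\Delta^*\ge T$. I would construct $W$ and $\K'$ recursively starting from a final violating node $v$: put $v\in W$, and for each out-arc $vw\in H$ whose swap was blocked, trace the cause. If every ancestor $u$ of $v$ was saturated ($\deg^+_H(u)\ge T$), I add those ancestors to $W$ and recurse upward; if the swap to some admissible ancestor $u$ was blocked because a commodity $k$ could not be rerouted, I add the obstructing transit nodes of $P_k$ to $W$ and $k$ to $\K'$. The invariant to maintain is that the number of distinct out-arcs of $H$ forced to leave $W$ in order to serve $\K'$ exceeds $T$ times the witness-counting parameter used in $\mathtt{LB}$, directly yielding the bound.

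The main obstacle is the witness construction itself. In the single-source out-tree setting all relevant ancestors lie on a single root path, and the witness is essentially linear; for multi-source out-trees the ancestors of a blocked node live on a union of root-to-$v$ subpaths, and a single blocked swap may be obstructed by ancestors or commodities associated with several different sources. I expect the selection rule in $\mathcal{A}(T)$ will have to be tuned so that, at termination, the set of ancestors blamed by the recursion is laminar (closed under common ancestors) and the commodities in $\K'$ form a nested family along this laminar structure; then the $\mathtt{LB}$ inequality can be telescoped across the levels. Polynomial runtime I expect to follow from a standard monovariant on the lexicographic out-degree vector of $H$, which strictly decreases at each successful swap precisely because swaps onto already-saturated nodes are forbidden.
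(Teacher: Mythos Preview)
Your proposal differs substantially from the paper's approach and, as written, has a genuine gap.

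The paper does \emph{not} use a local-search algorithm that maintains a feasible $H$ and performs single arc swaps. In fact, the paper's Section~4.2 opens by explaining why that strategy breaks down in the multi-source setting: at a violating node $v$, the arcs $vw$ that can be shifted are constrained by which sources originate them, and it can happen that \emph{no single swap} reduces $\deg^+_H(v)$ even when $\Delta^*$ is small (their Figure~6 gives a concrete instance). Your plan to ``only swap onto an ancestor $u$ with $\deg^+_H(u)<T$'' handles the target side of the swap but says nothing about which outgoing arc of $v$ to move, and this is exactly where the obstruction lies. The paper's fix is a different algorithm altogether: a bottom-up recursive \emph{contraction} of the instance. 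At a deepest internal node $v$ (one whose children are all leaves), it keeps at $v$ the $T$ arcs whose \emph{blocking source}---the nearest source along the $r,v$-path that requires the arc---is closest to $v$, and pushes the remaining arcs up to $p(v)$, producing a smaller instance $\I_v^T$. The blocking-source rule is the missing idea in your sketch: it is what guarantees that if a witness $(W_v,\K_v')$ certifies failure in $\I_v^T$, one can extend it to a witness for $\I$ by adjoining $v$ to $W_v$ and the $T$ kept commodities to $\K_v'$, gaining exactly $T-1$ in the numerator for one extra node in $W$.

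Your witness construction is, by your own description, speculative (``I expect the selection rule \ldots\ will have to be tuned''), and the laminarity you hope for does not follow from anything you have written. Without the blocking-source ordering, there is no mechanism ensuring that the commodities you add to $\K'$ have pairwise disjoint cuts $\delta^+_{P_k}(W)$, which is required for $\mathtt{LB}$ to apply. As it stands, the proposal identifies the right \emph{shape} of the argument (target $T$, fail $\Rightarrow$ witness with value $\ge T$) but is missing both the algorithm that makes the witness extractable and the combinatorial lemma (Lemmas~\ref{lemma:com_paths}--\ref{lemma:com_paths2} in the paper) that lets the witness survive the uncontraction step.
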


The analysis of the out-tree algorithm is \emph{tight}, in that there are instances of \mmspp~where the algorithm does not produce an optimal solution. Moreover, the performance of the algorithm is bounded against a lower bound that has an inherent gap matching the proven performance. The algorithmic approach for the out-tree setting cannot easily be extended to more complex graph structures, such as stars, since an optimal solution is no longer guaranteed to be acyclic and our current algorithm heavily relies on this fact. Furthermore, the lower bound construction weakens significantly for star instances. 

We also give a framework for obtaining  approximation results for arbitrary tree instances. As a first step, we give an efficient 2-approximation when $D$ is a star.
\begin{restatable}{theorem}{star}
There is a polynomial-time 2-approximation for star instances of \mmspp, and for any $\epsilon > 0$ there is a star instance of \mmspp~where $\max_{W \subseteq N, \K' \subseteq \K} \mathtt{LB}(W, \K') = \frac23 \Delta^* + \epsilon$. 
\end{restatable}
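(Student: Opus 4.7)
The statement has two parts: a polynomial-time $2$-approximation algorithm for star instances, and a family of star instances showing the combinatorial $\mathtt{LB}$ can only certify $\tfrac{2}{3}\Delta^*$.

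\emph{Part 1 (the $2$-approximation).} My plan is LP rounding. Fix the center $c$ of the star. Every commodity $(s,t) \in \K$ with $s, t$ leaves admits exactly two $s,t$-dipaths in $\cl(P_k)$: the direct arc $(s,t)$ and the two-arc path using $(s,c)$ and $(c,t)$. I introduce three families of variables $x_{s,t}, z_s, y_t \in [0,1]$ for the arcs $(s,t), (s,c), (c,t)$ of $\cl(D)$, and form the LP
\begin{equation*}
\min T \quad \text{s.t.} \quad x_{s,t}+z_s \geq 1, \ \ x_{s,t}+y_t \geq 1 \ \forall (s,t)\in\K; \ \ z_s + \sum_t x_{s,t} \leq T \ \forall s; \ \ \sum_t y_t \leq T.
\end{equation*}
Commodities of the form $(s,c)$ or $(c,t)$ simply force $z_s=1$ or $y_t=1$ as additional constraints. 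Every integer feasible solution to \mmspp\ yields an LP-feasible point of matching max out-degree, so the LP optimum $T^*_{LP}$ satisfies $T^*_{LP} \leq \Delta^*$. I then round at threshold $1/2$: each fractional variable with LP value $\geq 1/2$ becomes $1$ and the rest $0$. Since each covering constraint sums to at least $1$, at least one of its two variables rounds to $1$, so for every commodity either $x_{s,t}=1$ (direct routing) or both $z_s=y_t=1$ (routing through $c$); hence $H$ is feasible. Each out-degree at most doubles, giving max out-degree at most $2T^*_{LP} \leq 2\Delta^*$.

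\emph{Part 2 (the $3/2$ lower-bound gap).} I would exhibit, for any $\epsilon>0$, a star with $\Delta^* = 3k$ but $\max_{W,\K'} \mathtt{LB}(W,\K') \leq 2k + O(1)$. The key observation is that $\mathtt{LB}(W,\K')$ averages, over $|W|$, a count of arcs with tails in $W$ that cannot be avoided by any feasible routing of $\K'$; whereas $\Delta^*$ captures the tradeoff between $|U|$ and $\max_s (|T(s) \setminus U| + 1)$, which is a simultaneous-over-nodes bound and can be forced high even when no small $W$ localizes the bottleneck. My plan is to take a star with one large source $s^*$ carrying $|T(s^*)|$ commodities to a shared set of sinks, together with a carefully tuned number of commodities of the form $(c, t')$ (or equivalent multi-source pressure through $c$), balanced so that any choice of $U$ is forced either to leave some source with out-degree $\geq 3k$ or to push $c$'s out-degree to $\geq 3k$, giving $\Delta^* = 3k$. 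To bound $\mathtt{LB}$, I would argue, by case analysis over the composition of $W$ (does $c \in W$?\ how many source leaves are in $W$?\ how many sinks?), that for each $W$ one can exhibit a specific feasible routing of $\K'$ whose number of tail-in-$W$ arcs is at most $2k|W|+O(|W|)$.

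\emph{Main obstacle.} The delicate point is calibrating the instance so that both conditions hold simultaneously. Naive constructions based on complete-bipartite commodity sets $K_{n,m}$ turn out to have $\mathtt{LB}(\{c,s\},\K) \approx \Delta^*$, so the construction must break that symmetry -- mixing a single heavy source with center-originating commodities of carefully chosen multiplicity seems necessary to push the ratio exactly to $2/3$ rather than $1/2$ or $1$. The technical core is the per-case exhibition of low tail-in-$W$ routings, which requires choosing commodities so that whenever $W$ contains $c$ the ``through-center'' routing is available and cheap on $W$, and whenever $W$ omits $c$ the ``direct'' routing pushes arcs onto sink nodes outside $W$.
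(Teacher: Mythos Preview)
Your LP-rounding argument is correct, but the paper's proof is a two-line combinatorial one: take $H=\{s_kt_k:k\in\K\}$ (route every commodity by its direct shortcut arc), so that $\Delta^+(H)=\max_s|\T(s)|$; then the witness $W=\{s^*,c\}$, $\K'=\{k:s_k=s^*\}$ (where $s^*$ attains the maximum and $c$ is the center) already gives $\mathtt{LB}(W,\K')\ge\lceil|\T(s^*)|/2\rceil$, hence $\Delta^+(H)\le 2\Delta^*$. Your approach is valid and more generic, but it buys nothing additional here and is heavier machinery than needed.

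\textbf{Part 2.} Here there is a genuine gap. Your proposed construction --- one heavy source $s^*$ plus center-originating commodities --- cannot yield the $\tfrac23$ ratio. If $s^*$ has $M$ sinks and $c$ has $p$ forced sinks disjoint from $\T(s^*)$, then routing $j\ge 1$ of $s^*$'s commodities through $c$ gives $\deg^+(s^*)=M-j+1$ and $\deg^+(c)=p+j$, so $\Delta^*\approx(M+p)/2$. But the witness $W=\{s^*,c\}$ with all $M+p$ commodities (their cuts $\delta^+_{P_k}(W)$ are pairwise disjoint since all sinks are distinct) already gives $\mathtt{LB}=\lceil(M+p)/2\rceil\approx\Delta^*$ --- no gap. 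Variants with $\T'\subseteq\T(s^*)$, or with two heavy sources with disjoint sink sets, fail for the same reason. There is also a conceptual slip: you propose to bound $\mathtt{LB}(W,\K')$ by ``exhibiting a specific feasible routing of $\K'$ whose number of tail-in-$W$ arcs is small,'' but $\mathtt{LB}(W,\K')$ is the fixed quantity $\lceil(|\K'|+|W|-|\S(\K')|)/|W|\rceil$, not a minimum over routings; bounding it amounts to bounding $|\K'|$ subject to the disjoint-cut constraint for the given $W$.

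The paper's construction is qualitatively different and hinges on having \emph{many overlapping} sources, not one. Take $2n$ sinks and one source $s_i$ for each of the $\binom{2n}{n}$ size-$n$ subsets $T_i$ of sinks, with $\K=\{(s_i,t):t\in T_i\}$. Then $\Delta^*=n$: if $\deg^+(c)\le n-1$, the at most $n-1$ sinks reached from $c$ miss some $T_i$ entirely, forcing $\deg^+(s_i)\ge n$. On the other hand, for any $W$ containing $c$ and $\ell$ sources, the disjoint-cut condition forces the chosen sources' sink sets to be pairwise disjoint, so $|\K'|\le 2n$ regardless of $\ell$; thus $\mathtt{LB}\le\lceil(2n+1)/(\ell+1)\rceil$, maximised at $\ell=2$ with value $\lceil(2n+1)/3\rceil$. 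The essential mechanism --- an exponential family of sources forming a covering design so that $c$ cannot simultaneously help all of them, while any witness can pack at most two disjoint sources --- is absent from your plan.
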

A generalization of the class of star instances is the class of junction trees. A tree instance of \mmspp~is a \emph{junction tree instance} if there is some node~$r$ in $D$ such that~$r$ is a node in $P_k$ for all $k \in \K$.
Our framework for obtaining an approximation result for general tree instances builds on the approximability of junction tree instances.
\begin{restatable}{theorem}{junction}
Given a polytime $\alpha$-approximation algorithm for junction tree instances, there is a polytime $\alpha \log n$-approximation algorithm for tree instances of \mmspp. 
\end{restatable}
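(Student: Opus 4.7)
The plan is a recursive centroid-decomposition reduction that, at each recursive call, peels off one junction tree instance to be handled by the assumed $\alpha$-approximation. Given a tree instance $(D,\K)$ with $n$ nodes, I pick a centroid $r$ of the underlying undirected tree---a node whose removal leaves components each of size at most $n/2$---and set $\K_r = \{k \in \K : r \in V(P_k)\}$. Then $(D,\K_r)$ is a junction tree instance at $r$, and every commodity in $\K \setminus \K_r$ has its unique $s_k,t_k$-path entirely inside a single subtree of $D-r$, giving a natural partition of the leftover commodities across those subtrees. The algorithm invokes the $\alpha$-approximation on $(D,\K_r)$ to obtain a subgraph $H_r$, recurses on each subtree with its assigned commodities, and returns the union of all the subgraphs produced. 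The base case $\K = \emptyset$ returns the empty subgraph.

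Feasibility is immediate: each commodity is assigned to exactly one junction tree call in the recursion, at which call the $\alpha$-approximation returns a subgraph containing an $s_k,t_k$-dipath inside $\cl(P_k)$. Since the arcs returned for a subtree $D'$ live in $\cl(D') \subseteq \cl(D)$, their union is a feasible subgraph of $\cl(D)$ for the original $(D,\K)$.

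For the degree bound, let $H^*$ attain the optimum $\Delta^*$. For any pair $(D',\K'')$ handed to the junction tree approximation during the recursion, every $P_k$ with $k \in \K''$ lies inside $D'$ by construction, so $H^* \cap \cl(D')$ is feasible for $(D',\K'')$ and has max out-degree at most $\Delta^*$. Hence the junction tree approximation returns a subgraph of max out-degree at most $\alpha\Delta^*$ at that call. Now fix a node $v$: at every depth of the recursion, the subtrees passed to child calls are pairwise node-disjoint, so $v$ participates in at most one junction tree call per depth. Since the centroid rule halves the node count, the recursion depth is at most $\log_2 n$, so $v$'s out-degree in the final union is at most $\alpha\log_2 n\cdot\Delta^*$. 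The recursion tree has $O(n)$ nodes in total, each triggering one polynomial-time junction tree call, so the whole algorithm is polynomial.

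The main point to get right is the invariant, maintained along the recursion, that every surviving commodity's path lies inside the current subtree---this is what lets us treat each recursive call as a bona fide tree instance of \mmspp~and restrict $H^*$ to certify the $\Delta^*$ upper bound for its junction tree subcall. The invariant is preserved automatically because in a tree $D$ the path $P_k$ is the unique $s_k,t_k$-path, and so either $P_k$ passes through the current centroid (in which case $k$ is dispatched to the junction tree call at this level) or $P_k$ is confined to one component of $D-r$ (in which case $k$ defers to recursion). Everything else---the logarithmic depth of centroid decomposition, the disjointness of sibling subinstances, and the additive per-level contribution of at most $\alpha\Delta^*$ to each node's out-degree---then falls out directly.
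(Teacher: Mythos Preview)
Your proposal is correct and follows essentially the same approach as the paper: both use centroid decomposition to peel off a junction tree instance at each level, recurse on the resulting components, and bound the final max out-degree by arguing that each node appears in at most $\log n$ junction tree calls, each contributing out-degree at most $\alpha\Delta^*$ since the global optimum restricts to a feasible solution for every subinstance. Your write-up is in fact slightly more explicit than the paper's in spelling out the invariant that every surviving commodity's path lies inside the current subtree and in naming $H^*\cap\cl(D')$ as the witnessing feasible subgraph for each subcall.
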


\subsection{Related work}

Degree-bounded network design problems are fundamental and well-studied combinatorial optimization problems. A prominent example is the \emph{minimum-degree spanning tree problem} which asks, given an undirected, unweighted graph, for a spanning tree with minimum maximum degree. Fürer and Raghavachari~\cite{FR_one} introduced the problem and presented a local-search based polynomial-time algorithm for computing a spanning tree with maximum degree which is at most $1$ larger than the  optimal maximum degree. Their combinatorial arguments rely on \emph{witness sets} chosen from a family of carefully-constructed lower bounds \cite{chvatal,win}. 

Various techniques have been employed in subsequent work on the weighted setting, e.g., \cite{Goemans06,jochen,jochen2,many_birds,ChaudhuriRRT09,ChaudhuriRRT06,RaviS06}, 
culminating in the result that one can compute a spanning tree of minimum cost that exceeds the degree bound by at most~$1$~\cite{Singh_Lau}. Since then, also generalizations have been studied such as the {\em degree-bounded Steiner tree} problem~\cite{KonemannR03,LauS13},  {\em survivable network design} with higher connectivity requirements~\cite{LauNSS09,LauS13} and the {\em group Steiner tree} problem~\cite{DehghaniEHL16,GuoKLLVX22,KortsarzN20}.

Directed (degree-bounded) network design problems are typically substantially harder than their undirected counterparts. Among the few nontrivial approximation results are quasipolynomial-time bicriteria approximations (with respect to cost and maximum out- or in-degree)~\cite{GuoKLLVX22} for the degree-bounded directed Steiner tree problem and approximation results for problems with intersecting or crossing supermodular connectivity requirements \cite{BKN09,Nutov}. 

A special case in directed degree-bounded network design is the \emph{min-degree arborescence problem} where, given a directed graph and root $r$, the goal is to find a spanning tree rooted at $r$ with minimum max out-degree. This problem is NP-hard \cite{BKN09, FR_one, LauNSS09} in general, and polytime solvable in directed acyclic graphs~\cite{Yao}.

\section{Hardness}\label{sec:hardness}

In this section we prove that \mmspp~is NP-hard, even in the setting where the underlying undirected graph of $D$ forms a star. To prove this result, we will exhibit a reduction from the NP-hard problem of \hs~\cite{Garey}, defined as follows. 

\hs: Let $\Sigma = \{e_1, e_2, \ldots, e_m\}$ be a set of $m$ elements, let $\S = \{S_1, S_2, \ldots, S_n\}$ be a set of $n$ non-empty subsets of $\Sigma$, and let $b \in \mathbb{N}$ be a budget. The hitting set problem asks if there is a subset $R \subset \Sigma$ of cardinality at most $b$ such that $R \cap S_i \neq \emptyset$ for all $i \in [n]$. 

\hardness*

\begin{proof}
Let $\mathcal{H} = (\Sigma, \S, b)$ be an instance of \hs, where $b \in \mathbb{N}$, and $\Sigma = \{e_1, e_2, \ldots, e_m\}$, and~$\S = \{S_1, S_2, \ldots, S_n\}$ is a set of $n$ non-empty subsets of $\Sigma$. The hitting set problem asks if there is a subset $R \subset \Sigma$ of cardinality at most~$b$ such that $R \cap S_i \neq \emptyset$ for all $i \in [n]$. We will construct a corresponding instance~$\I = (D, \K)$ of \mmspp~such that $\mathcal{H}$ is a $\mathtt{YES}$ instance if and only if~$\Delta^* \leq c$, for some fixed integer $c$, where $\I$ and $c$ are polynomial in the size of~$\mathcal{H}$. 

First, we construct a digraph $D' = (N', A')$ with node and arc sets 
\begin{align*}
N' & = \{s_i: i \in [n]\} \cup \{v\} \cup \{t_j: j \in [m]\}, \\
A' & = \{s_i v: i \in [n]\} \cup \{v t_j: j \in [m]\}, 
\end{align*}
as shown in Figure \ref{subfig:hardnessA}. For each $i \in [n]$ and $e_j \in S_i$, we add a commodity with source $s_i$ and sink $t_j$ to form a set $\K'$. That is, $\K' = \{(s_i, t_j): i \in [n], e_j \in S_i\}.$

\begin{figure}[h!]
\vspace{-.2cm}
	\hspace{1.5cm}
  \begin{subfigure}{0.3\textwidth}
    \centering
    \includegraphics[width=\textwidth]{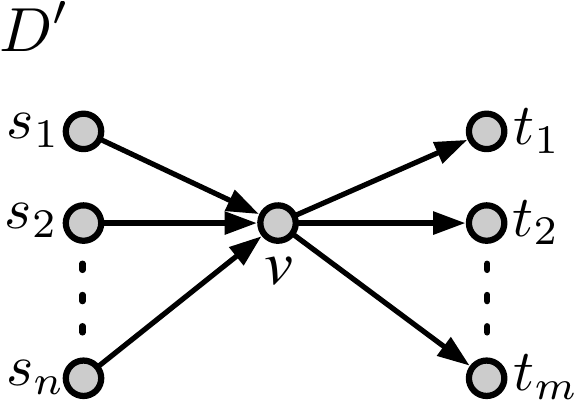}
    \caption{}
    \label{subfig:hardnessA}
  \end{subfigure}
  \hfill
  \begin{subfigure}{0.3\textwidth}
    \centering
    \includegraphics[width=\textwidth]{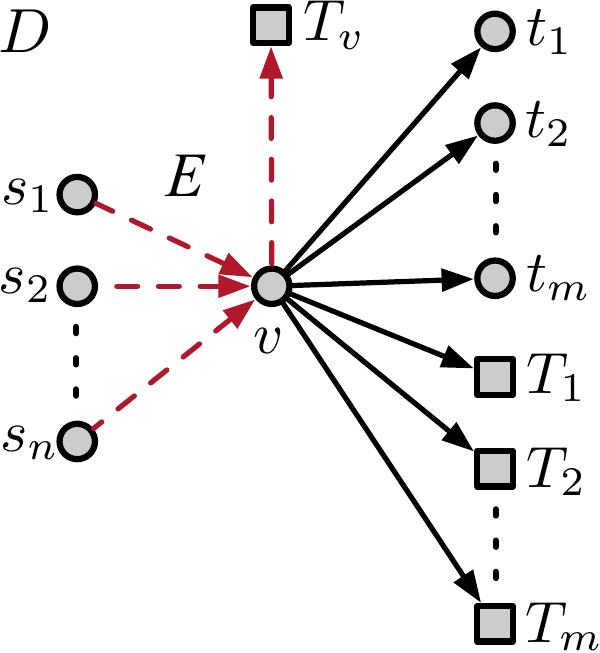}
	\vspace{-0.7cm}
    \caption{}
    \label{subfig:hardnessB}
  \end{subfigure}
\hspace{1.5cm}
\vspace{-.2cm}
  \caption{Digraphs $D'$ and $D$.}
  \label{fig:hardness}
\end{figure}
\FloatBarrier

We build on the instance $\I' = (D', \K')$ to form instance $\I = (D, \K)$. Let~$c = \max\{b, \max_i|S_i|\}$. For each $i \in [n]$, we add a set of nodes, denoted $T_i$, of cardinality $c - |S_i|$ to $N'$. Additionally, we add a set of nodes $T_v$ with cardinality~$c - b$. The arc set $A$ is formed by adding an arc from $v$ to all the nodes in $N\setminus N'$. That is,
\begin{align*}
N & = N' \cup T_v \cup \{T_i: i \in [n]\}\\
A & = A' \cup \{v t: t \in N' \setminus N\}.
\end{align*}
The digraph $D$ is given in Figure \ref{subfig:hardnessB}, where the square nodes indicate a set of nodes rather than a single node. Additionally, arcs entering a square node denote a set of arcs with same shared tail and differing heads -- one for each node in the set represented by the square node.

We add commodities to $\K'$ to form the set $\K$. For each $i \in [n]$, we add the commodity $(s_i, t)$ for all $t \in T_i$, and the commodity $(s_i, v)$. Additionally, for each node $t \in T_v$ we define a commodity $(v, t)$. Specifically, 
\begin{align*}
\K & = \K' \cup \{(s_i, v): i \in [n]\} \cup \{(s_i, t): t \in T_i, i \in [n]\} \cup \{(v,t): t \in T_v\}.
\end{align*}  

Observe that $\I$ and $c$ are polynomial in the input of $\mathcal{H}$. Furthermore, there are $c+1$ distinct commodities with source $s_i$ for each $i \in [n]$, and any feasible subgraph $H$ must contain the arc set $E = \{s_i v: i \in [n]\} \cup \{v t: t \in T_v\}$ (the red dashed arcs in Figure \ref{subfig:hardnessB}). We now claim that the hitting set instance $\mathcal{H}$ is a $\mathtt{YES}$ instance if and only if $\Delta^* \leq c$. 

First, suppose the hitting set instance $\mathcal{H}$ is a $\mathtt{YES}$ instance. Let $R$ be a set of at most $b$ elements such that $R \cap S_i \neq \emptyset$ for all $i \in [n]$. Up to reordering, we may assume that $R = \{e_1, e_2, \ldots, e_p\}$ where $p \leq b$. We form a feasible solution~$H$ as follows. First, let $H_1$ be the subgraph containing the edge set $E$ along with the set $\{v t_i: i \in [p]\}$, as shown in Figure \ref{fig:hard_sol}. 
We define the digraph $H_2$ as the subgraph of $\cl(D)$ with arc set $\{s_i t: t \in T_i, i \in [n]\} \cup \{s_i t_j: e_j \in S_i \setminus R, i \in [n]\}$. See Figure \ref{fig:hard_sol} for an example of $H_2$. 

\begin{figure}[htbp]
\vspace{-.2cm}
	\hspace{1.5cm}
  \begin{subfigure}{0.3\textwidth}
    \centering
    \includegraphics[width=\textwidth]{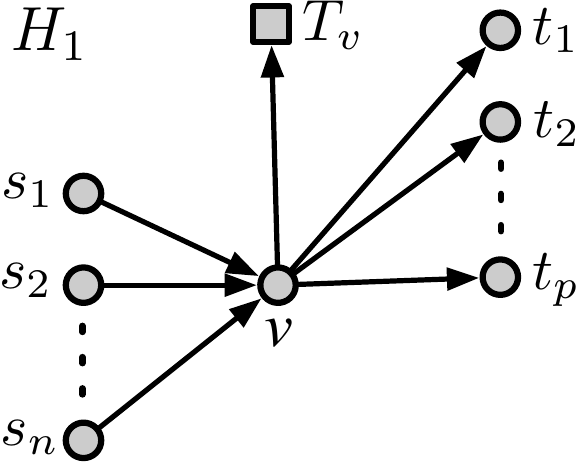}
    \label{subfig:hard_sol1}
  \end{subfigure}
  \hfill
  \begin{subfigure}{0.3\textwidth}
    \centering
    \includegraphics[width=\textwidth]{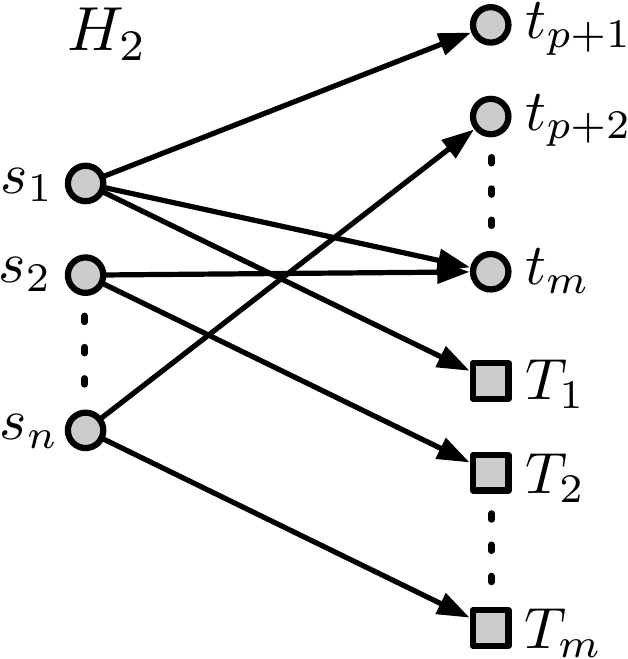}
	\vspace{-0.7cm}
    \label{subfig:hard_sol2}
  \end{subfigure}
\hspace{1.5cm}
\vspace{-.2cm}
  \caption{Digraphs $H_1$ and $H_2$.}
  \label{fig:hard_sol}
\end{figure}
\FloatBarrier

Let $H = H_1 \cup H_2$. We claim that $H$ is feasible and $\Delta^+(H) \leq c$. Let $k \in \K$. If $s_k  = v$, then $t_k \in T_v$ and $H_1$ contains an $s_k, t_k$-dipath. Otherwise, $s_k = s_i$ for some $i \in [n]$. If $t_k = t_j$ for some $e_j \in S_i \cap R$ or $t_k = v$, then similarly, $H_1$ contains an $s_k, t_k$-dipath. The final case is if $t_k = t_j$ for some $e_j \in S_i \setminus R$ or $t_k \in T_i$, in which case $H_2$ contains an $s_k, t_k$-dipath. Thus, $H$ is feasible. 

It remains to argue that $\Delta^+(H) \leq c$. For each $i \in [n]$, 
\begin{align*}
\deg^+_H(s_i) = 1 + \deg^+_{H_2}(s_i) = 1 + |T_i| + |S_i \setminus R| = 1+ c - |S_i| + |S_i \setminus R| \leq c, 
\end{align*}
where the inequality follows since $R \cap S_i \neq \emptyset$ for all $i \in [n]$. Additionally, we have \[\deg^+_H(v) = p + |T_v| \leq b + (c-b) = c.\] 
We now prove the reverse direction. Suppose $H \subseteq \cl(D)$ is a feasible subgraph for instance $\I$, with $\Delta^+(H) \leq c$. We may assume that $H$ is minimal in the sense that removing any arc results in an infeasible solution. This minimality ensures that for each $i \in [n]$, the heads of arcs with tail $s_i$ in $H$ must be in the set~${v \cup T_i \cup \{t_j: e_j \in S_i\}}$, which has cardinality $c+1$. 

Let $B$ be the set of arcs in $H$ from $v$ to some node in $\{T_i: i \in [n]\}$. We will argue that we may assume this set is empty. Suppose $B \neq \emptyset$. That is, $H$ contains an arc $v t$ for some $t \in T_i$ for some $i \in [n]$. By minimality,~${H - v t}$ is infeasible, and so $H$ does not contain the arc $s_i t$. If $\deg^+_H(s_i) < c$, then~${H - v t + s_i t}$ is a minimal feasible solution with one fewer node in $B$. Otherwise, $\deg^+_H(s_i) = c$. This implies that $H$ contains an arc $s t_j$ for some $e_j \in S_i$, since the set $S_i$ is non-empty. Then $H' = H - v t - s_i t_j + s_i t + v t_j$ is a minimal feasible solution with one fewer arc in $B$. By repeating this argument, we see that we may assume~$B = \emptyset$.

Since $H$ must contain the arc set $E$, and $\deg_E^+(v) = |T_v| = c - b$, $H$ contains at most $b$ arcs from the set $\{v t_j: j \in [m]\}$. Furthermore, for each $i \in [n]$ there are $c+1$ commodities (each with a distinct sink) which have $s_i$ as the source. Since $\deg_H(s_i) \leq c$, at least one of these commodities must be routed by a path of length two. That is, for each $s_i$ there is some commodity with source $s_i$ and sink $t$ where $v t$ is an arc in $H$. Since $B = \emptyset$, $H$ must contain an arc $v t_j$ for some $e_j \in S_i$. Therefore, $R = \{e_j: v t_j \in H, j \in [m]\}$ forms a hitting set of $\S$ of cardinality at most $b$. 
\end{proof}


\section{Combinatorial lower bounds}\label{sec:lower_bounds}

In this section we present a family of lower bounds, defined by a set of nodes $W$ and set of commodities $\K'$, which we refer to as a \emph{witness set}. 

Let $D = (N, A)$ be a digraph, and consider a subset $U \subseteq N$. We define $\delta^+_D(U)$ as the set of arcs in $D$ leaving $U$. That is, $\delta^+_D(U) := \{vw \in E: v \in U, w \notin U\}$. 

Consider a subset $W \subseteq N$ such that $s_k \in W$ and $t_k \notin W$ for some $k \in \mathcal{K}$. It follows that any feasible subgraph $H$ must contain some arc in $\delta^+_{\mathtt{cl}(D)}(W)$. Specifically, due to the given-path structure, we know that $H$ must contain some arc in $\delta^+_{\mathtt{cl}(P_k)}(W)$. In Figure \ref{fig:fan_ex3}, consider the node set $W = \{s, v\}$ along with the commodity with source $s$ and sink $t_1$. On the left is the base graph $D$, and on the right is the transitive closure. Since $s \in W$ and $t_1 \notin W$, it follows that any feasible solution $H$ must have a non-empty intersection with the set $\{s t_1, v t_1\}$. 

\begin{figure}[ht]
	\begin{center}
		\scalebox{0.36}{		
  \includegraphics{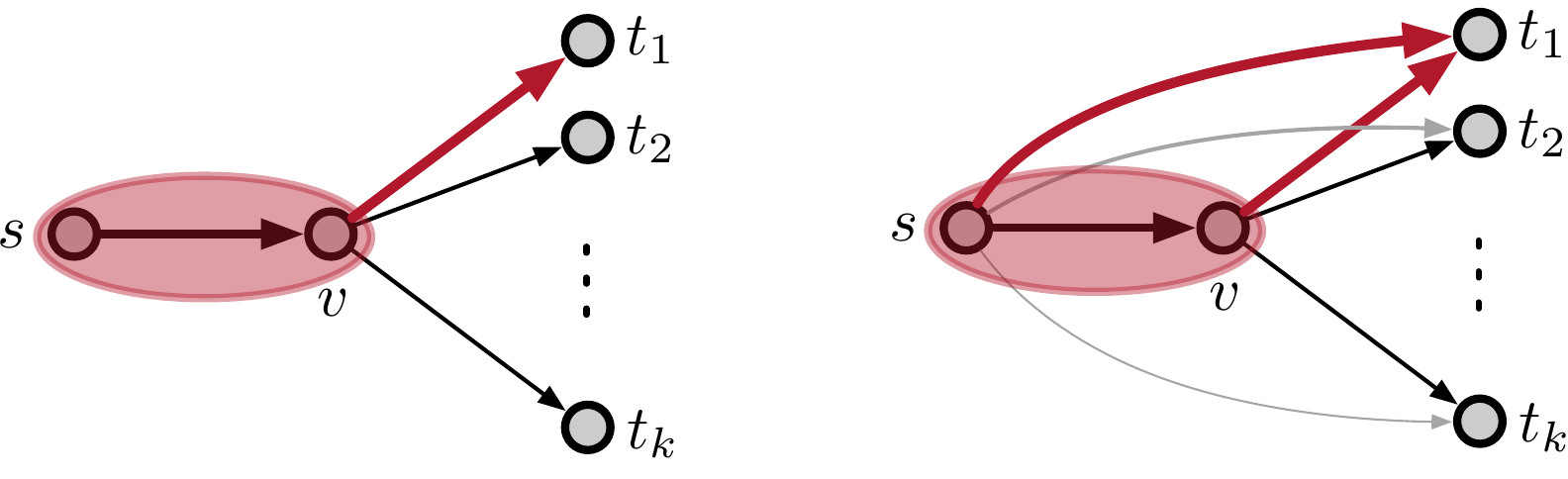}}
		\caption{A tree instance with $k$ commodities $\{(s, t_i): i \in [k]\}$.}
		\label{fig:fan_ex3}
\end{center}
\end{figure}
\FloatBarrier

We combine disjoint cuts for a fixed node set $W$ to obtain lower bounds on the value of $\Delta^*$. To simplify the set of cuts considered, we prove the following lemma to work with cuts in $D$ rather than in its closure. 

\begin{restatable}{lemma}{cuts}\label{lemma:cuts}
Let $\I = (D, \K)$ be a tree instance and $W\subseteq N$. For any $k, j \in \K$, 
\[\delta^+_{P_k}(W) \cap \delta^+_{P_j}(W) = \emptyset \mbox{ if and only if } \delta^+_{\mathtt{cl}(P_k)}(W) \cap \delta^+_{\mathtt{cl}(P_j)}(W) = \emptyset.\] 
\end{restatable}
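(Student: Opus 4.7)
The equivalence is a biconditional, so I would prove each direction separately; one direction is essentially trivial, and the other uses the tree structure in an essential way.

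For the easy direction ($\Leftarrow$), I would simply observe that since every arc of $P_k$ is also an arc of $\cl(P_k)$, we have the inclusion $\delta^+_{P_k}(W) \subseteq \delta^+_{\cl(P_k)}(W)$ (and similarly for $j$), so disjointness of the larger closure cuts immediately implies disjointness of the original cuts.

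For the harder forward direction, I would argue the contrapositive: assume $\delta^+_{\cl(P_k)}(W) \cap \delta^+_{\cl(P_j)}(W) \neq \emptyset$ and produce a common arc in the two path cuts. Pick an arc $uv$ in the intersection, so $u \in W$, $v \notin W$, and $uv$ is a shortcut arc of both $\cl(P_k)$ and $\cl(P_j)$. Then both $P_k$ and $P_j$ contain $u$ and $v$ as nodes, with $u$ preceding $v$. The main point is that, since the underlying undirected graph of $D$ is a tree, there is a unique undirected $u,v$-path in $D$; any directed $u,v$-dipath in $D$ must traverse this unique path in the same orientation. Consequently, $P_k$ and $P_j$ share exactly the same $u,v$-subpath, call it $u = x_0, x_1, \ldots, x_\ell = v$.

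Finally, I would walk along this common subpath: since $x_0 = u \in W$ and $x_\ell = v \notin W$, there is some index $i$ with $x_i \in W$ and $x_{i+1} \notin W$. The arc $x_i x_{i+1}$ lies on the shared subpath, hence belongs to both $P_k$ and $P_j$, and it crosses the cut $W$, so $x_i x_{i+1} \in \delta^+_{P_k}(W) \cap \delta^+_{P_j}(W)$, contradicting disjointness. The only subtle step is the uniqueness-of-subpath argument, and I expect this to be the only place the tree assumption is used; everywhere else the argument is purely set-theoretic.
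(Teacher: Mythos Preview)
Your proposal is correct and follows essentially the same approach as the paper's proof: both directions are handled identically, with the easy direction via the inclusion $\delta^+_{P_k}(W)\subseteq\delta^+_{\cl(P_k)}(W)$, and the forward direction via the contrapositive by picking a common closure arc, using the tree structure of $D$ to conclude the $u,v$-subpath is shared by $P_k$ and $P_j$, and then locating a crossing arc along that subpath.
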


\begin{proof} 
Suppose $\delta^+_{P_k}(W) \cap \delta^+_{P_j}(W) = \emptyset$ and for a contradiction, suppose there is an arc $vw \in \delta^+_{\cl(P_k)}(W) \cap \delta^+_{\cl(P_j)}(W)$. Then $v \in W$, $w \notin W$, and both~$P_k$ and~$P_j$ contain a $v,w$-dipath. Since the underlying undirected graph of $D$ is a tree, it follows that~$P_k$ and~$P_j$ contain the \emph{unique} $v,w$-dipath in $D$. Furthermore, since~$v \in W$ and~$w \notin W$, there is some arc $x y$ on the $v,w$-dipath where $x \in W$ and~$y \notin W$. As a result, $x y \in \delta^+_{P_k}(W) \cap \delta^+_{P_j}(W)$, a contradiction. 

The reverse direction immediately follows from the observation that~${\delta^+_{F} \subseteq \delta^+_{\cl(F)}}$ for any directed graph $F$. Thus, $\delta^+_{P_k}(W) \cap \delta^+_{P_j}(W) \subseteq \delta^+_{\mathtt{cl}(P_k)}(W) \cap \delta^+_{\mathtt{cl}(P_j)}(W)$. 
\end{proof}

As is standard in proving bounds on the min-max degree \cite{FR_one,win}, we observe the following:  if $\ell$ distinct arcs must leave a set $W$ of nodes in any feasible solution, then $\Delta^* \geq \lceil \ell / |W|\rceil$. The following lower bound construction shows that we can argue that such a disjoint arc set can be derived by looking at disjoint cuts of the form $\delta^+_{P_k}(W)$ for some $k \in \K$. Further, we show that this lower bound can be strengthened since we must also have connectivity \emph{within}~$W$ in order to allocate the arcs departing $W$ to different nodes in $W$. 

\begin{restatable}{lemma}{LBmulti}\label{lemma:LB_multi}
Let $\I = (D, \K)$ be a tree instance of \mmspp. Let $W \subseteq N$ such that $D[W]$ is connected and suppose $\emptyset \neq \mathcal{K}' \subseteq \mathcal{K}$ such that $s_k \in W$ and $t_k \notin W$ for all $k \in \mathcal{K}'$, and $\delta^+_{P_k}(W) \cap \delta^+_{P_j}(W) = \emptyset$ for all distinct $k, j \in \mathcal{K}'$. Then 
\[\Delta^* \geq \left\lceil \frac{|\mathcal{K}'| + |W| - |\S(\K')|}{|W|} \right\rceil,\]
where $\S(\K')$ denotes the set of sources for commodities in $\K'$.
\end{restatable}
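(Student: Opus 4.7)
The plan is to fix any feasible subgraph $H\subseteq \cl(D)$ attaining $\Delta^+(H)=\Delta^*$ and extract structural information about how $H$ must leave $W$. For each $k\in\K'$, feasibility provides an $s_k,t_k$-dipath $Q_k$ in $H\cap \cl(P_k)$. Because the undirected version of $D$ is a tree and $D[W]$ is connected, $W$ is a subtree and hence $V(P_k)\cap W$ is an initial subpath of $P_k$ (it contains $s_k$ and excludes $t_k$). Consequently, along $Q_k$ there is a unique arc $u_kv_k$ with $u_k\in W$ and $v_k\notin W$; this arc lies in $\delta^+_{\cl(P_k)}(W)$. By Lemma~\ref{lemma:cuts} applied to the hypothesis of the lemma, the sets $\delta^+_{\cl(P_k)}(W)$ are pairwise disjoint over $k\in\K'$, so the arcs $\{u_kv_k : k\in\K'\}$ are $|\K'|$ distinct boundary arcs of $H$ with tails in $W$.

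Next, I would define $W'\subseteq W$ to be the union, over $k\in\K'$, of the vertices appearing on the prefix of $Q_k$ from $s_k$ to $u_k$; by construction $\S(\K')\subseteq W' \subseteq W$. I would then argue that $H[W']$ contains at least $|W'|-|\S(\K')|$ arcs. Indeed, any $w\in W'\setminus \S(\K')$ lies on some prefix of $Q_k$ but is not its starting vertex, so $w$ has an in-neighbor on that prefix, yielding an arc of $H[W']$ with head~$w$; the resulting arcs are distinct because their heads are.

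Combining the two arc sets (which are disjoint, since boundary arcs have heads outside $W'$ while the prefix arcs have heads inside $W'$), the out-degree sum over $W'$ satisfies
\[
|W'|\,\Delta^* \;\ge\; \sum_{v\in W'}\deg^+_H(v) \;\ge\; |\K'| + \big(|W'|-|\S(\K')|\big).
\]
Rearranging gives $\Delta^* \ge 1 + (|\K'|-|\S(\K')|)/|W'|$. Since every source of $\K'$ is the source of at least one commodity in $\K'$, we have $|\K'|\ge|\S(\K')|$, so $x\mapsto 1+(|\K'|-|\S(\K')|)/x$ is non-increasing. Combined with $|W'|\le|W|$, this yields
\[
\Delta^* \;\ge\; 1 + \frac{|\K'|-|\S(\K')|}{|W|} \;=\; \frac{|\K'|+|W|-|\S(\K')|}{|W|},
\]
and taking ceilings — using that $\Delta^*$ is an integer — produces the claimed bound.

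The main obstacle is isolating the appropriate subset $W'$ of $W$ over which to double-count out-degrees. A direct sum of $\deg^+_H(v)$ over all of $W$ can be strictly smaller than $|\K'|+|W|-|\S(\K')|$ (e.g.\ when $H$ simply ignores some node of $W$ by using a long shortcut arc from $\cl(D)$), so the naive averaging fails. The trick is to restrict to the ``active'' subset $W'$ induced by the prefixes of the $Q_k$, where the reachability argument forces enough internal arcs, and then transfer the bound back to $W$ through monotonicity of the averaged quantity — a monotonicity that relies exactly on $|\K'|\ge|\S(\K')|$.
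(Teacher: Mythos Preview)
Your proof is correct and follows essentially the same approach as the paper: count $|\K'|$ pairwise-disjoint boundary arcs via Lemma~\ref{lemma:cuts}, add connectivity arcs inside $W$, average over the ``active'' subset of $W$, and then use monotonicity of $1+(|\K'|-|\S(\K')|)/x$ to pass from that subset to $|W|$. The only cosmetic difference is that the paper obtains the internal connectivity arcs by first passing to a minimal feasible subgraph (so every tail of a boundary arc is reachable from a source inside $W$), whereas you achieve the same effect more directly by defining $W'$ as the union of the $Q_k$-prefixes; both routes lead to the identical inequality and the same monotonicity step.
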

\begin{proof}
Let $F$ be a feasible subgraph of $\cl(D)$, and let $W, \K'$ be given as in the statement. Let $H$ be the subgraph of $F$ where arcs are removed from $F$ if the remaining subgraph remains feasible for the commodity set $\K'$. That is, $H$ is a minimal subgraph of $F$ that contains an $s_k,t_k$-dipath for each commodity~$k \in \K'$. 

Since $s_k \in W$ and $t_k \notin W$, for each $k \in \K'$ it follows~${\delta^+_{\cl(P_k)}(W) \cap H \neq \emptyset}$. Since $\delta^+_{P_k}(W) \cap \delta^+_{P_j}(W) = \emptyset$ for all distinct $k, j \in \mathcal{K}'$, Lemma \ref{lemma:cuts} implies that for all distinct $k, j \in \mathcal{K}'$,~$\delta^+_{\mathtt{cl}(P_{k})}(W) \cap \delta^+_{\mathtt{cl}(P_{j})}(W) = \emptyset$. Thus, any feasible subgraph must have at least $|\mathcal{K}'|$ arcs in $\delta^+_{\mathtt{cl}(D)}(W)$, and so 
\[\Delta^* \geq \left\lceil \frac{|\mathcal{K}'|}{|W|} \right\rceil.\]
If any of the $|\mathcal{K}'|$ arcs in $\delta^+_{\mathtt{cl}(D)}(W) \cap H$ depart a node $v \in W \setminus \S(\K')$, then~$H$ contains an $s, v$-dipath for some source $s \in \S(\K')$, as otherwise $H$ was not minimal. Since $D[W]$ is connected and $\I$ is a tree instance, this $s,v$-dipath only uses arcs between nodes in $W$. Thus, if there are $\ell \geq 1$ nodes in $W$ with departing arcs in $\delta^+_{\mathtt{cl}(D)}(W) \cap H$, $H$ contains at least $\ell - |\S(\K')|$ arcs with both endpoints in $W$. Therefore, 
\[ \Delta^* \geq \min_{\substack{|\S(\K')| \leq \ell \leq |W|}} \ceil*{ \frac{|\mathcal{K}'| + \ell - |\S(\K')| }{\ell}} = \ceil*{ \frac{|\mathcal{K}'| + |W| - |\S(\K')| }{|W|}}.\]
For completeness, the above equality holds by the following argument. First we rewrite the inequality by letting $a = \ell - |\S(\K')|$.
\[ \min_{\substack{|\S(\K')| \leq \ell \leq |W|}} \ceil*{ \frac{|\mathcal{K}'| + \ell - |\S(\K')| }{\ell}} 
= \min_{\substack{0 \leq a \leq |W| - |\S(\K')|}} \ceil*{ \frac{|\mathcal{K}'| + a}{a + |\S(\K')|}}.\] 
Let $f(x) = \frac{|\mathcal{K}'| + x}{x + |\S(\K')|}$. Observe that for any $x \geq 0$, $f(x+1) \leq f(x)$, since 
\begin{align*}
~& |\S(\K')| \leq |\K'| \\
\Rightarrow ~& (|\K'| + x + 1) (x + |\S(\K')|) \leq (|\K'| + x) (x + 1 + |\S(\K')|) \\
\Rightarrow ~& \frac{|\K'| + x + 1}{x + 1 + |\S(\K')|} \leq \frac{|\K'| + x}{x + |\S(\K')|}.
\end{align*}
Thus, the expression is minimized when $\ell = |W|$.
\end{proof}
We define the functions $\mathtt{LB}$ and $\mathtt{LB}^w$, where for each $W \subseteq N$ and $\K' \subseteq \K$,  
\begin{align*}
\mathtt{LB}_\I(W, \K') & := \begin{cases}
\left\lceil \frac{|\K'| + |W| - |\S(\K')|}{|W|} \right\rceil & \mbox{if $W,\K'$ satisfy conditions of Lemma \ref{lemma:LB_multi} for $\I$}  \\
0 & \mbox{otherwise}
\end{cases} \\
\mathtt{LB}_\I^w(W, \K') & := \begin{cases}
\left\lceil \frac{|\K'|}{|W|} \right\rceil & \mbox{if $W,\K'$ satisfy conditions of Lemma \ref{lemma:LB_multi} for $\I$}  \\
0 & \mbox{otherwise}
\end{cases}
\end{align*}
We will drop the subscript $\I$ when the instance is clear from context. Note that $\mathtt{LB}^w$ is a weaker bound, in that it does not include any arcs due to connectivity within $W$. Since it holds~${|W| - |\S(\K')| \geq 0}$ for all inputs $W$ and $\K'$, we have the following corollary. 

\begin{corollary}
Suppose $\I = (D, \K)$ is a tree instance. For all $W \subseteq N$ and $\K' \subseteq \K$, \[\Delta^* \geq \mathtt{LB}^w(W, \K').\]
\end{corollary}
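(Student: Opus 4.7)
The plan is to derive the corollary as an immediate consequence of Lemma \ref{lemma:LB_multi}, by case-splitting on whether the pair $(W, \K')$ satisfies the hypotheses of that lemma and then observing that the extra additive $|W| - |\S(\K')|$ term in $\mathtt{LB}$ is always non-negative, so dropping it yields the weaker bound $\mathtt{LB}^w$.

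First, if $(W, \K')$ does not satisfy the hypotheses of Lemma \ref{lemma:LB_multi} (for example, $D[W]$ is not connected, or some commodity in $\K'$ has its source outside $W$ or its sink inside $W$, or two cuts $\delta^+_{P_k}(W)$ and $\delta^+_{P_j}(W)$ intersect), then by definition $\mathtt{LB}^w(W, \K') = 0$. Using the standing assumption that the commodity set is non-empty, which forces $\Delta^* \geq 1$, the inequality $\Delta^* \geq \mathtt{LB}^w(W, \K')$ holds trivially.

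Second, if $(W, \K')$ does satisfy the hypotheses of Lemma \ref{lemma:LB_multi}, then I would apply the lemma directly to obtain
\[
\Delta^* \;\geq\; \left\lceil \frac{|\K'| + |W| - |\S(\K')|}{|W|} \right\rceil.
\]
The hypothesis $s_k \in W$ for every $k \in \K'$ implies $\S(\K') \subseteq W$, so $|W| - |\S(\K')| \geq 0$. Monotonicity of the ceiling then gives
\[
\left\lceil \frac{|\K'| + |W| - |\S(\K')|}{|W|} \right\rceil \;\geq\; \left\lceil \frac{|\K'|}{|W|} \right\rceil \;=\; \mathtt{LB}^w(W, \K'),
\]
which completes the proof. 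There is no real obstacle; the only thing to verify is that $\S(\K') \subseteq W$ under the lemma's hypotheses, which is exactly the content of the condition ``$s_k \in W$ for all $k \in \K'$''.
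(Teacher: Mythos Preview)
Your proposal is correct and follows essentially the same approach as the paper: the paper simply notes that $|W| - |\S(\K')| \geq 0$ for all inputs (since $\S(\K') \subseteq W$ under the conditions of Lemma~\ref{lemma:LB_multi}) and states the corollary as an immediate consequence. Your case split on whether the hypotheses of Lemma~\ref{lemma:LB_multi} hold, together with the standing assumption $\Delta^* \geq 1$ for the trivial case, just makes explicit what the paper leaves implicit.
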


\begin{wrapfigure}{r}{.38 \textwidth}
\centering
		\scalebox{0.4}{
			\includegraphics{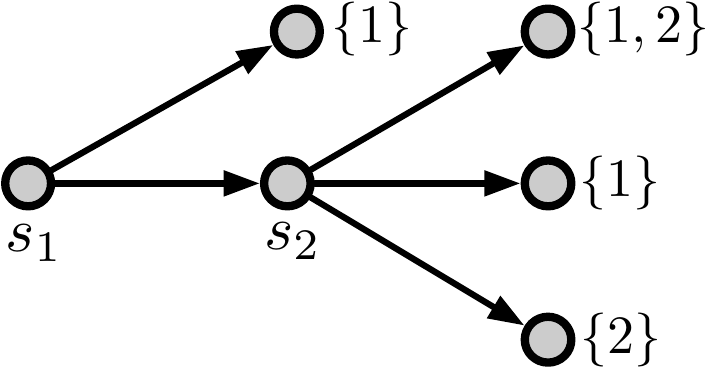}}
		\label{fig:gap}
\end{wrapfigure}

In the single-source setting, we will prove that for each instance $\I$, $\Delta^*$ is equal to $\max_{W \subseteq N, \K' \subseteq \K} \mathtt{LB}(W, \K')$.  A natural next step is to ask if the lower bound construction is also exact for multi-source out-tree instances. However, this is not the case, since there are instances of \mmspp~on out-trees where $\max_{W \subseteq N, \K' \subseteq \K} \mathtt{LB}(W, \K') = \Delta^* - 1$. Consider the out-tree instance on the right, where there are two sources $s_1$ and $s_2$. The remaining nodes are sinks labelled with each of the corresponding indices of sources that are matched to it. That is, a node $v$ labelled with $\{1,2\}$ indicates that there are commodities $(s_1, v)$ and $(s_2, v)$. In this instance, $\Delta^* = 3$, whereas $\max_{W \subseteq N, \K' \subseteq \K} \mathtt{LB}(W, \K') = 2$. 

While the witness set construction is not exact, it remains strong for out-tree instances. We prove in Section \ref{sec:out_trees} that for any (multi-source) out-tree instance, even the weaker bound has a gap of at most one:
\[ \max_{W \subseteq N, \K' \subseteq \K} \mathtt{LB}^w(W, \K') \geq \Delta^* -1.  \]

\section{(Near-)optimal algorithms for out-trees}\label{sec:out_trees}

We first present a simple combinatorial algorithm that solves single-source tree instances. (With a single source, $D$ necessarily forms an out-tree with root $s$.) We then present a polynomial-time algorithm for out-trees with multiple 
sources that returns a feasible solution with max out-degree at most one more than optimal. The algorithms are purely combinatorial and the analysis relies on bounds via the witness sets introduced in Section \ref{sec:lower_bounds}. 

\subsection{Simple algorithm for single-source setting}\label{subsec:single_out_trees}

We show that for each single-source instance $\I$, $\Delta^* = \max_{W \subseteq N, \K' \subseteq \K} \mathtt{LB}(W, \K')$. We prove this result by presenting a simple and efficient local search algorithm that returns an optimal solution, $H$, as well as a witness set $W, \K'$ such that $\Delta^+(H) = \mathtt{LB}(W, \K')$. \mnew{We then present a more efficient implementation of Algorithm \ref{alg:single_source_sketch} that runs in $O(n \log^2 n)$ time, beating previous results by a quadratic factor.} We state the results here and defer all details on the analysis to Appendix~\ref{app:single_source}.

\begin{algorithm}[H]
\DontPrintSemicolon
Start with the feasible solution $D$.\;
Select a max out-degree node $v^*$, and move an arbitrary arc departing~$v^*$ to the nearest predecessor $u$ of $v^*$ (in $D$) with~$\deg^+(u) \leq \deg^+(v^*)$.\;
Repeat Step 2 until no such predecessor exists.
\caption{Local search algorithm for the single-source setting.}
\label{alg:single_source_sketch}
\end{algorithm}

Figure \ref{fig:singlesource_ex} demonstrates the steps of the algorithm when the base graph $D$ is as provided in Figure \ref{subfig:a}, and $\T$ is the set of leaves. The square node is the selected max out-degree node, $v^*$, in each iteration and the nearest predecessor with degree at least two less than $v^*$ is $u$. Observe that in iteration 3, no such predecessor exists and so the algorithm terminates. 

\begin{figure}[h!]
  \begin{subfigure}{0.3\textwidth}
    \centering
    \includegraphics[width=\textwidth]{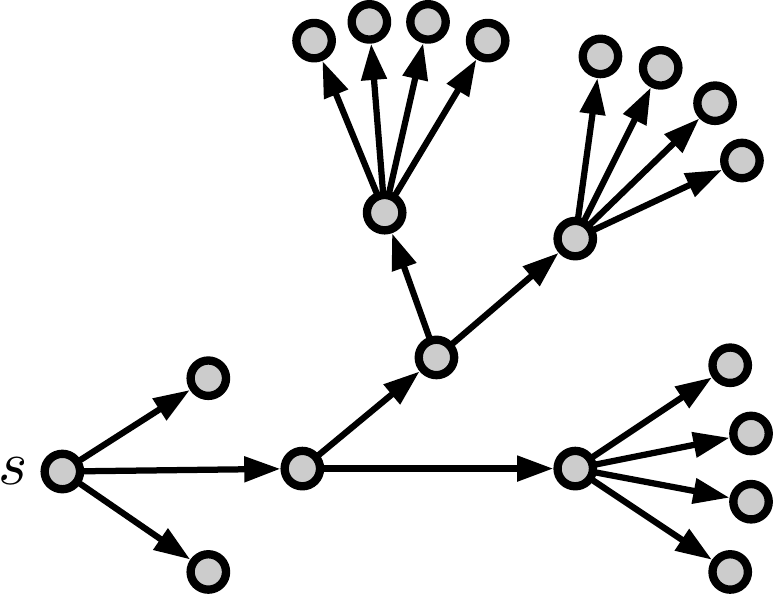}
    \caption{$H = D$}
    \label{subfig:a}
  \end{subfigure}
  \hfill
  \begin{subfigure}{0.3\textwidth}
    \centering
    \includegraphics[width=\textwidth]{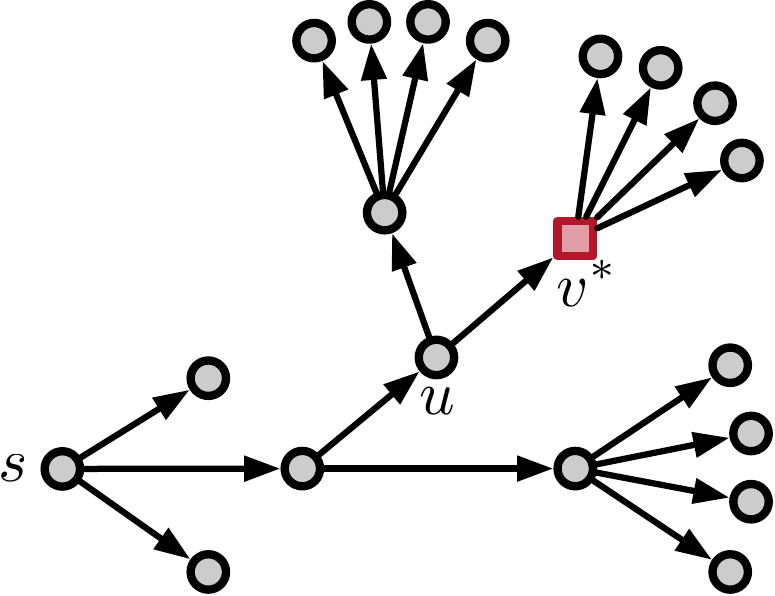}
    \caption{Iteration 1: $v^*$ identified.}
    \label{subfig:b}
  \end{subfigure}
  \hfill
  \begin{subfigure}{0.3\textwidth}
    \centering
    \includegraphics[width=\textwidth]{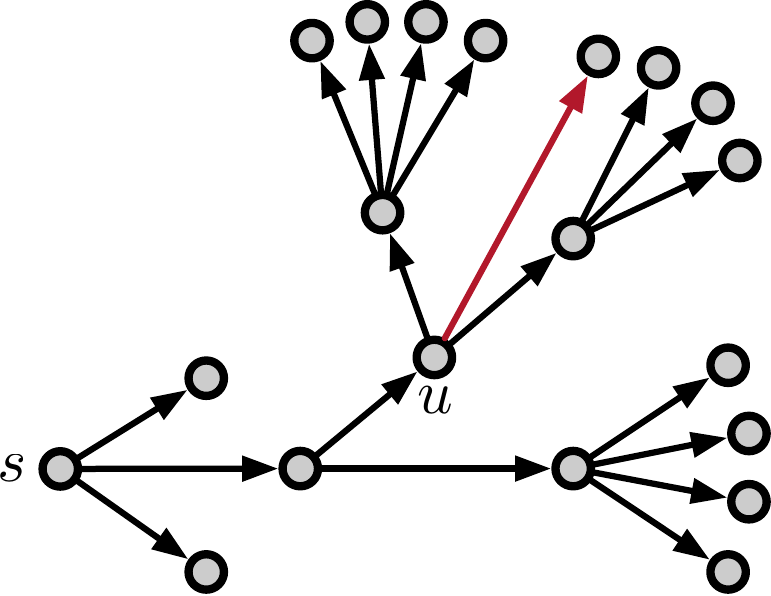}
    \caption{Iteration 1: $v^*w$ shifted.}
    \label{subfig:c}
  \end{subfigure}

  \medskip

  \begin{subfigure}{0.3\textwidth}
    \centering
    \includegraphics[width=\textwidth]{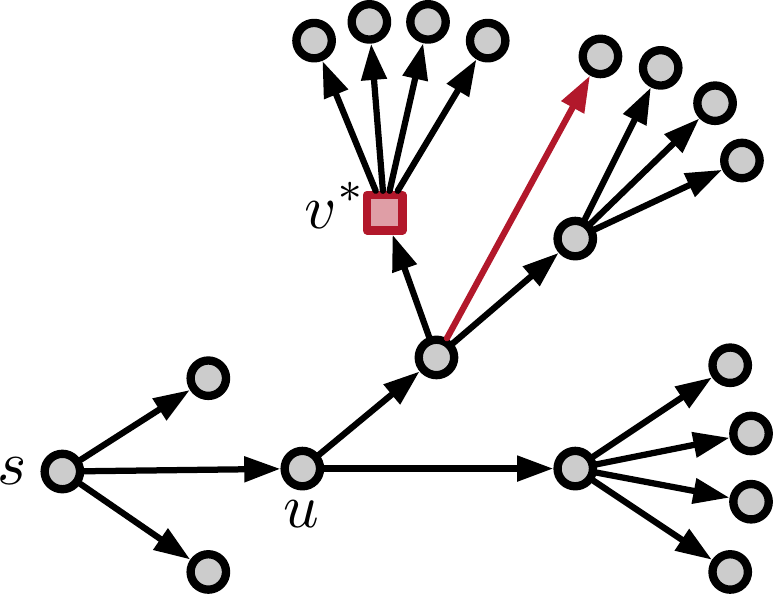}
    \caption{Iteration 2: $v^*$ identified.}
    \label{subfig:d}
  \end{subfigure}
  \hfill
  \begin{subfigure}{0.3\textwidth}
    \centering
    \includegraphics[width=\textwidth]{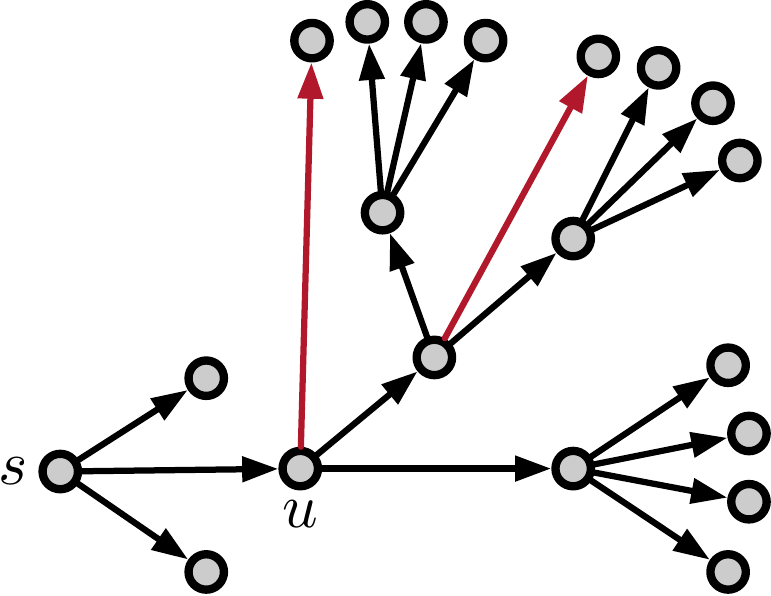}
    \caption{Iteration 2: $v^*w$ shifted.}
    \label{subfig:e}
  \end{subfigure}
  \hfill
  \begin{subfigure}{0.3\textwidth}
    \centering
    \includegraphics[width=\textwidth]{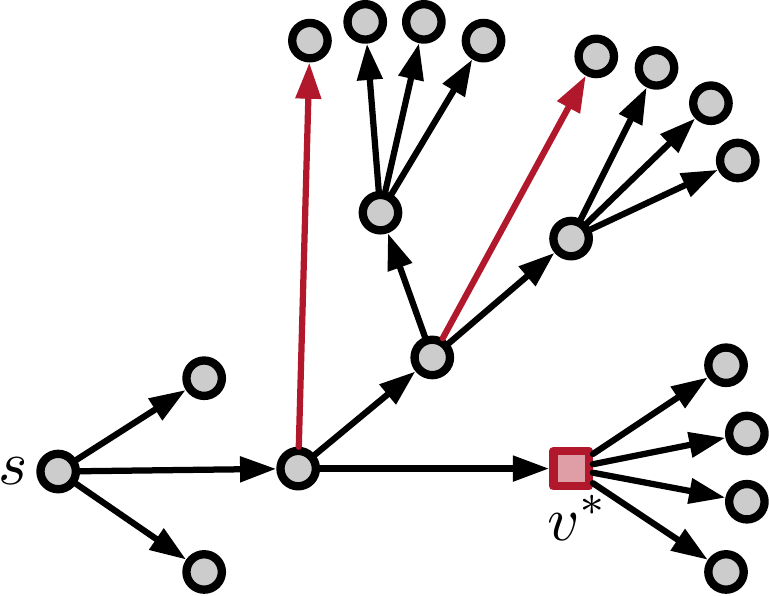}
    \caption{Iteration 3: $v^*$ identified.}
    \label{subfig:f}
  \end{subfigure}
  \caption{Execution of local search algorithm for single-source setting.}
  \label{fig:singlesource_ex}
\end{figure}
\FloatBarrier
We prove that this algorithm returns an optimal solution, $H$, by constructing a witness set $W, \K'$ such that $\mathtt{LB}(W, \K') = \Delta^+(H)$. 

\singlesource*

\subsection{Additive 1-approximation algorithm for out-trees}\label{subsec:multi_out_trees}

First, we describe how the multi-source setting differs from the single-source setting, making an extension of Algorithm \ref{alg:single_source_sketch} non-trivial. Consider the instance given in Figure \ref{subfig2:a}. The sinks are the set of leaves, and each sink $t$ is labelled with the set of indices of sources for which there is a commodity with that source and sink~$t$. Observe that we can no longer select departing arcs to shift arbitrarily: the arc $vw$ cannot be shifted since there is a commodity with source~$v$ and sink~$w$, while the other arcs departing $v$ can be shifted. Furthermore, it is no longer the case that in order to decrease the degree of the highest node, we only need to shift a single arc. In Figure \ref{subfig2:b} we see that in the potential second iteration of Algorithm \ref{alg:single_source_sketch}, no arc can be moved from $v$ since $s_2$ has out-degree 3, and all arcs departing $v$ serve commodities with source $s_2$.  

\begin{figure}[htbp]
  \begin{subfigure}{0.325\textwidth}
    \centering
    \includegraphics[width=\textwidth]{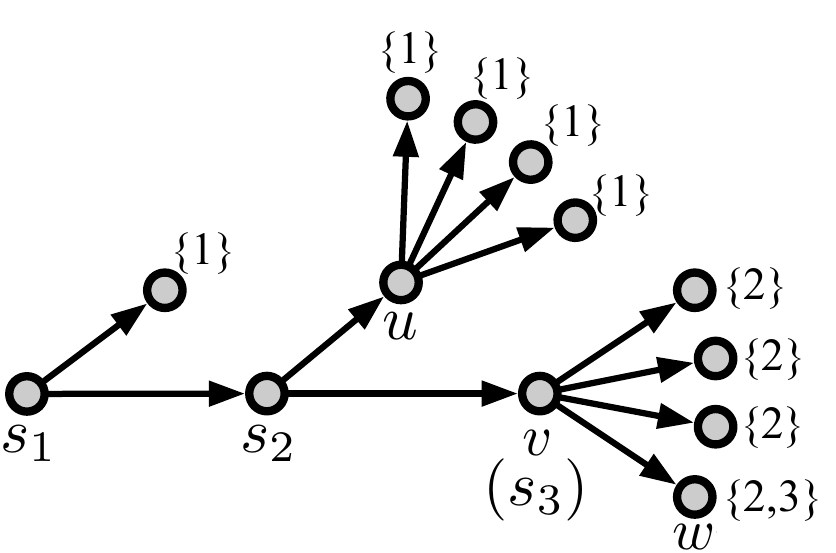}
    \caption{Out-tree instance}
    \label{subfig2:a}
  \end{subfigure}
  \hfill
  \begin{subfigure}{0.325\textwidth}
    \centering
    \includegraphics[width=\textwidth]{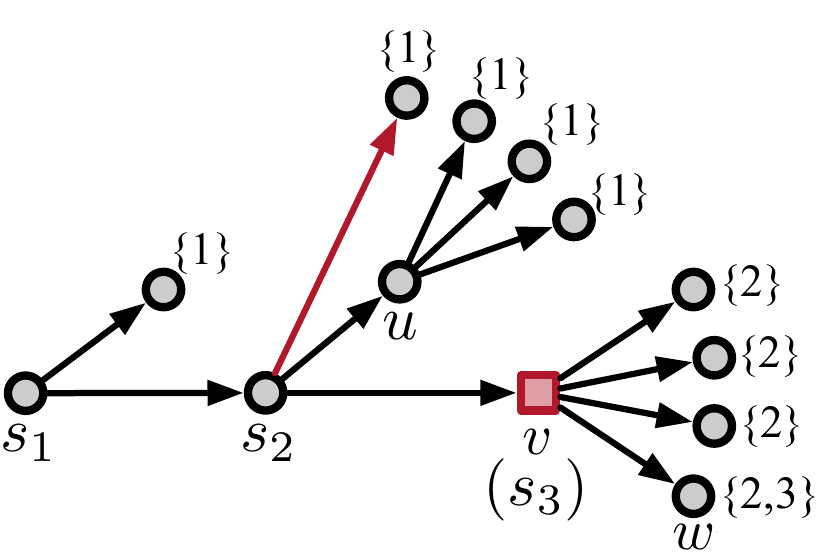}
    \caption{Terminal stage for alg.}
    \label{subfig2:b}
  \end{subfigure}
  \hfill
  \begin{subfigure}{0.325\textwidth}
    \centering
    \includegraphics[width=\textwidth]{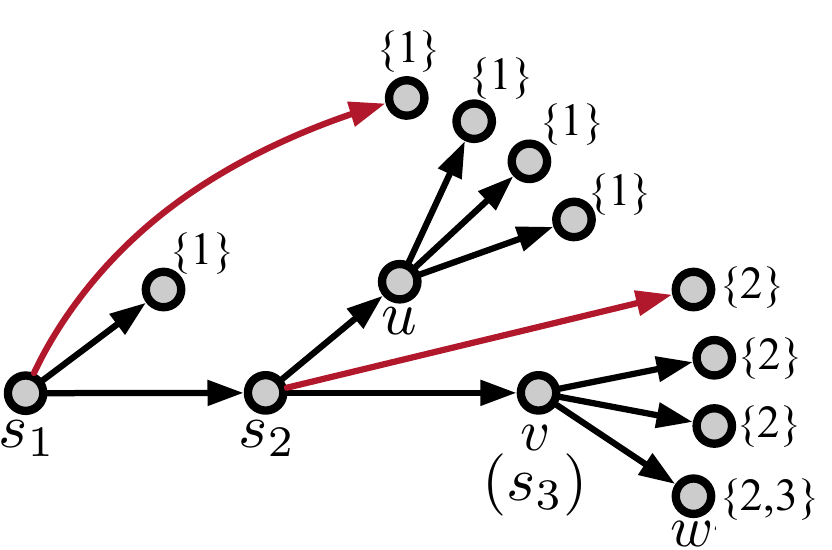}
    \caption{Solution ($\Delta^* = 3$).}
    \label{subfig2:c}
  \end{subfigure}
  \caption{Challenges in extending the single-source algorithm to multiple sources.} 
  \label{fig:extension_challenge}
\end{figure}
\FloatBarrier

\begin{definition}[$\S(a)$]
Let $D = (N,A)$ be an out-tree, and let $a \in A$. The \emph{set of sources that require $a$}, denoted $\S(a)$, is the set of sources $s_k$ for which $a$ is on the unique $s_k, t_k$-dipath in $D$. That is, $\S(a):= \{s_k: a \in A(P_k), k \in \K\}. $
\end{definition}

\begin{definition}[Blocking source]
Let $D = (N,A)$ be an out-tree and let \mbox{$a=vw$} be an arc in $A$. The \emph{blocking source} of $a$, denoted $b(a)$, is the unique source $s$ in $S(a)$ such that the $s, v$-dipath in $D$ has the fewest arcs.
\end{definition}

Consider the instance in Figure \ref{fig:blocking2}, where nodes $w_i$ for $i \in [5]$ are sinks, and each is labelled with the corresponding set of indices of sources. For each arc $a$ in $D$, the chart on the right indicates the set $\S(a)$ and value of $b(a)$.

\begin{figure}[h]
\vspace{-.5cm}
  \begin{minipage}{0.5\textwidth} 
    \centering
    \scalebox{0.35}{
	\includegraphics{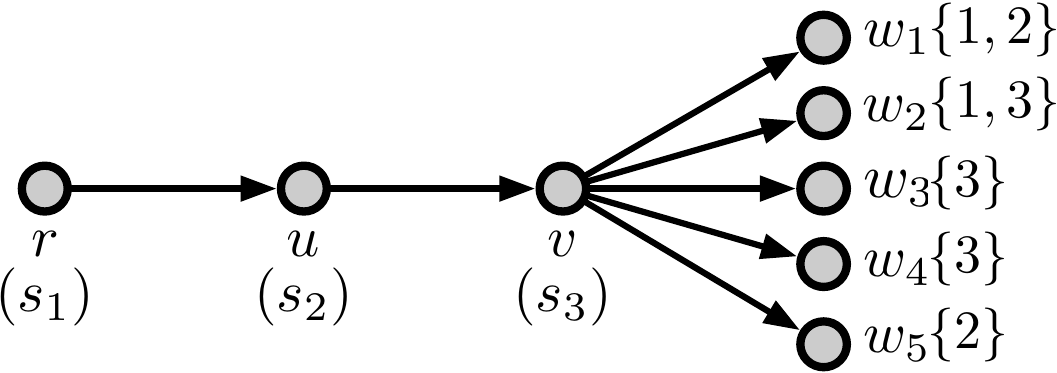}}
  \end{minipage}%
  \begin{minipage}{0.5\textwidth}
    \centering
\centering
\begin{tabular}{ccc}
$a$ & $\S(a)$ & $b(a)$ \\
\hline
$ru$ & $\{s_1\}$ & $s_1$ \\
$uv$ & $\{s_1, s_2\}$ & $s_2$ \\
$vw_1$ & $\{s_1, s_2\}$ & $s_2$ \\
$vw_2$ & $\{s_1, s_3\}$ & $s_3$ \\
$vw_3$ & $\{s_3\}$ & $s_3$ \\
$vw_4$ & $\{s_3\}$ & $s_3$ \\
$vw_5$ & $\{s_2\}$ & $s_2$ \\
\hline
\end{tabular}
  \end{minipage}
\caption{Example of blocking sources}
\label{fig:blocking2}
\end{figure}
\FloatBarrier

We will present an algorithm that takes as input a target, $T$, and returns a feasible solution $H$ with $\Delta^+(H) \leq T$ whenever $T \geq \Delta^* + 1$. We now define the contraction subroutine used to generate sub-instances of \mmspp.

\subsubsection{Contraction of an instance for target T}
Let $\I = (D, \K)$ be a feasible out-tree instance of \mmspp~with root $r$, and let $T \in \mathbb{Z}_{>0}$. For any node~${v \in N}$, let $N^+_D(v)$ be the set of nodes reached by arcs departing $v$ in $D$. 

Suppose there is more than one non-leaf node. Let $v$ be a non-leaf node where all nodes in $N^+_D(v)$ are leaves. That is, the subgraph of $D$ rooted at $v$ is a claw, denoted $C_v$. Such a node can be found efficiently by starting at $r$ and moving to a descendant that is not a leaf. Breaking ties arbitrarily, let $A^T_v$ denote the $\max\{T, |\delta^+_D(v)|\}$ arcs $a \in \delta^+_D(v)$ with the values of $b(a)$ with the fewest edges to $v$. Let $B^T_v = \delta^+_D(v) \setminus A^T_v$. We write $A_v$ and $B_v$ when $T$ is clear from context. In the example in Figure \ref{fig:blocking2}, we see that~$C_v$ is a claw, and when $T = 3$, $A_v = \{vw_2, vw_3, vw_4\}$ and $B_v = \{vw_1, vw_5\}$ since~$b(vw_2),b(vw_3), b(vw_4) = s_3 = v$ and $b(vw_1),b(vw_5)= s_2$ which is further from~$v$.

By definition, for any pair of arcs $a \in A_v$ and $a' \in B_v$, $b(a)$ is on the unique path in $D$ between $b(a')$ and $v$. Let $V(A_v)$ denote the heads of the arcs in $A_v$ and let $V(B_v)$ denote the heads of the arcs in $B_v$. For the same example and target, $V(A_v) = \{w_2, w_3, w_4\}$ and $V(B_v) = \{w_1, w_5\}$. 

We define the instance obtained from $\I$ by \emph{contracting $v$ for target $T$}, denoted~$\I^T_v = (D^T_v, \K^T_v)$ as follows. For all $v \neq r$, let~$p(v)$ denote the parent node of $v$ in~$D$.

\begin{definition}
The instance obtained from $\I$ by contracting $v$ for target $T$ is $\I^T_v = (D^T_v, \K^T_v)$, where
\begin{align*}
    E & := \{p(v)w: vw \in B_v\}, \\
    D^T_v & := \left(D \setminus \delta^+_D(v)\right) \cup E,
\end{align*}
and the commodity set is $\K^T_v = \{(s_k, t'_k): k \in \K\}$ where
\begin{align*}
	t'_k & = \begin{cases}
v, \quad ~\mbox{if } t_k \in V(A_v)\\
t_k, \quad \mbox{otherwise.}
\end{cases}
\end{align*}
\end{definition}
An example of this contraction process is given in Figure \ref{fig:target_T}. The instance is given on the left, where the sinks are only labelled for the nodes in the claw $C_v$, since this is the only set of commodities impacted by the contraction of $v$. $\I^3_v$ is given on the right.

\begin{figure}[h!]
	\begin{center}
		\scalebox{0.37}{
			\includegraphics{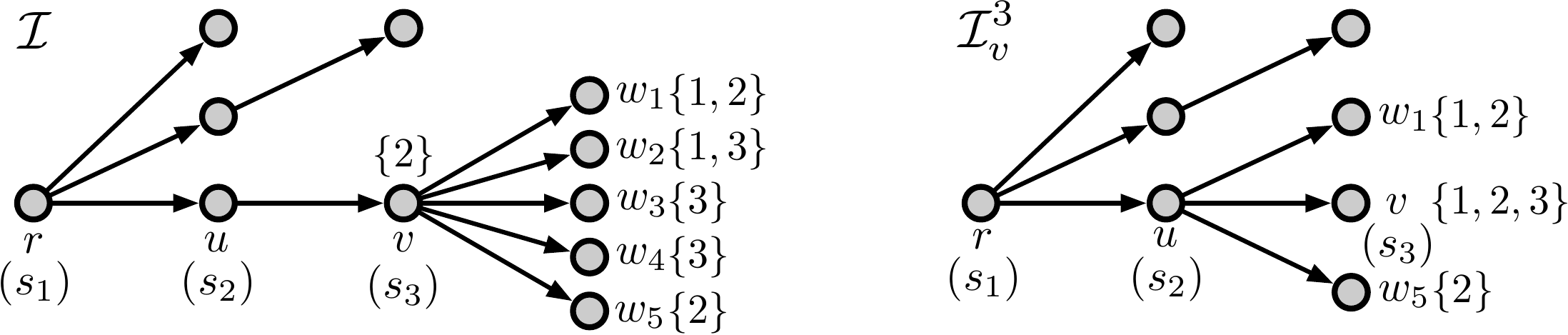}}
		\caption{Example for contraction process.}
		\label{fig:target_T}
\end{center}
\end{figure} 

Recall that for a node $v \in N$, $\T(v)$ denotes the set of sinks among commodities with source $v$. That is, $\T(v) = \{t_k: s_k = v, k \in \K\}$. 
\begin{restatable}{lemma}{welldefined}\label{lemma:well_defined}
If $T \geq |\T(v)|$, then $\I^T_v$ is a feasible instance of \mmspp.
\end{restatable}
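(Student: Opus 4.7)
The plan is to show that $D^T_v$ itself is a feasible subgraph of $\cl(D^T_v)$, which immediately yields feasibility of $\I^T_v$. Note that $D^T_v$ remains an out-tree rooted at $r$: deleting $\delta^+_D(v)$ turns $v$ into a leaf, and the added arcs $E = \{p(v)w : vw \in B_v\}$ simply attach the nodes of $V(B_v)$ as additional children of $p(v)$. So it suffices to exhibit, for every commodity $(s_k, t'_k) \in \K^T_v$, an $s_k, t'_k$-dipath in $D^T_v$.

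The key preliminary step is to use the hypothesis $T \geq |\T(v)|$ to show that every arc of the form $vt$ with $t \in \T(v)$ lies in $A_v$. For any arc $a = vw \in \delta^+_D(v)$, the equivalences $b(a) = v \iff v \in \S(a) \iff w \in \T(v)$ follow directly from the definitions, so exactly $|\T(v)|$ arcs of $\delta^+_D(v)$ have blocking source at distance $0$ from $v$, while every other arc has blocking source at strictly positive distance. Since $|\T(v)| \leq |\delta^+_D(v)|$ (each $w \in \T(v)$ is a child of $v$ because $v$'s children are leaves), the hypothesis forces $|A_v| \geq |\T(v)|$, and all $|\T(v)|$ distance-$0$ arcs therefore appear in $A_v$ regardless of how ties are broken.

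The remainder is a case analysis on the sink $t_k$, using that $V(A_v) \cup V(B_v) = N^+_D(v)$ is exactly the set of children of $v$. If $t_k \notin V(A_v) \cup V(B_v)$, then $t'_k = t_k$ and $P_k$ uses no arc of $\delta^+_D(v)$ (otherwise $P_k$ would have to end at a child of $v$, contradicting $t_k \notin N^+_D(v)$), so $P_k$ is already a dipath in $D^T_v$. If $t_k \in V(A_v)$, then $t'_k = v$ and the $s_k, v$-prefix of $P_k$ avoids $\delta^+_D(v)$, so it survives in $D^T_v$ (the case $s_k = v$ being trivial). If $t_k \in V(B_v)$, then $t'_k = t_k$, the preliminary step rules out $s_k = v$, and since $P_k$ ends $\ldots \to p(v) \to v \to t_k$, $s_k$ must equal $p(v)$ or be a strict ancestor of $p(v)$; concatenating the $s_k, p(v)$-prefix of $P_k$ (which lies in $D \setminus \delta^+_D(v)$) with the new arc $p(v)t_k \in E$ gives the desired dipath.

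The main obstacle is the preliminary step, and it is precisely where the hypothesis $T \geq |\T(v)|$ is essential: without it, a commodity $(v, t_k)$ whose arc $vt_k$ landed in $B_v$ would demand a $v, t_k$-dipath in $D^T_v$, yet after contraction $v$ is a leaf and no such dipath exists, so feasibility would fail.
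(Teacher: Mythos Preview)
Your proof is correct and follows essentially the same case analysis as the paper's proof. The main difference is expository: you spell out explicitly, via the preliminary step, why $T \geq |\T(v)|$ forces every arc $vt$ with $t \in \T(v)$ into $A_v$, whereas the paper simply asserts ``since $T \geq |\T(v)|$, it follows that $s_k \neq v$'' at the relevant point in the $B_v$ case. One small imprecision: your claim that $D^T_v$ is an out-tree ignores the nodes of $V(A_v)$, which become isolated after deleting $\delta^+_D(v)$; this does not affect the argument, since no commodity in $\K^T_v$ touches those nodes.
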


\begin{proof}
We need to show that for each commodity $k \in \K^T_v$, there is an $s_k, t'_k$-dipath in $D^T_v$. If $t'_k \notin V(C_v)$, then $t'_k = t_k$, and $(s_k, t_k)$ was a commodity in $\I$. The $s_k, t_k$-dipath in $D$ did not use any nodes in $C_v$ and so it remains in $D^T_v$. 

If $t'_k = v$, then there is some commodity $(s_k, t_k)$ in $\I$ where $t_k \in V(C_v)$. Since $s_k \neq t_k$, the $s_k, t_k$-dipath in $D$ consists of an $s, v$-dipath as a subpath. This dipath is unchanged in $D_v^T$. 

Finally, suppose $t'_k \in V(C_v) \setminus v$. By construction, $t_k \in V(B_v)$. Then $t'_k = t_k$, and $(s_k, t_k)$ is a commodity in $\I$, and $D$ contains an~$s_k, t_k$-dipath consisting of an~$s_k, v$-dipath along with the arc $v t_k$. However, when forming $D_v^T$ we remove the arc $v t_k$ and replace it with the arc $u t_k$, and so the same dipath will not suffice. However, since $T \geq |\T(v)|$, it follows that $s_k \neq v$. Thus, the $s_k, v$-dipath is non-trivial and consists of an $s_k, u$-dipath along with the arc $uv$. Then, the~${s_k, u}$-dipath along with the arc $u t_k$ forms an~$s_k, t_k$-dipath in $D_v^T$ as required.  
\end{proof}


\subsubsection{Algorithm}
For a given a target $T$, 
our algorithm returns a feasible solution $H$ with $\Delta^+(H) \leq T$ whenever $T \geq \Delta^* + 1$. To guarantee the performance of the algorithm if no such solution is produced (the algorithm outputs the empty set), we show that in this case there is a witness set $W, \K'$ such that $\LB^w(W, \K') \geq T$. 

When there is a single non-leaf node, $r$, either the graph itself is the desired solution with max out-degree at most $T$, or the set $W = \{r\}, \K' = \K$ is a witness set certifying that $\Delta^* \geq T$. Otherwise, we find a node, $v$, with only leaves as descendants. Then either the subtree rooted at $v$ already provides a witness set, or we apply the target algorithm to the instance $\I_v^T$. If the algorithm produces a feasible solution $H_v$ to $\I_v^T$ with $\Delta^+(H_v) \leq T$, we then extend this subgraph to a feasible solution $H$ for $\I$ with $\Delta^+(H) \leq T$ by simply adding back the arcs in the set $A_v$. The pseudocode is presented in Algorithm \ref{alg:out-trees}.

\begin{algorithm}[h!]
\setcounter{AlgoLine}{0}
	\DontPrintSemicolon
	\KwIn{A target $T \in \mathbb{Z}_{\geq 1}$, and an out-tree instance of \mmspp, $\I = (D, \K)$}
	$L \leftarrow \{v \in N: \delta^+_D(v) = \emptyset\}$\\
	$I \leftarrow N \setminus L$\\
	\If{$|I| = 1$}{
		\If{$|L| > T$}{
			\Return $\emptyset$}
		\Return $D$
	}
	Let $v$ be a non-leaf where $N^+_D(v) \subseteq L$\\
	\If{$|\T(v)| > T$}{
		\Return $\emptyset$}
	$H_v \leftarrow \mathtt{out}$-$\mathtt{tree}(\I^T_v, T)$ \\
	\If{$H_v = \emptyset$}{
		\Return $\emptyset$}
	\Else{
		\Return $H_v \cup A_v$}
	\caption{$\mathtt{out}$-$\mathtt{tree}(\I, T)$}\label{alg:out-trees}
\end{algorithm}
\FloatBarrier

If Algorithm \ref{alg:out-trees} returns the empty set, we argue by induction that there is a witness set $W_v, \K'_v$ for the instance $\I^T_v$ such that $\LB_{\I_v^T}^w(W_v, \K'_v) \geq T$. We then extend this pair to a witness set $W, \K'$ for $\I$ such that $\LB_{\I}^w(W, \K') \geq T$. Note that we cannot necessarily set $W = W_v$ and $\K' = \K'_v$, since the commodity paths may differ in $\I$ and $\I_v^T$. For each commodity $k$, the corresponding source-sink pair is $(s_k, t_k)$ in $\I$ and $(s_k, t'_k)$ in $\I_v^T$. Let $Q_k$ denote the unique $s_k,t_k'$-dipath in $D_v^T$, and recall that $P_k$ denotes the unique $s_k, t_k$-dipath in $D$. The following lemmas relate the cut sets in $\I$ and $\I_v^T$ for a fixed commodity and node set. 
%
%
\begin{restatable}{lemma}{compaths}\label{lemma:com_paths}
Let $X \subseteq N \setminus \{v\}$ such that $D[X]$ is connected, and let $k \in \K$. If $s_k \in X$ and $t'_k \notin V(B_v)$, then $\delta^+_{P_k}(X) = \delta^+_{Q_k}(X)$.
\end{restatable}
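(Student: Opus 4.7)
The approach is a case analysis on where $t_k$ lies relative to the claw $C_v$. The key structural observation is that $D_v^T$ differs from $D$ only by deleting the arcs $\delta^+_D(v)$ and inserting the shortcuts $E = \{p(v)w : vw \in B_v\}$, so the paths $P_k$ and $Q_k$, each unique in its respective out-tree, can diverge only via these modified arcs.

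First I would use the definition that $t'_k = v$ precisely when $t_k \in V(A_v)$ (and $t'_k = t_k$ otherwise) to split into cases. The hypothesis $t'_k \notin V(B_v)$ immediately rules out $t_k \in V(B_v)$, leaving three possibilities: (i) $t_k \in V(A_v)$ with $t'_k = v$; (ii) $t_k = v$ with $t'_k = v$; and (iii) $t_k$ is neither $v$ nor a child of $v$, with $t'_k = t_k$. In cases (ii) and (iii) I would argue that $P_k = Q_k$ as arc sets. For $P_k$: in case (ii) it terminates at $v$, and in case (iii) the $s_k, t_k$-dipath in the out-tree $D$ does not pass through $v$ at all, since every descendant of $v$ in $D$ is a leaf child, none of which equals $t_k$. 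For $Q_k$: any shortcut arc in $E$ has its head in $V(B_v)$, and every node of $V(B_v)$ is a sink in $D_v^T$, so using such an arc would force $t'_k \in V(B_v)$, contradicting the hypothesis. Uniqueness of directed paths in the out-tree then gives $P_k = Q_k$, so the two cuts trivially coincide.

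Case (i) is where the assumption $v \notin X$ is used. Here $P_k$ decomposes as the $s_k, v$-dipath in $D$ followed by the single arc $v t_k \in A_v$; since $s_k \in X$ forces $s_k \neq v$, the $s_k, v$-prefix is nontrivial and lies entirely in the unchanged portion of the graph, so it coincides with $Q_k$. Hence $P_k = Q_k \cup \{v t_k\}$, and because the tail of the extra arc is $v \notin X$, it does not appear in $\delta^+_{P_k}(X)$, so the two cuts agree. The main obstacle is more bookkeeping than insight: the critical fact is that in $D_v^T$ every node of $V(B_v)$ becomes a sink, which both prevents $Q_k$ from invoking a shortcut in cases (ii)--(iii) and pins down $Q_k$ exactly in case (i).
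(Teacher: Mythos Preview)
Your proposal is correct and uses the same case split as the paper. In the cases $t_k = v$ and $t_k' \neq v$ (your (ii) and (iii)), both arguments simply identify $P_k = Q_k$. In the remaining case $t_k \in V(A_v)$ (your (i)), the paper argues by cardinality: $Q_k \subseteq P_k$ gives $\delta^+_{Q_k}(X) \subseteq \delta^+_{P_k}(X)$, and then it invokes $|\delta^+_{P_k}(X)| \leq 1$ (which uses the connectedness of $D[X]$) together with $\delta^+_{Q_k}(X) \neq \emptyset$ (from $s_k \in X$, $v \notin X$) to force equality. Your argument is more direct: the single extra arc $v t_k$ of $P_k$ over $Q_k$ has tail $v \notin X$, so it cannot lie in $\delta^+_{P_k}(X)$, and equality follows immediately. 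In particular, your version never actually uses the hypothesis that $D[X]$ is connected.
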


\begin{proof}
If $t_k = v$, then $t_k' = t_k$ and $P_k$ did not contain any arc from~$A_v$ or~$B_v$. Therefore, $P_k$ and $Q_k$ are the same, and so $\delta^+_{P_k}(X) = \delta^+_{Q_k}(X)$. The same argument holds when $t'_k \neq v$.

Otherwise, $t'_k = v$ and $t_k \neq v$. It follows that $t_k \in V(A_v)$ and so $Q_k$ is a subpath of $P_k$, giving $\delta^+_{Q_k}(X) \subseteq \delta^+_{P_k}(X)$. Since $v \notin X$ and $s_k \in X$, $\delta^+_{Q_k}(X) \neq \emptyset$. Moreover, $|\delta^+_{P_k}(X)| \leq 1$ in general, so $\delta^+_{P_k}(X) = \delta^+_{Q_k}(X)$. 
\end{proof}

\begin{restatable}{lemma}{compathstwo}\label{lemma:com_paths2}
Suppose $|\T(v)| \leq T$. Let $X \subseteq N$ such that $D[X]$ is connected, and let $k \in \K$.
If $s_k \in X$, $t'_k \in V(B_v)$, and $t'_k \notin X$, then either $\delta^+_{P_k}(X) = \delta^+_{Q_k}(X)$ or $\delta^+_{Q_k}(X) = p(v) t'_k$. 
\end{restatable}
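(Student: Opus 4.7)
The plan is to exploit the fact that the paths $P_k$ and $Q_k$ differ only in the short segment near $v$, and then reduce to a small case analysis on how the node set $X$ interacts with $\{p(v), v\}$.

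First I would use the hypothesis $|\T(v)| \leq T$ to rule out the degenerate situation $s_k = v$. Any commodity of the form $(v, w)$ contributes an arc $vw$ with blocking source $v$ itself (distance zero to $v$), so such arcs are given strict priority when forming $A_v$; because $|\T(v)| \leq T \leq |A_v|$ in the relevant regime, every commodity sourced at $v$ has its sink in $V(A_v)$. Since the lemma assumes $t'_k \in V(B_v)$, this forces $s_k \neq v$. Because $D$ is an out-tree and $t_k = t'_k \in V(B_v) \subseteq N^+_D(v)$, the source $s_k$ must be a strict ancestor of $v$, so $P_k$ terminates with the suffix $\ldots \to p(v) \to v \to t_k$.

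Next I would describe $Q_k$. By construction of $D^T_v$, the arc $vt_k$ is deleted and replaced by $p(v) t_k \in E$, while the rest of $D$ above $v$ is untouched. Hence $Q_k$ terminates with $\ldots \to p(v) \to t_k$, and $P_k$ and $Q_k$ agree exactly on the $s_k, p(v)$-subpath. This reduces the lemma to comparing the two cuts on the three-arc sub-structure $\{p(v)v,\, vt_k,\, p(v)t_k\}$.

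Finally I would case-split on the membership of $p(v)$ and $v$ in $X$. If $p(v) \notin X$, then (since $s_k \in X$ and $s_k$ is an ancestor of $p(v)$ that is connected to $p(v)$ inside $X$ along the common subpath only up to some point) the unique exit arc of each path from $X$ lies on the shared $s_k, p(v)$-subpath, yielding $\delta^+_{P_k}(X) = \delta^+_{Q_k}(X)$. If $p(v) \in X$ and $v \notin X$, then the exit arc of $P_k$ is $p(v)v$ while the exit arc of $Q_k$ is $p(v) t_k = p(v) t'_k$, giving the second alternative. If $p(v), v \in X$, then since $t_k \notin X$ the exit arcs are $vt_k$ for $P_k$ and again $p(v) t_k = p(v) t'_k$ for $Q_k$, again giving the second alternative. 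The main obstacle is simply keeping the bookkeeping of these subcases straight; there is no genuine combinatorial difficulty once the structural observation that the two paths differ only on the $p(v), t_k$-segment is in hand.
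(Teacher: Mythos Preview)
Your proposal is correct and follows essentially the same approach as the paper. The paper also first argues $s_k \neq v$ from $|\T(v)| \leq T$, then names the shared $s_k,p(v)$-subpath $R$ and splits on whether $\delta^+_R(X)$ is empty---which is exactly your case split on $p(v)\in X$ versus $p(v)\notin X$; the paper simply omits your further sub-case on $v\in X$, since once $\delta^+_R(X)=\emptyset$ the only remaining arc of $Q_k$ is $p(v)t'_k$ regardless.
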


\begin{proof}
Since $|\T(v)| \leq T$ and $t_k' \in V(B_v)$, $s_k$ cannot be equal to $v$. Therefore, each of $P_k$ and $Q_k$ contain the same $s_k, p(v)$-dipath in $D$, denoted $R$, as a subpath. An example is shown in Figure \ref{fig:com_paths2}, for the commodity with source $r$ and sink~$w_1$. The solid edges form $P_k$, and the dashed lines form $Q_k$. The subpath $R$ is the dipath that is shared by $P_k$ and $Q_k$ (the single arc $ru$ in this case).

\begin{figure}[htbp]
  \begin{subfigure}{0.43\textwidth}
    \centering
    \includegraphics[width=\textwidth]{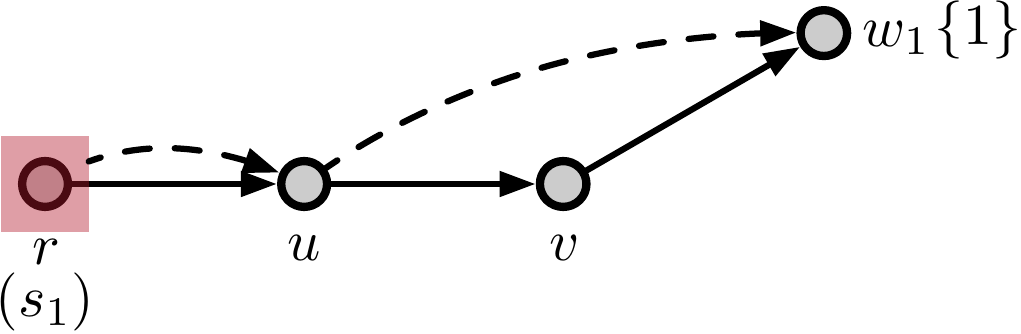}
    \caption{$\delta^+_{R}(X) \neq \emptyset$}
    \label{subfig:com_paths2A}
  \end{subfigure}
  \hfill
  \begin{subfigure}{0.43\textwidth}
    \centering
    \includegraphics[width=\textwidth]{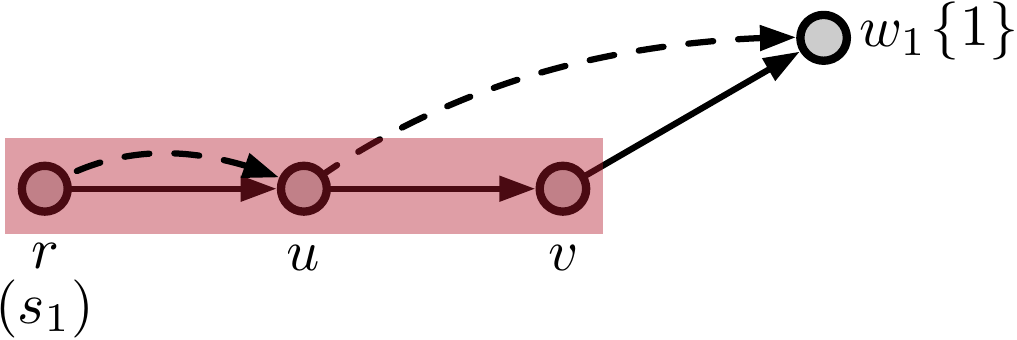}
    \caption{$\delta^+_{R}(X) = \emptyset$}
    \label{subfig:com_paths2B}
  \end{subfigure}
  \caption{Comparison of $\delta^+_{Q_k}(X)$ and $\delta^+_{P_k}(X)$.}
  \label{fig:com_paths2}
\end{figure}
\FloatBarrier

If $\delta^+_{R}(X) \neq \emptyset$, then $\delta^+_{Q_k}(X) = \delta^+_{R}(X) = \delta^+_{P_k}(X)$. So suppose $\delta^+_{R}(X) = \emptyset$. Since~$t'_k \notin X$, it follows that $\delta^+_{Q_k}(X) \neq \emptyset$. The only arc in $Q_k \setminus R$ is the arc~$p(v) t'_k$. 
\end{proof}
%
%
%

\begin{prop}
For each $T \in \mathbb{Z}_{>0}$, Algorithm \ref{alg:out-trees} either returns a solution, $H$, to instance $\I$ with $\Delta^+(H) \leq T$, or there is a witness set $W, \K'$ such that $\LB^w(W, \K') \geq T$, certifying that $\Delta^* \geq T$. 
\end{prop}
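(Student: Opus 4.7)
The plan is to induct on the number of non-leaf nodes of $D$, proving simultaneously that a successful execution returns $H$ with $\Delta^+(H) \leq T$ and that a failing execution admits a witness $(W, \K')$ with $\LB^w_\I(W, \K') \geq T$. The easy witness cases are direct. If $|I| = 1$ and $|L| > T$, the witness is $W = \{r\}$, $\K' = \K$: each $P_k$ is a distinct single arc, so the $|L|$ cuts are pairwise distinct singletons and $\LB^w = |L| > T$. If $|\T(v)| > T$ at the chosen claw root $v$, the witness is $W = \{v\}$, $\K' = \{k : s_k = v\}$; because $v$ is a claw root each such $P_k$ is a single distinct arc $v \to t_k$, yielding $\LB^w = |\T(v)| > T$. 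Otherwise $|\T(v)| \leq T$, $\I_v^T$ is feasible by Lemma \ref{lemma:well_defined}, and since $v$ becomes a leaf in $D_v^T$ the recursion is on a strictly smaller instance and the inductive hypothesis applies.

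If the recursion returns $H_v$ with $\Delta^+(H_v) \leq T$, I output $H = H_v \cup A_v$. Feasibility is verified by a case split on $t_k$: if $t_k \in V(A_v)$, concatenating $v \to t_k \in A_v$ with the $s_k,v$-path in $H_v \cap \cl(Q_k)$ gives the required $s_k,t_k$-path; if $t_k \in V(B_v)$, the $s_k, t_k$-path in $H_v \cap \cl(Q_k)$ already lies in $\cl(P_k)$ because the shortcut arc $p(v) \to t_k$ is itself an arc of $\cl(P_k)$; otherwise $P_k = Q_k$ and the path transfers verbatim. The out-degree bound holds because $v$ has zero out-degree in $H_v$ and gains exactly $|A_v| \leq T$ arcs, while every other node's out-degree is unchanged.

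The principal obstacle is the witness-set extension in the failing branch. Given $(W_v, \K_v')$ with $\LB^w_{\I_v^T}(W_v, \K_v') \geq T$, the naive choice $W = W_v$ can fail in two ways: $D[W_v]$ may be disconnected when $W_v \cap V(B_v) \neq \emptyset$ and these leaves are held together in $D_v^T$ only through the shortcut arcs $p(v) \to w$; or multiple commodities $k \in \K_v'$ with $t_k' \in V(B_v) \cup \{v\}$ may have their cuts in $\I$ collide at the single arc $p(v) \to v$. Let $C = \{k \in \K_v' : t_k' \in V(B_v) \cup \{v\}\}$. I take $W = W_v$ whenever $v \in W_v$ (in which case $v$ itself bridges $V(B_v)$-leaves and absorbs the potential $p(v) \to v$ collisions as distinct arcs $v \to t_k$), and also whenever $v \notin W_v$ but neither failure mode triggers (i.e., $W_v \cap V(B_v) = \emptyset$ and either $p(v) \notin W_v$ or $|C| \leq 1$). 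In the remaining subcase I set $W = (W_v \setminus V(B_v)) \cup \{v\}$ and $\K' = \K_v'$, augmented when $W_v \cap V(B_v) = \emptyset$ by the commodities $(v, t) \in \K$ with $t \in \T(v) \setminus \{t_k : k \in C\}$, whose cut arcs $v \to t$ are fresh and pairwise distinct.

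Three things must be verified for this witness. Connectivity of $D[W]$ follows because each $w \in W_v \cap V(B_v)$ requires $p(v) \in W_v$ for $D_v^T[W_v]$-connectivity, and the substitute path $p(v) \to v \to w$ lies in $D[W]$ by construction. Disjointness of $\delta^+_{P_k}(W)$ across $k \in \K'$ is the delicate step: any shared arc $a$ is, by a short case split, either a terminal arc $v \to t$ lying on a unique $P_k$, or an arc avoiding $v$ and $V(B_v)$; in the latter case $a$ appears on both $Q_k$ and $Q_j$, and because $W$ agrees with $W_v$ off $V(B_v) \cup \{v\}$, Lemmas \ref{lemma:com_paths} and \ref{lemma:com_paths2} force $a \in \delta^+_{Q_k}(W_v) \cap \delta^+_{Q_j}(W_v)$, contradicting the inductive disjointness. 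Finally, the count bound $\lceil |\K'|/|W| \rceil \geq T$ holds automatically when $|W| \leq |W_v|$; in the one remaining subcase (where $|W| = |W_v| + 1$), the condition $|C| \geq 2$ gives enough inductive slack—combined with the newly admitted commodities sourced at $v$—to absorb the added element of $W$, keeping the ratio at least $T$.
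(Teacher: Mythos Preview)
Your induction skeleton, base cases, and the feasibility/degree argument for $H = H_v \cup A_v$ are all fine and match the paper. The gap is in the witness-extension step, specifically the counting in the subcase where $|W| = |W_v| + 1$.

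There you have $v \notin W_v$, $W_v \cap V(B_v) = \emptyset$, $p(v) \in W_v$, $|C| \geq 2$, and you set $W = W_v \cup \{v\}$ and $\K' = \K_v'$ together with the commodities $(v,t)$ for $t \in \T(v) \setminus \{t_k : k \in C\}$. From $\LB^w_{\I_v^T}(W_v,\K_v') \geq T$ we only get $|\K_v'| \geq (T-1)|W_v| + 1$, and since $|W| = |W_v|+1$ you need $|\K'| \geq (T-1)|W_v| + T$, i.e.\ at least $T-1$ additional commodities. But $\T(v)$ can be arbitrarily small: $v$ need not be a source at all, in which case $\T(v) = \emptyset$ and you add nothing. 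The condition $|C| \geq 2$ does not create any slack here, because the commodities in $C$ are already counted in $\K_v'$; having more of them does not raise $|\K'|$. So the claimed bound $\lceil |\K'|/|W| \rceil \geq T$ fails in general.

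The paper fixes this with two moves you are missing. First, it observes that one may assume $W_v$ contains no leaves of $D_v^T$ (removing a leaf from $W_v$ only increases $\LB^w$), which in one stroke eliminates your cases $v \in W_v$ and $W_v \cap V(B_v) \neq \emptyset$. Second, and crucially, the $T-1$ extra commodities are obtained not from $\T(v)$ but from the set $\bar\K = \{(b(vt),t) : t \in V(A_v)\}$. Since some $\ell \in \K_v'$ has $t_\ell' \in V(B_v)$, we have $B_v \neq \emptyset$ and hence $|A_v| = T$ by construction of $A_v$; moreover every blocking source $b(vt)$ with $t \in V(A_v)$ lies on the $s_\ell,v$-dipath and is therefore in $W$. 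After discarding the at most one commodity of $\K_v'$ with sink $v$ (to keep $t_k \notin W$), this gives $|\K'| \geq |\K_v'| + T - 1 \geq (T-1)|W| + 1$, exactly what is needed. Your augmentation sourced at $v$ should be replaced by this blocking-source augmentation indexed by $V(A_v)$.
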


\begin{proof}
Let $\I = (D, \K)$ denote an instance of \mmspp~where $D$ is an out-tree with root $r$. Let $T$ be a positive integer. We prove the result by induction on the number of non-leaf nodes in $D$. Recall that we may assume that we are working with minimal, non-trivial instances, in the sense that all $k \in \K$ have~$s_k \neq t_k$, $\K \neq \emptyset$, and any leaf node $l$ must be a sink for some commodity. 

Suppose there is a single non-leaf node, $r$, as in Figure \ref{fig:claw}. Then each node in $\delta^+_D(r)$ is a leaf. If $|\delta^+_D(r)| \leq T$, then the graph $D$ is a solution with $\Delta^+(D) \leq T$. Otherwise,~${|\T(r)| = |N^+_D(r)| > T}$, and the witness set $W = \{r\}$, $\K' = \K$ gives the lower bound $\LB^w(W, \K') = \left\lceil |\T(r)|/ 1 \right\rceil \geq T$. 
\begin{figure}[h!]
	\begin{center}
		\scalebox{0.4}{
			\includegraphics{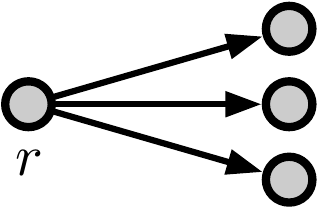}}
		\caption{Single non-leaf node.}
		\label{fig:claw}
\end{center}
\vspace{-.5cm}
\end{figure} 
\FloatBarrier
Suppose the result holds for all out-tree instances with at most $d$ non-leaf nodes, and consider an instance with $d+1$ non-leaf nodes. Let $v$ be a non-leaf node where all nodes in $N^+_D(v)$ are leaves. 

If $|\T(v)| > T$, then setting $W = \{v\}$ and $\K' = \{k \in \K: s_k = v\}$ gives $\LB^w(W, \K') > T$, so suppose~${|\T(v)| \leq T}$. Let $A_v$, $B_v$, $E$, and $\I_v^T = (D_v^T, \K_v^T)$ be defined as above, where~${A_v \cup B_v = \delta^+_D(v)}$ and for any pair $a \in A_v$ and $a' \in B_v$, $b(a)$ is on the unique path in $D$ between $b(a')$ and $v$. By Lemma \ref{lemma:well_defined}, it follows that $\I^T_v$ is a feasible out-tree instance with at most $d$ non-leaf nodes. Let $\Delta^*_v$ denote the min-max degree of a feasible subgraph of $\cl(D_v)$ for instance $\I^T_v$. By induction, the algorithm either returns a feasible subgraph $H_v \subseteq \cl(D_v)$ for $\I_v^T$ with $\Delta^+(H_v) \leq T$, or there is a witness set $W_v, \K'_v$ certifying that $\Delta^*_v \geq T$.  

Suppose the algorithm returns a feasible subgraph $H_v \subseteq \cl(D_v)$ for $\I^T_v$ with $\Delta^+(H_v) \leq T$. Let $H = H_v \cup A_v$, as in the example in Figure \ref{fig:extending_feasible}. We claim that~$H$ is a feasible solution for $\I$ with $\Delta^+(H) \leq T$. Observe that $\deg^+_{H_v}(v) = 0$ and $|A_v| \leq T$, so $\Delta^+(H) \leq T$. It remains to argue that $H$ is feasible. Let $k \in \K$. If $t_k \notin V(A_v)$, then $(s_k, t_k)$ is a commodity in $\I^T_v$, and so an $s_k, t_k$-dipath is present in $H_v \subseteq H$. Otherwise $t_k \in V(A_v)$, and so there is a commodity $(s_k, v)$ in $\K^T_v$. By feasibility of $H_v$, $H$ contains an $s_k, v$-dipath. With the addition of arc $v t_k$ in~$A_v$ we obtain an $s_k, t_k$-dipath in $H$. Therefore, $H$ is feasible and $\Delta^+(H) \leq T$. 

\begin{figure}[h!]
	\begin{center}
		\scalebox{0.38}{
	\includegraphics{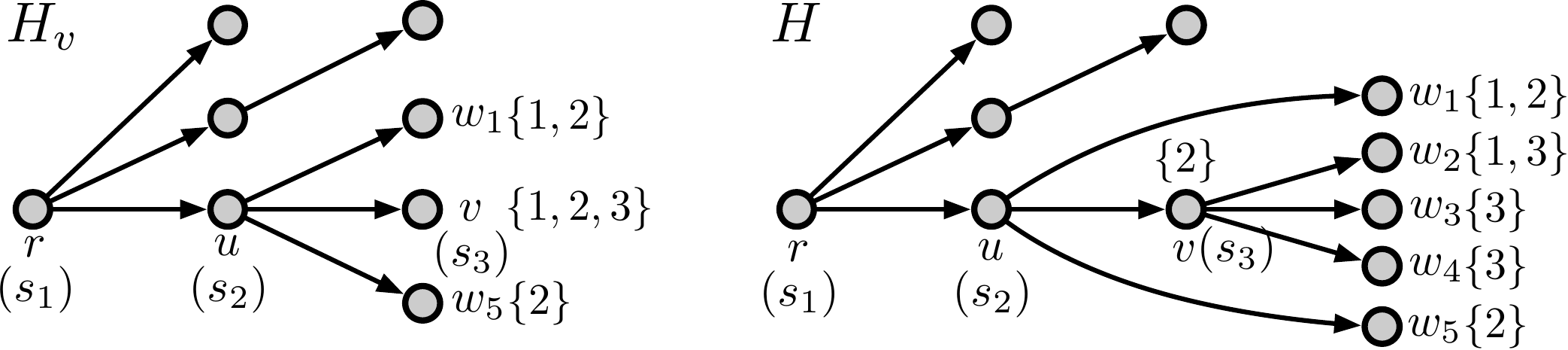}}
		\caption{Extending feasible $H_v$ for $\I_v^3$ to a feasible solution $H$ for $\I$.}
		\label{fig:extending_feasible}
\end{center}
\end{figure} 
\FloatBarrier


Suppose instead, there is a witness set $W_v, \K'_v$ with $\LB^w_{\I_v^T}(W_v, \K'_v) \geq T$. That is, $D_v^T[W_v]$ is connected, $\emptyset \neq \K'_v \subseteq \K_v^T$ such that $s_k \in W_v$ and $t'_k \notin W_v$ for all~$k \in \K'_v$. Moreover, for all distinct $k, j \in \mathcal{K}_v'$, $\delta^+_{Q_k}(W) \cap \delta^+_{Q_j}(W) = \emptyset$. Note that we may assume $W_v$ does not contain any leaves in $D_v^T$, since removing a leaf from~$W_v$ can only increase the lower bound. So, $W_v \subseteq N \setminus \{v\}$.

We now construct a witness set $W, \K'$ for $\I$ such that $\Delta^* \geq \LB^w_\I(W, \K') \geq T$. Consider the same set of nodes $W_v$ and commodities $\K_v'$ in $D$ instead of $D_v^T$. Since no leaf node in $D_v^T$ is in $W_v$, $D[W_v]$ is connected. Moreover, for each~$k \in \K_v$, it holds~$s_k \in W_v$ and $t_k \notin W_v$. 

If $\delta^+_{P_k}(W) = \delta^+_{Q_k}(W)$ for all $k \in \K'_v$, then it follows that $W = W_v$ and~$\K' = \K'_v$ gives $\LB_\I^w(W, \K') = \LB_{\I^T_v}^w(W, \K'_v) \geq T$ as required. So, suppose there is some commodity $k \in \K'_v$ such that $\delta^+_{P_k}(W) \neq \delta^+_{Q_k}(W)$. By Lemmas \ref{lemma:com_paths} and \ref{lemma:com_paths2}, it follows that there is a commodity $\ell \in \K'_v$ with $t'_{\ell} \in V(B_v)$ and $p(v) \in W_v$. 

Let $W = W_v \cup \{v\}$. Since $p(v) \in W_v$ and $D[W_v]$ is connected, it follows that $D[W]$ is connected. Let $\bar{\K} = \{(b(v t_k), t_k): t_k \in V(A_k)\}$. By definition of $\bar{\K}$ and $B_v$, it follows that $s_k$ is on the unique dipath between $s_{\ell}$ and $v$ for all $k \in \bar{\K}$. Since $D[W]$ is connected and contains both $s_{\ell}$ and $v$, it follows that~$s_k \in W$ for all $k \in \bar{K}$. 

If there is a commodity $k \in \K'_v$ with $t_k = v$, let $k_v$ denote this commodity, and otherwise let $k_v$ denote the empty set. Let $\K' = (\K'_v \cup \bar{\K}) \setminus k_v$. Then $s_k \in W$ for all $k \in \K'$, and $t_k \notin W$ for each $k \in \K'$ since $k_v$ was removed. We also claim that $\delta^+_{P_k}(W) \cap \delta^+_{P_j}(W) = \emptyset$ for any $k, j \in \K'$. If $k \in \K'$ with $t_k \in N_D^+(v)$, then $\delta^+_{P_k}(W) = v t_k$. Otherwise, $t_k \notin V(C_v)$ and $\delta^+_{P_k}(W) = \delta^+_{Q_k}(W) \notin \delta^+_D(v)$. Therefore, the disjointness of the cuts follows.

In order to prove that $\LB^w_\I(W, \K') \geq T$, it remains to argue that the cardinality of $|\K'|$ is at least $(T-1)|W| + 1$. Observe that by definition, if $B_v \neq \emptyset$, then $|A_v| = T$. By induction, we have~$|\K'_v| \geq (T-1)|W_v| + 1$. Therefore, $|\K'| \geq (T-1)|W_v| + T - 1 + 1 = (T-1)|W| + 1$, and so $\LB^w_\I(W, \K') \geq T$ as required. 
\end{proof}

\outtree*

\begin{proof}
By executing Algorithm \ref{alg:out-trees} for all possible target values, we determine the smallest value of $T$ for which Algorithm \ref{alg:out-trees} gives a feasible subgraph $H$ with~$\Delta^+(H) \leq T$. Since Algorithm \ref{alg:out-trees} did not return a feasible subgraph for the target~$T - 1$, there is a witness set $W, \K'$ with~$\LB^w(W, \K') \geq T-1$ (constructible in polytime), certifying that $\Delta^* \geq T-1$. Hence, $\Delta^+(H) \leq \Delta^* + 1$. 
\end{proof}

As previously mentioned in the discussion of lower bounds, there are out-tree instances of \mmspp~where $\max_{W \subseteq N, \K' \subseteq \K} \mathtt{LB}(W, \K') = \Delta^* - 1$. Furthermore, the analysis of Algorithm \ref{alg:out-trees} is tight since there are out-tree instances for which the algorithm does not return a solution with out-degree at most~$T$, even if $\Delta^* = T$.

Consider the following instance in Figure \ref{subfig:broom_gapA}. A solution $H$ with $\Delta^+(H) = 2$ is given in Figure \ref{subfig:broom_gapB}.  
\begin{figure}[htbp]
  \begin{subfigure}{0.45\textwidth}
    \centering
    \includegraphics[width=\textwidth]{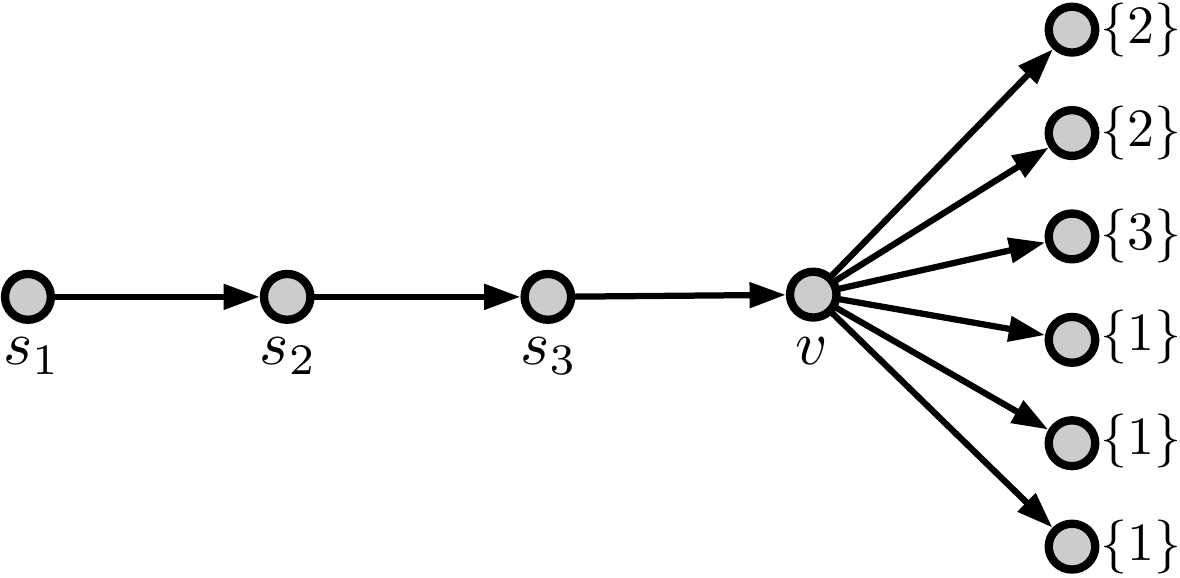}
    \caption{Out-tree instance.}
    \label{subfig:broom_gapA}
  \end{subfigure}
  \hfill
  \begin{subfigure}{0.45\textwidth}
    \centering
    \includegraphics[width=\textwidth]{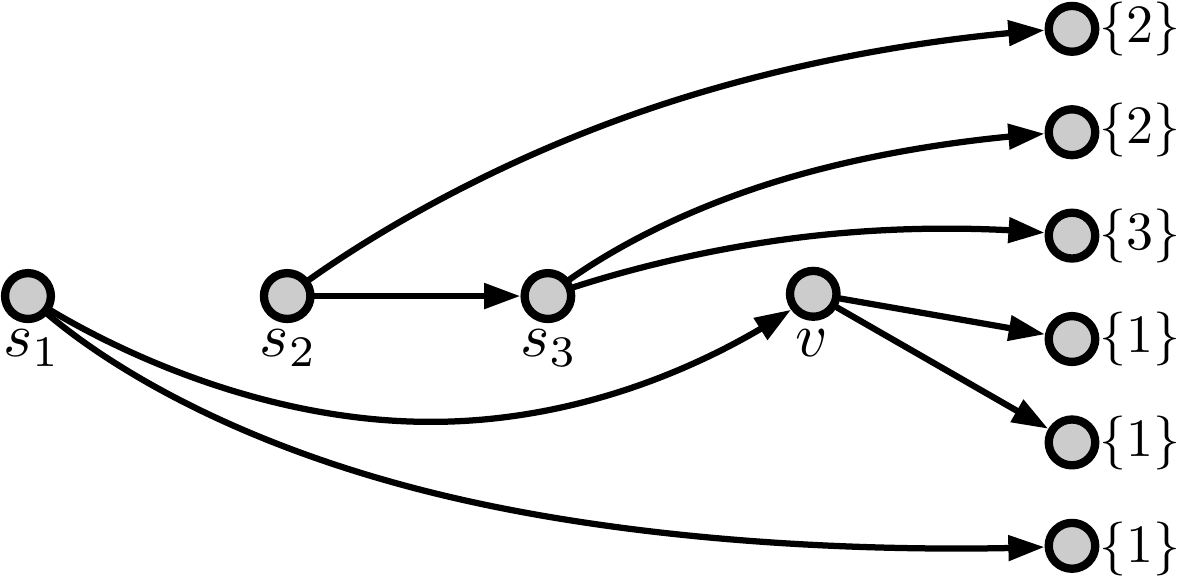}
    \caption{Solution with $\Delta^+(H) = 2$.}
    \label{subfig:broom_gapB}
  \end{subfigure}
  \caption{}
  \label{fig:broom_gap}
\end{figure}
\FloatBarrier
However, when $T = 2$ in the target algorithm, in the first step we generate the subproblem given in Figure \ref{fig:broom_gapC} by contracting node $v$ for the target $T = 2$. The resulting problem does \emph{not} have a feasible solution with max out-degree at most 2. Thus, the algorithm would not produce a feasible solution with max out-degree 2 for the original instance in Figure \ref{fig:broom_gap}.
\begin{figure}[h!]
	\begin{center}
		\scalebox{0.38}{
	\includegraphics{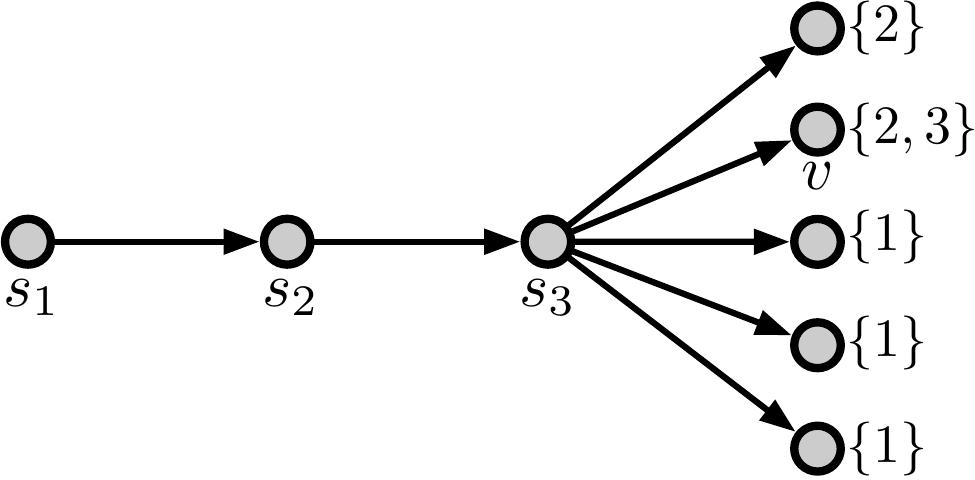}}
		\caption{}
		\label{fig:broom_gapC}
\end{center}
\end{figure} 
\FloatBarrier
It remains open whether or not out-tree instances are NP-hard. In Appendix \ref{app:out_tree_hardness}, we relate the hardness of these instances to that of a natural packing problem.

\section{Arbitrary trees} \label{sec:arbitrary-trees}

In this section, we provide a framework for obtaining  approximation results for arbitrary tree instances. As a first step, we give an efficient 2-approximation when $D$ is a star. Our framework for obtaining an approximation algorithm for arbitrary tree instances builds on the approximability of junction tree instances, a generalization of stars. We also highlight the weakening strength of the witness set construction for more complex graphs.

\subsection{Star instances}
We have the following 2-approximate solution for star instances.
\begin{prop}
Given a star instance, $H = \{s_k t_k: k \in \K\}$ is a feasible solution with $\Delta^+(H) \leq 2 \Delta^*$.  
\end{prop}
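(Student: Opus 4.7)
The plan is to observe that since $D$ is a star with some center $c$, every commodity path $P_k$ uses at most two arcs and passes through $c$, so the shortcut arc $s_k t_k$ always belongs to $\cl(P_k)$. Feasibility of $H=\{s_kt_k:k\in\K\}$ is therefore immediate: the single-arc dipath $s_kt_k$ lies in $H\cap\cl(P_k)$ for every $k\in\K$. Because commodities are distinct, $\deg^+_H(v)=|\T(v)|$ at every node $v$, so the whole task reduces to showing $|\T(v)|\le 2\Delta^*$ for every source $v$.

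Fix an optimal feasible solution $H^*$. When $v=c$, every commodity $(c,t_k)\in\K$ has $P_k=\{ct_k\}$, which forces $ct_k\in H^*$, and so $\Delta^*\ge\deg^+_{H^*}(c)\ge|\T(c)|$, giving the claim for this node. For a non-center source $v$, the path of any commodity $(v,t_k)$ with $t_k\ne c$ is $P_k=\{vc,ct_k\}$, so to route the commodity through $H^*\cap\cl(P_k)$ either the shortcut arc $vt_k$ must lie in $H^*$, or both $vc$ and $ct_k$ must. I would partition $\T(v)\setminus\{c\}$ into the set $A$ of sinks served by a direct shortcut in $H^*$ and the set $B$ of sinks served via $c$, and then read off two lower bounds: $\deg^+_{H^*}(v)\ge |A|$ always, with the stronger bound $\deg^+_{H^*}(v)\ge|A|+1$ whenever $B\neq\emptyset$ or $c\in\T(v)$ (in either case the arc $vc$ is forced into $H^*$), and $\deg^+_{H^*}(c)\ge |B|$.

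Summing these source-side and center-side lower bounds then yields $2\Delta^*\ge|\T(v)|$ in every case: if $B\neq\emptyset$ or $c\in\T(v)$, then combine $\Delta^*\ge|A|+1$ with $\Delta^*\ge|B|$ and use $|A|+|B|+1\ge|\T(v)|$; otherwise $|\T(v)|=|A|\le\Delta^*$ already. The main obstacle is the bookkeeping around the at-most-one commodity $(v,c)$ and the question of when the arc $vc$ is actually forced into $H^*$, since that is precisely what makes the factor of $2$ tight versus slack; once that case analysis is in place, the factor of $2$ drops straight out of pairing the source-side and center-side degree lower bounds that arise from the two available routing options.
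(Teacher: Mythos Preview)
Your argument is correct, but it takes a different route from the paper. The paper invokes the witness-set lower bound of Lemma~\ref{lemma:LB_multi}: for the center~$v$ it uses $W=\{v\}$ and $\K'=\{k:s_k=v\}$ to get $\Delta^*\ge|\T(v)|$, and for a non-center maximizer $s^*$ it uses $W=\{s^*,v\}$ and $\K'=\{k:s_k=s^*\}$ to obtain $\Delta^*\ge\lceil|\T(s^*)|/2\rceil$. You instead fix an optimal solution $H^*$ and argue directly about how each commodity can be routed in $\cl(P_k)$, splitting $\T(v)\setminus\{c\}$ into sinks served by the shortcut versus via the center and reading off degree lower bounds at $v$ and at $c$. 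Your proof is self-contained and avoids the witness-set machinery entirely; the paper's version, by contrast, is shorter precisely because it leans on that machinery, and it has the side benefit of exhibiting an explicit certificate $(W,\K')$ consistent with the framework used in the rest of the paper.
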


\begin{proof}
Let $H = \{s_k t_k: k \in \K\}$. That is, $H$ is the subgraph obtained by routing all the commodities directly. The maximum out-degree of $H$, $\Delta^+(H)$, is equal $\max_{s \in \S}|\T(s)|$. Let $s^* = \argmax_{s \in \S} |\T(s)|$. If $s^* = v$, the center vertex, then~${W = \{v\}}$ and $\K' = \{k \in \K: s_k = v\}$, we obtain the following bound on~$\Delta^*$:
\[ \Delta^* \geq \LB(W, \K') = \left\lceil \frac{|\T(v)| + 1 - 1}{1} \right\rceil = |\T(v)|, \]
which matches the maximum out-degree of $H$. Alternatively, $s^* \neq v$. Then when~$W = \{s, v\}$ and $\K' = \{k \in \K: s_k = s^*\}$, we obtain the following bound on $\Delta^*$:
\[ \Delta^* \geq \LB(W, \K') =\left\lceil \frac{(|\T(s_i)| - 1) + 2 - 1}{2} \right\rceil = \left\lceil \frac{|\T(s_i)|}{2} \right\rceil.\]
Note that we have $|\T(s_i)| - 1$ since it could be that some commodity $k \in \K'$ has source $s^*$ and sink $v$, which falls within $W$. Thus, we have $\Delta_H \leq 2 \Delta^*$ as desired. 
\end{proof}

All of the algorithms presented rely on the lower bound formulation to certify the performance of the algorithm. However, there are instances where the best lower bound of the form provided in Lemma \ref{lemma:LB_multi} is at most $\frac23 \Delta^*$. Thus, arguing that the performance of an approximation algorithm for the star setting is better than a factor 3/2 cannot rely on the current lower bound construction. Future work is to find hardness results for the star setting, and to improve upon the factor 2 algorithm.

Consider the instance of \mmspp~in Figure \ref{fig:star_gap}, where there are $2n$ sinks ($\T$), and $m = {2n \choose n}$ sources ($\S$). Consider the $m$ subsets of $\T$ of size~$n$, and order the subsets arbitrarily as $T_1, T_2, \cdots, T_m$. The commodity set is constructed so that each source must route to one of the $m$ subsets of $n$ sinks. That is, ${\K = \{(s_i, t): i \in [m], t \in T_i\}}$. 

\begin{figure}[h!]
	\begin{center}
		\scalebox{0.5}{
			\includegraphics{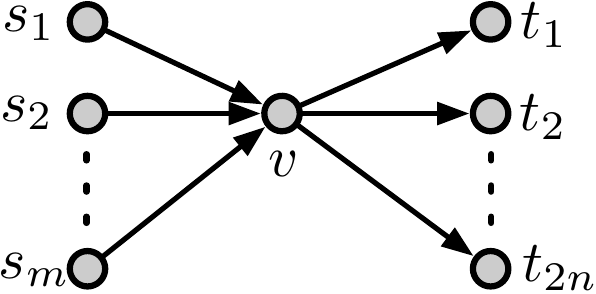}}
		\caption{}
		\label{fig:star_gap}
\end{center}
\end{figure} 
\FloatBarrier

Observe that $H = \{s_k t_k: k \in \K\}$ is one solution with $\Delta^+(H) = n$. Furthermore, if in any feasible solution $H$ we had that $\deg^+_H(v) \leq n-1$, then this would imply that $\deg_H^+(s) \geq n$ for some source $s$, since $v$ did not serve any of its corresponding sinks. Thus, $\Delta^* = n$. Now consider the family of lower bounds from Lemma  \ref{lemma:LB_multi}. It is not hard to see that the best lower bound possible is when~$W = \{s_i, s_j, v\}$ where $T_i \cap T_j = \emptyset$, and $\K'$ is the set of $2n$ commodities with sources in $\{s_i, s_j\}$. This gives a lower bound of $\left\lceil (2n + 1) / 3 \right \rceil$. 

\star*

\subsection{Tree instances}
A tree instance of \mmspp~is a \emph{junction tree instance} if there is some node~$r$ in $D$ such that $r$ is a node in $P_k$ for all $k \in \K$. Suppose we had an algorithm for junction tree instances of \mmspp, $\A$, that returns feasible solutions with max out-degree at most a factor $\alpha$ larger than optimal. In this section, we show how an algorithm for junction tree instances can be used to obtain an approximation algorithm for tree instances. 

Let $\I = (D, \K)$ be a tree instance of \mmspp~and let $\Delta^*$ denote the minimum max out-degree of a feasible solution.  For any node $v$ in $D$, there is a set of commodities, $\K_v$, such that the dipath between the source and sink in~$D$,~$P_k$, includes $v$. That is,
\[ \K_v := \{k \in \K: v \in P_k\}.\]
For each of the remaining commodities not in $\K_v$, the path $P_k$ is fully contained in one of the components in $D - v$. Naturally, the union of a feasible solution to the instance $(D, \K_v)$ and the instance $(D, \K \setminus \K_v)$ is a feasible solution to the instance $\I$. Moreover, the two subproblems have a particular structure: $(D, \K_v)$ is a junction tree instance, and $(D, \K \setminus \K_v)$ is a set of tree instances -- one defined on each component of $D - v$. 

Let $v \in N$ and let $C$ be some connected component in $D-v$. Let  $\K(C)$ denote the set of commodities $k \in \K$ with $s_k, t_k \in V(C)$. That is,
\[ \K(C) := \{k \in \K: s_k, t_k \in V(C)\}. \]
Note that each of the tree instances in $(D, \K \setminus \K_v)$ is the instance~$(C, \K(C))$ for some component $C \in D - v$, and $C$ has fewer nodes than $D$. For each instance~$(C, \K(C))$, we again decompose the instance into a junction tree instance and a set of smaller tree instances. We repeat this process until all sub-instances of $\I$ are either junction tree instances, or defined on digraphs with a single node. 

Suppose each node in $D$ is in at most $\beta$ of the junction tree sub-instances of $\I$. Then, applying the algorithm $\A$ to each of the junction tree instances and taking the union of the solutions gives a feasible solution for $\I$, with max out-degree at most $\alpha \cdot \beta$ times $\Delta^*$. It is not hard to show that the node $v$ in each iteration can be chosen so that $\beta = \log(n)$. Specifically, we select the node $v$ so that each connected component in $D - v$ has at most $|N|/2$ nodes.

We define the following algorithm for tree instances.
\begin{algorithm}[h!]
\setcounter{AlgoLine}{0}
	\DontPrintSemicolon
	\KwIn{A tree instance $\I = (D, \K)$ of \mmspp, where $D = (N, A)$, and an $\alpha$-approximation algorithm, $\A$, for junction tree instances.}
	\If{$|N| = 1$}{
        \Return $(N, \emptyset)$
	}
    Let $v$ be a node in $N$ such that each connected component in $D - v$ contains at most $|N|/2$ nodes.\\
    $H_v \leftarrow \A((D, \K_v))$\\
    Let $\C = \{C_1, C_2, \ldots, C_{\ell}\}$ be the set of connected components in $D - v$.\\
    \For{$C \in \C$}{
        $H_v^C \leftarrow \mathtt{tree}$-$\mathtt{alg}((C, \K(C))$
    }
    $H \leftarrow H_v \bigcup_{C \in \C} H_v^C$\\
    \Return $H$
	\caption{$\mathtt{tree}$-$\mathtt{alg}(\I, \A)$}\label{alg:trees}
\end{algorithm}
\FloatBarrier

\begin{prop}
Let $\I = (D, \K)$ be a tree instance of \mmspp. Given an $\alpha$-approximation algorithm for junction tree instances, Algorithm \ref{alg:trees} returns a solution $H$ for $\I$ with $\Delta^+(H) \leq \alpha \log(n) \Delta^* $
\end{prop}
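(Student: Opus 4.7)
The plan is to prove the proposition by induction on $n = |N|$, establishing the stronger statement that on any tree instance $(D', \K')$ of \mmspp, Algorithm~\ref{alg:trees} returns a feasible subgraph $H$ with $\Delta^+(H) \leq \alpha \log_2(|V(D')|)\, \Delta^*(D', \K')$. The base case $|V(D')| = 1$ is trivial: the algorithm returns the isolated node with no arcs, and $\K' = \emptyset$ so $\Delta^*(D', \K') = 0$.

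For the inductive step, I would first verify feasibility of the returned $H$. Every $k \in \K'$ satisfies exactly one of two cases: either $v \in P_k$, so $k \in \K'_v$ and $H_v$ contains an $s_k, t_k$-dipath by correctness of $\A$; or $P_k$ avoids $v$, in which case $P_k$ lies inside a unique component $C$ of $D' - v$, so $k \in \K'(C)$ and the recursive call $H_v^C$ routes it by the inductive hypothesis. Next, I would argue that the optimum of every sub-instance produced by the algorithm is at most $\Delta^*(D', \K')$. For the junction tree call $(D', \K'_v)$ this is immediate, since shrinking the commodity set only makes the problem easier. For a component $C$, every $k \in \K'(C)$ has $P_k \subseteq C$, hence $\cl(P_k) \subseteq \cl(C)$; therefore the restriction of any optimal feasible subgraph for $(D', \K')$ to arcs inside $\cl(C)$ is feasible for $(C, \K'(C))$ with max out-degree no larger, giving $\Delta^*(C, \K'(C)) \leq \Delta^*(D', \K')$.

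With these two facts in hand, the per-node degree bound follows by a clean additive combination. Fix $u \in V(D')$. If $u = v$, then $u$ appears in no recursive subproblem, so $\deg^+_H(u) = \deg^+_{H_v}(u) \leq \alpha \Delta^*(D', \K'_v) \leq \alpha \Delta^*(D', \K')$. Otherwise $u$ lies in a unique component $C$, and for $C' \neq C$ the solution $H_v^{C'}$ is a subgraph of $\cl(C')$, so $u$ has no departing arcs in $H_v^{C'}$. Hence $\deg^+_H(u) = \deg^+_{H_v}(u) + \deg^+_{H_v^C}(u)$. By the centroid choice of $v$ we have $|V(C)| \leq |V(D')|/2$, and the inductive hypothesis yields $\deg^+_{H_v^C}(u) \leq \alpha \log_2(|V(D')|/2)\, \Delta^*(C, \K'(C))$. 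Combining with the bound on $\deg^+_{H_v}(u)$ and the sub-instance optimality bound,
\[
\deg^+_H(u) \leq \alpha \Delta^*(D', \K') + \alpha \log_2\!\bigl(|V(D')|/2\bigr)\, \Delta^*(D', \K') = \alpha \log_2(|V(D')|)\, \Delta^*(D', \K').
\]

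The main obstacle is the sub-instance optimality claim for components $C$: one must carefully exploit the tree structure to ensure that the restriction of a feasible solution for $(D', \K')$ to $\cl(C)$ remains feasible for $(C, \K'(C))$. This hinges on the fact that each $P_k$ with $s_k, t_k \in V(C)$ is fully contained in $C$ (a property specific to tree instances, where paths are unique). Once this step is in place, the remainder of the argument is a standard centroid-decomposition induction in which the additive telescoping $\alpha \Delta^* + \alpha \log_2(n/2)\, \Delta^* = \alpha \log_2(n)\, \Delta^*$ produces the advertised bound.
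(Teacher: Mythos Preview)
Your proposal is correct and follows essentially the same approach as the paper: both arguments rest on (i) feasibility via the tree structure ensuring each $P_k$ lies entirely in one sub-instance at every level, (ii) sub-instance optimality, and (iii) the centroid choice guaranteeing that each node participates in at most $\log n$ junction-tree calls. You phrase this as an explicit induction with the telescoping identity $\alpha\Delta^* + \alpha\log_2(n/2)\,\Delta^* = \alpha\log_2(n)\,\Delta^*$, whereas the paper simply notes that each node is contained in at most $\log n$ of the junction-tree subproblems and multiplies; these are the same argument unrolled differently.
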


\begin{proof}
We first argue that the solution returned is feasible. Let $k \in \K$. Observe that if $k \notin \K_v$, then $P_k$ must be fully contained in a component $C$ in $D-v$. Thus, at any point in the algorithm when a tree instance is decomposed into a junction tree instance and a set of smaller tree instances, one of the instances contains the path $P_k$. 

Furthermore, it is clear that the minimum max out-degree of any subproblem~$\I' = (D', \K')$ generated by the algorithm provides a lower bound on $\Delta^*$, since any solution to $\I$ must also contain a subgraph $H' \subseteq \cl(D')$ that is a solution for $\I'$. Finally, each node is contained in at most $\log (n)$ of the subproblems on which algorithm $\A$ is executed. Therefore, $\Delta^+(H) \leq \alpha \log(n) \Delta^*$ as required. 
\end{proof}
Thus, we have proven the following Theorem. 
\junction*

\section{Conclusion}
\mnew{In this paper, we propose a mathematical model to determine an allocation of sort points that provides a feasible sortation plan. Even in the case where the underlying undirected physical network forms as tree, it is NP-hard to determine whether a feasible sort point allocation exists. We focus on the natural objective of minimizing the maximum number of sort points required at a warehouse, and define the directed min-degree problem of \mmspp.  

We present a simple and efficient combinatorial algorithm for solving single-source tree instances of \mmspp~with a quadratic speed-up over previous algorithms. We also present a fast combinatorial additive $1$-approximation algorithm for the out-tree setting. Moreover we prove that our analysis is tight by exhibiting an instance where the algorithm returns a solution that has max out-degree one greater than optimal. We also show that there is an inherent weakness of the family of lower bounds considered since there is a gap of one between the best lower bound and the minimum max out-degree for out-tree instances. For star instances of \mmspp, there are instances for which this gap is a multiplicative factor of $3/2$. An improvement to the approximation guarantee for the star setting, as well as a non-trivial approximation algorithm for junction tree instances remain open problems. A challenging open problem lies in finding good approximation algorithms for arbitrary graph structures. 

It remains open whether or not out-tree instances are NP-hard. In Appendix \ref{app:out_tree_hardness}, we relate the hardness of these instances to that of a natural packing problem. Furthermore, due to the limited understanding of optimizing for sort point problems, we also believe alternative objectives are of interest, such as varying degree constraints on nodes, and determining the minimum number of sort points required in any feasible solution. These settings are of high relevance in applications.}

\medskip 
\noindent {\bf Acknowledgements\ } We thank Sharat Ibrahimpur and Kostya Pashkovich for valuable discussions on lower bounds and the link to matroid intersection. 

\bibliographystyle{abbrv}
\bibliography{references.bib}

\appendix
\section{Single-source (missing proofs from Section \ref{subsec:single_out_trees})}\label{app:single_source}

In this section, we expand on the overview provided in Section \ref{subsec:single_out_trees}, and provide the full results for the single-source setting. 

We first show that this setting reduces to the problem of min-degree arborescence in a directed acyclic graph, which can be solved by matroid intersection in~$O(n^3 \log n)$ time \cite{schrijver2003combinatorial,ChakrabartyLS0W19} or a combinatorial algorithm in~$O(nm\log n)$ time~\cite{Yao}. We then present, in detail, the simple and fast combinatorial algorithm for single-source tree instances of \mmspp~that offers a quadratic speed-up over previous algorithms. While our algorithm has similarities to the one presented in \cite{Yao}, we can exploit the structure of transitive closure to reduce the number of arc swaps. We show that an efficient implementation of our approach gives a quadratic speed-up.

\begin{lemma}
Single-source tree instances of \mmspp~reduce to the problem of finding a min-degree arborescence in a directed acyclic graph. 
\end{lemma}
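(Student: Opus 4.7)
The plan is to exhibit a polynomial-time reduction from single-source tree instances of \mmspp~to the min out-degree spanning arborescence problem on a DAG. Given an instance $\I = (D, \K)$ with unique source $s$, I would first observe that $D$ is itself an out-arborescence rooted at $s$: if any node had two entering arcs, the two tails together with the unique undirected $s$-paths to them would produce a cycle in the tree $D$. Consequently $\cl(D)$ is a DAG, and I would take the reduction target to be: find a spanning out-arborescence of $G := \cl(D)$ rooted at $s$ of minimum max out-degree.

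The claim to prove is that $\Delta^*(\I)$ equals the optimum of this arborescence instance. The key structural fact I would use is that, because $D$ is an out-tree, any $s,t_k$-dipath $(x_0, x_1, \ldots, x_\ell)$ in $\cl(D)$ forces every $x_i$ to lie on $P_k$: each hop is an ancestor--descendant pair in $D$, and concatenating their unique $D$-paths must spell out exactly $P_k$. Hence every such dipath lies in $\cl(P_k)$, which immediately shows that any spanning arborescence of $G$ rooted at $s$ is a feasible subgraph for $\I$. This gives one inequality, namely that the arborescence optimum is at least $\Delta^*$.

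For the reverse inequality, I would start with a feasible $H$ of max out-degree $d$, taken to be minimal in the sense that every arc lies on some $s,t_k$-dipath. Using acyclicity of $\cl(D)$, a short re-routing argument (picking a topological order and swapping to the other in-arc) shows that no node of $H$ has two entering arcs, so $H$ is an out-arborescence on its reached set $R$ together with isolated vertices of out-degree $0$. I would then extend $R$ one node at a time: for any unreached $v$, some descendant leaf $t$ of $v$ in $D$ is a sink (by the standing assumptions), and the $s,t$-dipath in $H$ must enter $v$'s subtree through a shortcut arc $(u,w)$ where $w$ is a strict descendant of $v$ in $D$ (otherwise $v \in R$). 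Since $v$ lies on the $D$-path from $u$ to $w$, both $(u,v)$ and $(v,w)$ lie in $\cl(D)$; replacing $(u,w)$ by these two arcs leaves $\deg^+(u)$ and $\deg^+(w)$ unchanged, raises $\deg^+(v)$ from $0$ to $1$, and adds $v$ to the reached arborescence. Iterating yields a spanning arborescence of $G$ with max out-degree at most $d$.

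The main obstacle will be cleanly verifying the invariants used in the extension step: that in a minimal feasible $H$ every unreached vertex is isolated and every reached non-source vertex has in-degree exactly $1$. Both follow from minimality together with acyclicity of $\cl(D)$, but the rerouting argument needs to be spelled out so that each swap visibly preserves both feasibility and the arborescence structure on the reached set. Once this is in place, invoking the algorithm of~\cite{Yao} or matroid intersection on $G$ solves the single-source instance in polynomial time, completing the reduction.
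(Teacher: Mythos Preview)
Your proposal is correct and follows essentially the same approach as the paper. Both arguments hinge on the identical arc-splitting step: for an unreached node $v$, locate an arc $uw$ of the current solution that jumps over $v$ into $v$'s subtree in $D$, and replace it by $uv$ and $vw$ without increasing the max out-degree. The paper states the key structural fact (that every $s,t_k$-dipath in $\cl(D)$ already lies in $\cl(P_k)$) earlier in the text and leaves the ``result is an arborescence'' step implicit, whereas you spell out both the feasibility direction and the in-degree~$\le 1$ rerouting explicitly; these are refinements of presentation rather than a different route.
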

\begin{proof}
Let $\I = (D, \K)$ be a tree instance of \mmspp~with a single source,~$s$. Note that $D$ is an out-tree rooted at $s$, and $\cl(D)$ is a directed acyclic graph. We may assume that each leaf node in $D$ is a sink for some commodity, as otherwise this node could be removed. 

It suffices to prove that there is an optimal solution that contains an $s, v$-dipath for each node $v \in N$. Let $H$ be an arbitrary optimal solution. Certainly if~$v \in \T$, then an $s, v$-dipath is enforced by feasibility. So consider some node~${v \notin \T}$ and suppose there is no $s, v$-dipath in $H$. Then we may assume $v$ has no out-arcs in $H$ as otherwise they could be removed while maintaining feasibility. 

Let $T_v$ be the subtree in $D$ rooted at $v$. Since each leaf is a sink for some commodity, there is some arc $e = uw$ in $\delta^+_H(N \setminus V(\T_v))$. We obtain an optimal solution with an $s, v$-dipath by removing arc $uw$ and adding arcs $uv$ and $vw$. Repeating this process gives an optimal solution that is an $s$-arborescence.
\end{proof}

In a directed acyclic graph, the min-degree arborescence problem can be solved efficiently with matroid intersection for two partition matroids~\cite{schrijver2003combinatorial}. Given a bound $T$ on the maximum out-degree, the first partition matroid ensures each node has in-degree at most one, and the second ensures each node has out-degree at most $T$. We then run matroid intersection for $\log(n)$ many candidate values of $T$ to find an optimal solution. More precisely, for any candidate value $T$, we find a maximum-cardinality set (of edges) that is an independent set in both matroids. Since the base graph is acyclic, a maximum-cardinality independent set is an arborescence of degree bounded by $T$.

In the remainder of this section, we describe the simple combinatorial algorithm that solves single-source instances. First, we show that in the single-source setting, we can simplify the construction of witness sets. This set of lower bounds allows us to look at the base graph $D$ rather than keeping track of the commodity set. Specifically, we show that we can replace $|\K'|$ with $|\delta^+_D(W)|$ in Corollary~\ref{lemma:LB_multi} by recalling that all nodes $v$ in $D$ with $\delta_D^+(v) = \emptyset$ must be the sink of some commodity. 

\begin{corollary}\label{cor:LB_single_simple}
Suppose $\I = (D, \K)$ is a tree instance and $|\mathcal{S}| = 1$. Let $W \subseteq N$ such that $s \in W$ and $D[W]$ is a directed tree, and suppose $\delta^+_D(W) \neq \emptyset$. Then 
\[\Delta^* \geq \left\lceil \frac{|\delta^+_D(W)| + |W| - 1}{|W|} \right\rceil.\]
\end{corollary}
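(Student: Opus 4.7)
The plan is to derive the result as a direct specialization of Lemma \ref{lemma:LB_multi}. In the single-source setting, any non-empty $\K' \subseteq \K$ has $\S(\K') = \{s\}$, so $|\S(\K')| = 1$, and the lower bound of Lemma \ref{lemma:LB_multi} simplifies to $\lceil (|\K'| + |W| - 1)/|W| \rceil$. Thus it suffices to construct a commodity set $\K' \subseteq \K$ with $|\K'| = |\delta^+_D(W)|$ that satisfies the hypotheses of Lemma \ref{lemma:LB_multi}: each commodity has its source in $W$ and its sink outside $W$, and the cuts $\delta^+_{P_k}(W)$ are pairwise disjoint across $\K'$.

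Next, I would exploit the out-tree structure. Because $|\S| = 1$ and $\I$ is a tree instance, $D$ is an out-tree rooted at $s$. Together with $s \in W$ and connectedness of $D[W]$, this forces $W$ to be closed under ancestors: for any $v \in W$, the unique $s,v$-dipath in $D$ is contained in $W$. In particular, for every arc $vw \in \delta^+_D(W)$, the entire out-subtree $T_w$ rooted at $w$ is disjoint from $W$.

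The construction of $\K'$ then proceeds as follows. For each arc $a = vw \in \delta^+_D(W)$, pick any leaf $t_a$ in $T_w$; by the standing assumption that every leaf of $D$ serves as a sink, there is a commodity $k_a \in \K$ with source $s$ and sink $t_a$. Let $\K' = \{k_a : a \in \delta^+_D(W)\}$. By construction $s_{k_a} = s \in W$ and $t_{k_a} \in T_w$, so $t_{k_a} \notin W$. Moreover, the unique $s, t_a$-dipath $P_{k_a}$ in the out-tree $D$ leaves $W$ precisely at arc $a$: it cannot leave $W$ before $a$ because its initial segment lies in the ancestor-closed set $W$ up through $v$, and once it leaves $W$ it cannot return (otherwise the ancestor-closure of $W$ would be violated). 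Hence $\delta^+_{P_{k_a}}(W) = \{a\}$, which gives both $|\K'| = |\delta^+_D(W)|$ and the required pairwise disjointness.

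Applying Lemma \ref{lemma:LB_multi} with this $W$ and $\K'$ yields
\[
\Delta^* \;\geq\; \left\lceil \frac{|\K'| + |W| - |\S(\K')|}{|W|} \right\rceil \;=\; \left\lceil \frac{|\delta^+_D(W)| + |W| - 1}{|W|} \right\rceil,
\]
as desired. There is no serious obstacle here; the only subtlety worth stating carefully is that ancestor-closure of $W$ in the out-tree $D$ ensures each chosen path $P_{k_a}$ crosses the cut exactly once, which is what delivers the disjointness condition needed by Lemma \ref{lemma:LB_multi}.
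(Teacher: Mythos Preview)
Your proposal is correct and follows essentially the same route as the paper: both construct $\K'$ by selecting, for each arc $e\in\delta^+_D(W)$, a commodity whose path uses $e$ (via the standing assumption that every leaf is a sink), and then invoke Lemma~\ref{lemma:LB_multi} with $|\S(\K')|=1$. Your write-up is slightly more explicit about the ancestor-closure of $W$ and why each $P_{k_a}$ crosses the cut exactly at $a$, but the underlying argument is identical.
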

\begin{proof}
It suffices to show that for any such node set $W$, there is a set of commodities $\K'$ with cardinality $|\delta^+_D(W)|$ such that for each $k \in \K'$, $t_k \notin W$, and for all distinct $k, j \in \mathcal{K}'$, $\delta^+_{P_k}(W) \cap \delta^+_{P_j}(W) = \emptyset$. 

Consider an arc $e \in \delta^+_D(W)$. Since each node $v$ with $\deg^+_D(v) = 0$ is a sink for some commodity $k \in \K$, there must be a commodity $k$ such that $e \in P_k$, and more specifically, $e = \delta^+_{P_k}(W)$. For each arc $e \in \delta^+_D(W)$, let $k_e$ denote an arbitrary commodity such that $e = \delta^+_{P_{k_e}}(W)$, and let $\K' = \bigcup_{e \in \delta^+_D(W)} \{k_e\}$. Observe that for each pair of edges $e$ and $f$ in $\delta^+_D(W)$, $\delta^+_{P_{k_e}}(W) \cap \delta^+_{P_{k_{f}}}(W) = \emptyset$. Thus, $\K'$ satisfies the desired conditions. 
\end{proof}

Let $\I = (D, \K)$ be an instance of \mmspp. Recall that for each~${W \subseteq N}$ and~$\K' \subseteq \K$,
\begin{align*}
\mathtt{LB}_\I(W, \K') & := \begin{cases}
\left\lceil \frac{|\K'| + |W| - 1}{|W|} \right\rceil & \mbox{if $W, \K'$ satisfy the conditions of Lemma \ref{lemma:LB_multi} for $\I$}  \\
0 & \mbox{otherwise}
\end{cases} 
\end{align*}
For $W \subseteq N$ and $\K' \subseteq \K$, we define the following function. 
\begin{align*}
\mathtt{LB}_\I(W) & := \begin{cases}
\left\lceil \frac{|\delta^+_D(W)| + |W| - 1}{|W|} \right\rceil & \mbox{if $W$ satisfies the conditions of Corollary  \ref{cor:LB_single_simple} for $\I$}  \\
0 & \mbox{otherwise}
\end{cases}
\end{align*}
Again, when the instance $\I$ is clear from context, we drop the subscript. Observe that the two families of lower bounds have equal strength since for any choice of~${W \subseteq N}$, the subset $\K'$ that maximizes the value of $\mathtt{LB}(W, \K')$ has size $|\delta^+(W)|$. That is, 
\[ \max_{W \subseteq N, \K' \subseteq \K} \mathtt{LB}(W, \K') = \max_{W \subseteq N} \mathtt{LB}(W).\] 
We will prove in Proposition \ref{thm:single_source} that for single-source instances, $\Delta^*$ is equal to $\max_{W \subseteq N} \mathtt{LB}(W)$. Moreover, we will prove that $\argmax_{W \subseteq N} \mathtt{LB}(W)$ can be found in polynomial time as a byproduct of a exact polytime algorithm for \mmspp~in the single-source setting. 

\medskip \noindent {\bf The single-source algorithm\ }
In the single-source setting, $D$ is a directed out-tree rooted at the source $s$. The algorithm can be described as follows. We begin with the feasible subgraph $H_0$ set to $D$. Let $H_i$ denote the feasible subgraph obtained in iteration $i$. In iteration $i$, we identify a node in $H_{i-1}$ with the highest out-degree, denoted $v^*$. We then attempt to shift an arc $v^*w$ in $H_{i-1}$ to instead depart a node along the $s, v^*$-dipath in $D$ with out-degree at most $\Delta^+(H_{i-1}) - 2$ in $H_{i-1}$. If such a node exists, we let $u$ denote the nearest such node to $v^*$ in $D$, and define $H_i$ as the subgraph obtained from $H_{i-1}$ by replacing arc $v^* w$ with arc $u w$. If no such node exists, the algorithm terminates with $H = H_{i-1}$. This procedure is restated in Algorithm \ref{alg:single_source}, where $P_{uv}$ is used to denote the unique dipath in $D$ from $u$ to $v$. In the following algorithm, we define a \emph{topological ordering}, $\preceq$, as an ordering of the nodes so that for any arc $uv \in A$, $u \prec v$.

\begin{algorithm}[h!]\label{alg:single_source}
\setcounter{AlgoLine}{0}
	Assign a topological ordering $\preceq$ to the nodes \\
	$H_0 \leftarrow D$\\
	$i = 1$\\
	\While{True}{
	    Let $v^* \in \argmax_{v \in N} \deg^+_{H_{i-1}}(v)$\\
	    Let $v^* w \in \delta^+_{H_{i-1}}(v^*)$\\
	    $R \leftarrow \{u \in V(P_{s v^*}): \deg_{H_{i-1}}^+(u) \leq \deg_{H_{i-1}}^+(v^*) - 2\}$\\
	    
	    \If{$R \neq \emptyset$}{
	        Let $u \in R$ such that $y \preceq u$ for all $y \in R$\\
	        $H_i \leftarrow H_{i-1} \setminus \{v^* w\} \cup \{u w\}$}
	    \Else{
	        \Return $H = H_{i-1}$
	    }
		$i = i + 1$\\
}
	\caption{Local search algorithm for the single-source setting.}\label{alg:single_source}
\end{algorithm}
\FloatBarrier

Figure \ref{fig:app_singlesource_ex} demonstrates the steps of the algorithm when the base graph $D$ is as provided in Figure \ref{subfig:app_a}, and $\T$ is the set of leaves. The square node is the selected max out-degree node in each iteration. Observe that in iteration 3, the set $R$ is empty and so the algorithm terminates. 

\begin{figure}[h!]
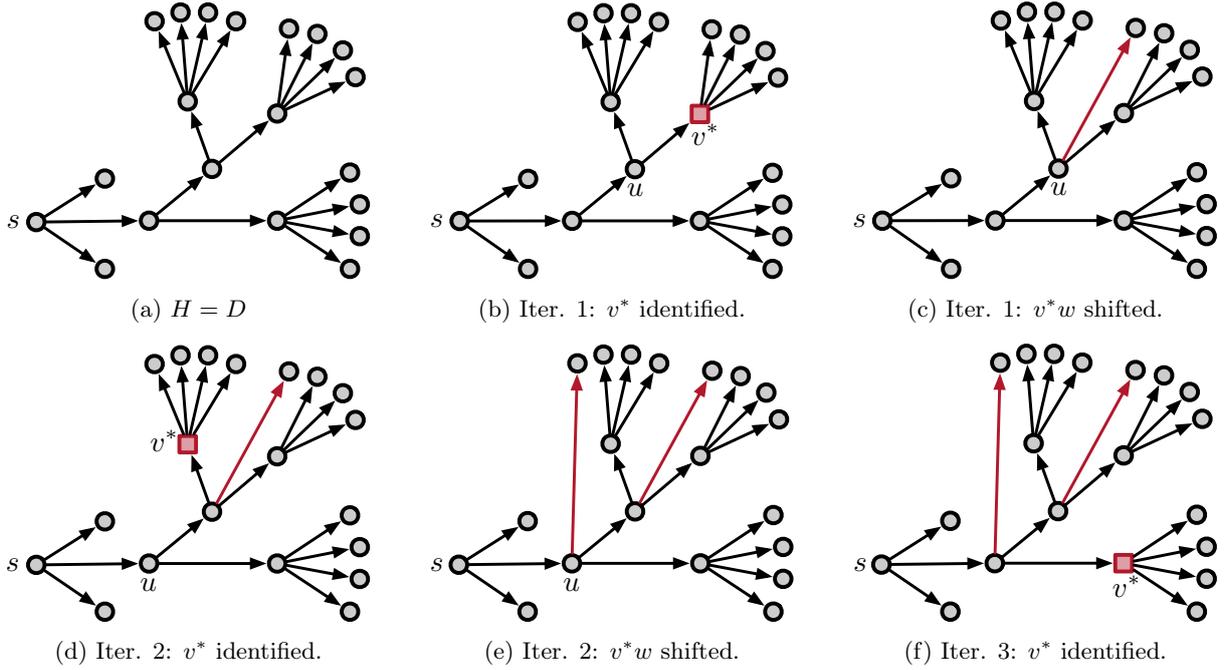

  \begin{subfigure}{0.3\textwidth}
    \centering
    \includegraphics[width=\textwidth]{images/local_search1}
    \caption{$H = D$}
    \label{subfig:app_a}
  \end{subfigure}
  \hfill
  \begin{subfigure}{0.3\textwidth}
    \centering
    \includegraphics[width=\textwidth]{images/local_search1A}
    \caption{Iter. 1: $v^*$ identified.}
    \label{subfig:b}
  \end{subfigure}
  \hfill
  \begin{subfigure}{0.3\textwidth}
    \centering
    \includegraphics[width=\textwidth]{images/local_search2}
    \caption{Iter. 1: $v^*w$ shifted.}
    \label{subfig:c}
  \end{subfigure}

  \medskip

  \begin{subfigure}{0.3\textwidth}
    \centering
    \includegraphics[width=\textwidth]{images/local_search2A}
    \caption{Iter. 2: $v^*$ identified.}
    \label{subfig:d}
  \end{subfigure}
  \hfill
  \begin{subfigure}{0.3\textwidth}
    \centering
    \includegraphics[width=\textwidth]{images/local_search3}
    \caption{Iter. 2: $v^*w$ shifted.}
    \label{subfig:e}
  \end{subfigure}
  \hfill
  \begin{subfigure}{0.3\textwidth}
    \centering
    \includegraphics[width=\textwidth]{images/local_search3A}
    \caption{Iter. 3: $v^*$ identified.}
    \label{subfig:f}
  \end{subfigure}
  \caption{Execution of local search algorithm for single-source setting.}
  \label{fig:app_singlesource_ex}
\end{figure}
\FloatBarrier

The following observations are used in arguing for the efficiency and optimality of the local search algorithm.
\begin{enumerate}
	\item[(P1)] If a vertex $u$ has $\deg_{H_i}^+(u) \geq \Delta^+(H_i) - 1$ in iteration $i$, then in each subsequent iteration $j \geq i$, $\deg_{H_j}^+(u) \geq \Delta^+(H_j) - 1$;
	\item[(P2)] Subgraph $H_i$ is an $s$-arborescence in each iteration.
\end{enumerate}
The first statement holds since in each iteration, the only node that decreases in out-degree is the identified max out-degree node. Roughly, (P1) states that once a node has high out-degree relative to other nodes in some iteration, it may decrease in out-degree, but it remains a high out-degree node relative to the other nodes in each iteration. The second observation follows from the fact that~$D$ is directed tree, and in each iteration $H$ remains a single connected component without increasing the number of arcs.

\begin{prop}\label{thm:single_source}
Algorithm \ref{alg:single_source} returns a feasible subgraph $H$ with $\Delta^+(H) = \Delta^*$ given an instance with a single source in $O(n^2 \log n)$ time.
\end{prop}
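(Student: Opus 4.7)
The plan targets three claims: feasibility of every iterate $H_i$ as an $s$-arborescence, optimality of the terminal $H$ via an explicit witness set, and a total runtime of $O(n^2 \log n)$.

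Feasibility is shown inductively. The base $H_0 = D$ is trivially feasible. Each swap $v^*w \to uw$ with $u$ a $D$-ancestor of $v^*$ preserves the in-degree-one property at $w$, so $H_i$ remains an $s$-arborescence. Moreover, $uw \in \cl(P_k)$ for every commodity $k$ with $w \in P_k$ (since $u \in V(P_k)$), so concatenating the $s, u$-dipath of $H_i$ with $uw$ and the portion of $P_k$ from $w$ to $t_k$ realizes each such commodity. Termination follows from the strict lexicographic decrease of the sorted-nonincreasing out-degree sequence at each iteration, since $\deg^+(v^*)$ drops by one while $\deg^+(u)$ rises to at most $\deg^+(v^*) - 1$.

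For optimality, let $T = \Delta^+(H)$ at termination, set $\hat V = \{v : \deg^+_H(v) = T\}$, and pick $v^* \in \hat V$ that is $D$-deepest, i.e., no proper $D$-descendant of $v^*$ lies in $\hat V$. Put $W = V(P_{sv^*})$. By the termination condition at $v^*$, every proper $D$-ancestor of $v^*$ has $H$-out-degree at least $T - 1$, so $\sum_{u \in W}\deg^+_H(u) \geq (|W|-1)(T-1) + T = |W|(T-1) + 1$. Since $W$ is $D$-ancestor-closed, every in-arc of a non-root node of $W$ in $H$ comes from within $W$, so $|H[W]| = |W|-1$ and $|\delta^+_H(W)| \geq |W|(T-2) + 2$. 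The key structural step, which I expect to be the main obstacle, is to show $|\delta^+_D(W)| = |\delta^+_H(W)|$; equivalently, each $D$-subtree $C$ hanging off $W$ is entered by exactly one $H$-arc from $W$, namely the one whose head is the root $r_C$ of $C$. I will argue this by a history-based exchange: swaps only move in-arcs upward along the $D$-ancestor chain, so if some $v \in C \setminus \{r_C\}$ had ever had its $H$-in-arc shifted into $W$, then at the time of that shift the destination node $u \in W$ had out-degree at most the then-current maximum minus two; combined with the terminal degrees along $P_{sv^*}$, this would expose an improving swap at $v^*$, contradicting termination. Given the equality, $|\delta^+_D(W)| \geq (T-2)|W| + 2$, so by Corollary \ref{cor:LB_single_simple}, $\LB(W) \geq T$ and thus $\Delta^* \geq T$.

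For the runtime, each iteration obtains a max-degree node from a max-heap keyed on out-degrees in $O(\log n)$ time, walks up the $D$-path from $v^*$ toward $s$ to find the nearest ancestor with out-degree at most $T - 2$ in $O(n)$ time naively, and updates one arc and the heap. The number of iterations is polynomially bounded by a refinement of the lex-ordering argument that tracks how the multiset of top out-degrees drops; a careful count yields $O(n \log n)$ iterations, giving total runtime $O(n^2 \log n)$.
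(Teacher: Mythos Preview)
Your optimality argument has a genuine gap: the claim that $|\delta^+_D(W)| = |\delta^+_H(W)|$ (or even $|\delta^+_D(W)| \geq |\delta^+_H(W)|$) is false for $W = V(P_{sv^*})$ alone. Take $s\to a$, $a\to b,c$, $b\to t_1,\dots,t_{10}$, $c\to u_1,\dots,u_5$. Running the algorithm (breaking ties toward $a$, then $b$) terminates with out-degrees $s{:}4$, $a{:}4$, $b{:}5$, $c{:}5$, where $s$ and $a$ now carry several shortcut arcs into $\{t_1,\dots,t_{10}\}$. Choosing $v^*=c$ is valid under your ``no max-degree descendant'' rule, and every node on $P_{sc}$ indeed has degree $\geq T-1$. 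But with $W=\{s,a,c\}$ one gets $|\delta^+_H(W)| = 11$ while $|\delta^+_D(W)| = 6$, so $\LB(W)=\lceil 8/3\rceil = 3 < 5 = \Delta^+(H)$. Your history sketch does not produce a contradiction here: the $t_i$'s had their in-arcs shifted from $b$ into $a,s\in W$, yet at the time of those shifts $a$ and $s$ had low degree and only later filled up to $T-1$; nothing about the terminal state of $P_{sc}$ is violated. A secondary issue is that the ancestor-degree guarantee you invoke holds only for the $v^*$ at which the algorithm actually terminated, which need not be your deepest choice.

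The paper closes this gap by \emph{growing} $W$. Starting from $W=V(P_{sv^*})$ for the terminating $v^*$, it repeatedly adds $V(P_{su})$ whenever some swap in the run replaced an arc $uw$ (with $u\notin W$) by $vw$ (with $v\in W$). This closure guarantees, for every $xw\in\delta^+_H(W)$, that the $D$-parent of $w$ also lies in $W$, giving an injection into $\delta^+_D(W)$ and hence $|\delta^+_D(W)|\geq|\delta^+_H(W)|$. The added nodes were each the max-degree node at some iteration, and the paper's invariant (P1)---once $\deg^+_{H_i}(u)\geq\Delta^+(H_i)-1$, this persists---ensures all of them still have out-degree at least $\Delta^+(H)-1$, so the degree-sum inequality survives the growth of $W$. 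In the example above this grows $W$ to $\{s,a,b,c\}$, yielding $\LB(W)=\lceil 18/4\rceil=5$. Your feasibility and runtime outlines are fine; the paper's iteration bound is exactly $\sum_{\Delta\geq 2}\lfloor (n-1)/\Delta\rfloor = O(n\log n)$ since the number of degree-$\Delta$ nodes is at most $(n-1)/\Delta$ and strictly decreases while the maximum stays at $\Delta$.
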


\begin{proof}
First, we argue that the algorithm terminates in polynomial time. 

In each iteration, the out-degree of some maximum out-degree vertex is reduced by 1. Given a maximum out-degree of \(\Delta\), there are at most \(\lfloor \frac{n}{\Delta} \rfloor\) iterations until the maximum out-degree is reduced to \(\Delta-1\).
Since it holds \(\Delta\in[2,n-1]\) (unless the initial maximum out-degree is equal to 1), the total number of iterations is upper bounded by \[\sum\limits_{\Delta=2}^{n-1} \left\lfloor \frac{n-1}{\Delta} \right\rfloor = O(n\log n).\]
Since each iteration can be implemented in $O(n)$ time, the algorithm runs in~$O(n^2 \log n)$ time.

It remains to argue that the subgraph $H$ produced by the algorithm is both feasible and $\Delta^+(H) = \Delta^*$. The feasibility of $H$ follows from the observation that in each iteration, for each node $v \in N$, the current subgraph contains an~$s, v$-dipath. 

We now prove that $\Delta^+(H) = \Delta^*$, by presenting a witness set $W$ such that $\mathtt{LB}(W) = \Delta^+(H)$. Let $v^*$ be a max out-degree node in $H$. We iteratively construct $W$ as follows. First, $W = V(P_{sv^*})$. Note that for all $v \in V(P_{sv^*})$, $\deg_{H}^+(v) \geq \Delta^+(H)-1$ since the algorithm terminated. Then, while there is a node $u \notin W$ such that in some iteration an arc $uw$ was exchanged for an arc~$vw$ where $v \in W$, we add $V(P_{su})$ to $W$. Note that if such an exchange of arcs occurred, $u$ was a max out-degree node in some iteration $i$, and all nodes along the $v, u$-dipath, excluding $v$, had out-degree at least~$\Delta^+(H_i) - 1$ in $H_i$. By observation $(P1)$, it follows that each such node has out-degree at least~$\Delta^+(H) - 1$ in~$H$. Thus, for each node $v \in W$, $\deg^+_H(v) \geq \Delta^+(H) - 1$. Figure \ref{fig:certificate} shows the node set $W$ obtained for the instance solving in Figure \ref{fig:app_singlesource_ex}. 

\begin{figure}[ht]
	\begin{center}
		\scalebox{0.36}{
        \includegraphics{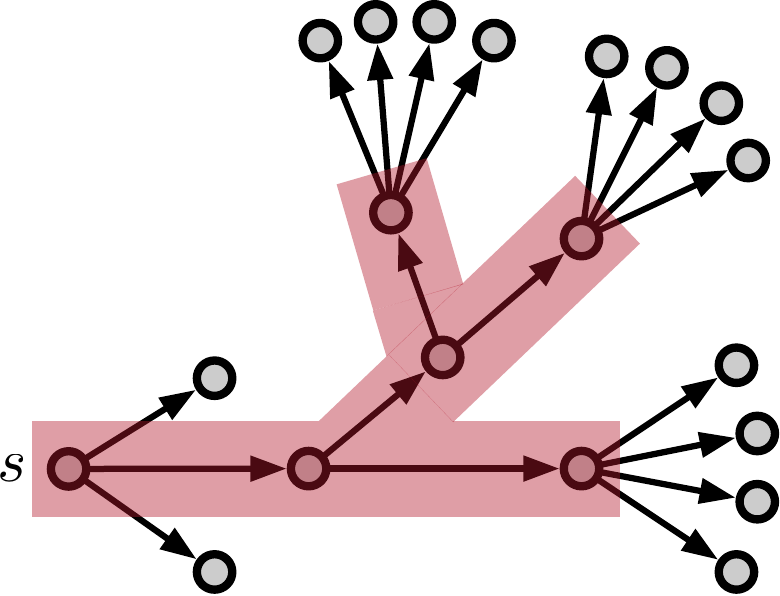}}
		\caption{The set $W$ certifies optimality of $H$ from Figure \ref{subfig:f}.}
		\label{fig:certificate}
\end{center}
\vspace{-.75cm}
\end{figure}
\FloatBarrier

Observe that $D[W]$ is connected since $W$ is formed by adding the node set of dipaths from $s$ in $D$. As a result, 
\[|\delta^+_{H}(W)| \geq (\Delta^+(H) - 1)|W| + 1 - (|W| - 1) = (\Delta^+(H) - 2)|W| + 2.\] 
Furthermore, $W$ contains $s$ and $\delta^+_D(W) \neq \emptyset$, so $W$ satisfies the conditions of Corollary \ref{cor:LB_single_simple}. Thus, it suffices to show that $|\delta^+_{D}(W)| \geq |\delta^+_{H}(W)|$, since then 
\[ \mathtt{LB}(W) = \ceil*{\frac{|\delta^+_D(W)| + |W| - 1}{|W|}} \geq \ceil*{\frac{(\Delta^+(H) - 2)|W| + 2 + |W| - 1}{|W|}} \geq \Delta^+(H)\]
Let $vw \in \delta^+_{H}(W)$. This arc corresponds to the unique arc entering $w$ in $D$, i.e.,~$uw$ (where possibly $u = v$). By the iterative construction of $W$, $u \in W$, and so~$uw \in \delta^+_D(W)$. Furthermore, no two arcs in $\delta^{+}_H(W)$ correspond to the same arc in~$\delta^{+}_D(W)$ since throughout the algorithm, the feasible subgraph remains an arborescence (and hence the total number of arcs is unchanged). If a pair of arcs in $\delta^{+}_H(W)$ were obtained via exchanges of arcs originating at the same arc~$uw$ in~$\delta^{+}_D(W)$, then at some point in the algorithm the number of arcs would have decreased. Thus, $|\delta^+_{D}(W)| \geq |\delta^+_{H}(W)|$ as required. 
\end{proof}

To obtain a more efficient algorithm for the single-source setting, we can reduce the number of operations by assuming we are given a target max out-degree, $T$. Consider a node $v$ with \emph{more} than $T$ nodes in $N^+_D(v)$, and let~${\alpha_v = |N^+_D(v)| - T}$ denote the number of excess nodes. If there is a solution with~$\Delta^+(H) \leq T$, at least $\alpha_v$ nodes in the set $N^+_D(v)$ must instead be reached by arcs departing a predecessor of $v$ in $D$. We divide the set $N^+_D(v)$ into a set of \emph{fixed nodes}, $F_v$, of cardinality $\min\{ |N_D^+(v)|, T\}$, and the \emph{remaining nodes} $R_v = N_D^+(v) \setminus F_v$. We then generate a solution $H$, if $T \geq \Delta^*$, so that $H$ contains an arc $v w$ for all $w \in F_v$, and an arc $u w$ for all $w \in R_v$, for some predecessor of $v$ in $D$. 

We store the set of descendant nodes of a node $v$ that have not yet been allocated in a set $D_v$, which is built throughout the algorithm. In other words, $D_v$ is the set of remaining nodes that $v$ inherited from its descendants that have not yet been allocated to a node. Moving from the leaves towards the root, $s$, we shift any remaining nodes in $R_v$ and $D_v$, to the set $D_{p(v)}$, where $p(v)$ denotes the parent of $v$ in $D$. If we encounter a node with \emph{fewer} than $T$ nodes in $F_v$, we then take a maximum of $-\alpha_v$ nodes from the set $D_v$ and allocate them to $v$. Note that $-\alpha_v = T - |N^+_D(v)| > 0$ and $R_v = \emptyset$ for such nodes. Since the arcs departing the max out-degree nodes in Algorithm \ref{alg:single_source} were chosen arbitrarily, the correctness of the algorithm implies that this procedure will produce a feasible solution $H$ with~${\Delta^+(H) \leq T}$, so long as $T \geq \Delta^*$. For the root node in $D$, $s$, let $p(s):= s$. 

\begin{algorithm}[h!]
\setcounter{AlgoLine}{0}
\KwIn{A target $T \in \mathbb{Z}_{\geq 1}$, and a single-source instance of \mmspp, $\I = (D, \K)$, where $D=(N,A)$ with root $s$.}
    $V \leftarrow \{s=v_1, v_2, \ldots, v_n\}$, in topological order \\
    $F_v \leftarrow \min\{|N_D^+(v)|, T\}$ nodes from $N_D^+(v)~\forall  v \in N$\\
    $R_v \leftarrow N_D^+(v) \setminus F_v ~\forall v \in N$ \\
    $D_v \leftarrow \emptyset ~\forall v \in N$ \\
    $\alpha_v = |N_D^+(v)| - T$ for all $v \in N$ \\
    \For{$i = n$ to $1$}{
        \If{$\alpha_v < 0$}{
            Let $U$ be a set of $\min\{|D_v|, -\alpha_v\}$ nodes from  $D_v$\\
            $D_v \leftarrow D_v \setminus U$\\
            $F_v \leftarrow F_v \cup U$\\
        }
        $D_{p(v)} \leftarrow D_{p(v)} \cup R_v \cup D_v$\\
    }
    \If{$D_s \neq \emptyset$}{
		\Return $\emptyset$ \\}
    \Else{
        \Return $H = \{vw: w \in F_v, v \in N\}$
    }
	\caption{Target algorithm for the single-source setting.}
 \label{alg:single_source_target}
\end{algorithm}
\FloatBarrier

\begin{prop}
Algorithm \ref{alg:single_source_target} runs in $O(n \log n)$ time, and returns a solution $H$ with $\Delta^+(H) \leq T$ whenever $T \geq \Delta^*$.
\end{prop}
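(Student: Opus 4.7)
The plan is to prove correctness (when $T \geq \Delta^*$ the returned $H$ is feasible with $\Delta^+(H) \leq T$) via a post-order induction on the tree, and then argue the runtime by showing that with linked-list data structures the main loop does $O(n)$ total work.

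For correctness, I would define for each $v \in N$ the integer $\psi(v) = |R_v \cup D_v|$ measured at the moment $v$ pushes its contents up to $p(v)$; this is what $v$ fails to absorb. A direct case analysis of the two branches (depending on the sign of $\alpha_v = |N_D^+(v)| - T$) yields the recurrence
\[
\psi(v) \;=\; \max\!\Big\{0,\; \alpha_v + \sum_{c} \psi(c)\Big\},
\]
summed over children $c$ of $v$ in $D$. To bound $\psi(s)$ I would then fix any feasible $H^* \subseteq \cl(D)$ with $\Delta^+(H^*) \leq T$ (which exists by the assumption $T \geq \Delta^*$) and define $\phi(v)$ to be the number of nodes in the subtree of $v$ served in $H^*$ by arcs whose tail is a strict ancestor of $v$. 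Counting obligations at $v$ — there are $|N_D^+(v)| + \sum_c \phi(c)$ nodes that must be served by $v$ or above, and $v$ can serve at most $T$ of them — gives $\phi(v) \geq \max\{0, \alpha_v + \sum_c \phi(c)\}$. A straightforward post-order induction then yields $\psi(v) \leq \phi(v)$ for every $v$, and applying this at $v = s$, where $\phi(s) = 0$, forces $D_s = \emptyset$, so the algorithm returns $H$. Because the algorithm places $w$ in $F_v$ only when $v$ is an ancestor of $w$ in $D$, $H$ is an $s$-arborescence in $\cl(D)$ with $\deg_H^+(v) = |F_v| \leq T$; feasibility for each commodity $k$ follows from a short reverse induction along the $s,t_k$-dipath in $H$ showing that each internal node of this dipath lies on $P_k$ (hence the dipath sits inside $\cl(P_k)$).

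For the runtime, I would implement each $D_v$ as a doubly-linked list of pointers to node objects, so that the operation $D_{p(v)} \gets D_{p(v)} \cup R_v \cup D_v$ is an $O(1)$ pointer concatenation and the extraction of $\min\{|D_v|, -\alpha_v\}$ elements costs time proportional to the number extracted. Each $w \neq s$ is inserted into at most one $D_v$ (when its parent is processed) and removed at most once (either into some $F_{v'}$ or ending up in $D_s$), so the aggregate cost of insertions and extractions across the whole loop telescopes to $O(n)$. Producing the topological order for a rooted tree by DFS is $O(n)$, giving an overall $O(n)$ running time, well within the claimed $O(n \log n)$.

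The main obstacle is the inequality $\phi(v) \geq \max\{0, \alpha_v + \sum_c \phi(c)\}$, which encodes the fact that greedily filling $F_v$ with children (and only then using slack for inherited descendants) is never suboptimal: any alternative at $v$ that keeps a descendant in $F_v$ while pushing a child up only increases what $p(v)$ must handle, since the child arrives at $p(v)$ as an unavoidable direct-child responsibility while a descendant arrives as a one-node responsibility that can itself be further pushed up. Turning this intuition into a clean exchange argument against $H^*$ is the crux of the correctness proof.
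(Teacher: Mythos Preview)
Your approach is correct and genuinely different from the paper's. The paper does not argue correctness directly at all: it simply writes that it ``follows directly from that of Algorithm~\ref{alg:single_source}'', the local-search procedure whose optimality was established earlier via an explicit witness-set construction. For the running time, the paper stores $F_v,R_v,D_v$ as linked lists in which every element carries a head-of-list pointer, merges by splicing the shorter list into the longer (cost $O(\min(L_A,L_B))$), and invokes the standard union-by-size argument to get $O(n\log n)$ total. Your route is self-contained---the potential functions $\psi,\phi$ bypass Algorithm~\ref{alg:single_source} entirely---and your implementation with $O(1)$ list concatenation actually yields $O(n)$, strictly better than the stated bound.

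Two small points. First, you should take $H^*$ to be an $s$-arborescence (the paper proves one exists with $\Delta^+(H^*)=\Delta^*$), and then the inequality you call ``the main obstacle'' is immediate: with $\phi(v)$ counting \emph{proper} descendants of $v$ whose $H^*$-parent lies strictly above $v$, one has the identity $\phi(v)=|N_D^+(v)|+\sum_c\phi(c)-\deg^+_{H^*}(v)$, and $\deg^+_{H^*}(v)\le T$ finishes it. No exchange argument is needed, and the children-versus-descendants discussion in your last paragraph is irrelevant at the level of counts. Second, your claim that $H$ is an $s$-arborescence relies on every $w\neq s$ landing in exactly one $F_v$, which holds precisely because $D_s=\emptyset$; so the logical order is: first deduce $\psi(s)\le\phi(s)=0$, then conclude $D_s=\emptyset$, and only then read off the arborescence structure and feasibility.
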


\begin{proof}
The correctness of the algorithm follows directly from that of Algorithm \ref{alg:single_source}. 

In the first step, the topological ordering can be computed in $O(n)$ time. In this process, for each node, we create an object that stores the parent node. The sets of nodes $F_v, R_v$, and $D_v$ as well as the value of $\alpha_v$ are computed in $O(n)$ time. Producing $H$ given the sets $F_v$ also takes $O(n)$ time. It remains to argue that the for-loop can be executed in $O(n \log n)$ operations. 

This can be argued via {\em union-find} type arguments (e.g., see Lecture 10 in \cite{kozen}). We store the sets $F_v, R_v$, and $D_v$ as linked lists, with each element storing both a pointer to the next element, as well as a pointer to the head of its list. The union of two lists of length $L_A$ and $L_B$ takes $O(\min(L_A, L_B))$ time, by splicing the shorter list into the longer list. Then, each element $x$ has its head pointer updated at most $\log n$ times. 

Our algorithm has the added complication that elements are \emph{removed}. Since we remove an arbitrary set of nodes for a fixed size, $-\alpha$, this can be done in $O(-\alpha)$, since we can select the first $-\alpha$ nodes after the head of the list (or the entire list). Furthermore, each node is deleted at most once, and so the total operations for deletion is $O(n)$. 

To maintain that element $x$ has its head pointer updated at most $\log n$ times in the union operations, instead of splicing the shorter of the two sets into the larger one, we measure the lengths of the sets \emph{as though no deletions have occurred}. Therefore the runtime of the algorithm is $O(n \log n)$ as claimed. 
\end{proof}

\singlesource*
\begin{proof}
By executing Algorithm \ref{alg:single_source_target} for $\log n$ many possible target values, we determine the smallest value of $T$ for which Algorithm \ref{alg:single_source_target} gives a feasible subgraph~$H$ with~$\Delta^+(H) \leq T$. Since Algorithm \ref{alg:single_source_target} did not return a feasible subgraph for the target $T - 1$, $\Delta^* > T$. Therefore, $\Delta^+(H) = \Delta^*$.  
\end{proof}

\section{Out-tree hardness}\label{app:out_tree_hardness}

It remains open whether or not out-tree instances of \mmspp~are NP-hard. Establishing hardness for this setting would demonstrate that our additive 1-approximation algorithm was the best possible in Section \ref{sec:out_trees}. Even in the seemingly simple case of \emph{broom instances} -- instances in which the underlying undirected graph of $D$ is a broom -- appears challenging. In this section we demonstrate that broom instances are at least as hard as a particular packing problem, so long as its inputs are polynomially bounded. We believe this packing problem is of independent interest, and establishing it is strongly NP-hard is one avenue to proving that \mmspp~is NP-hard on brooms, and out-trees more generally. 

$\mathtt{SIGNED}$-$\mathtt{VALUES}$-$\mathtt{BIN}$-$\mathtt{PACKING}$~($\mathtt{SV}$-$\mathtt{BP}$): Given a set $\W = \{w_1, w_2, \ldots, w_n\}$ of~$n$ integers, determine if there is a partition of $[n]$ into sets $N_1, N_2, \ldots, N_\ell$ for some~$\ell \in [n]$ such that for each part $i \in [\ell]$, $\sum_{j \in N_i} w_j \leq 1$. 

Note that we may assume that none of the input integers are 0 or 1, since $\W$ is a $\mathtt{YES}$-instance if and only if $\W' = \{w_i: w_i \in \W, w_i \notin \{0,1\}\}$ is a $\mathtt{YES}$-instance. In the proof of Lemma \ref{lemma:out_tree_hardness}, we establish that broom instances of \mmspp~are at least as hard as the problem of $\mathtt{SV}$-$\mathtt{BP}$ when the integers are bounded by a polynomial in $n$. Thus, establishing that $\mathtt{SV}$-$\mathtt{BP}$ is strongly NP-hard would prove that \mmspp~on brooms (and thus out-trees) is NP-hard. 

\begin{open}
Is $\mathtt{SV}$-$\mathtt{BP}$ strongly NP-hard?
\end{open}

\begin{lemma}\label{lemma:out_tree_hardness}
Broom instances of \mmspp~are at least as hard as $\mathtt{SV}$-$\mathtt{BP}$~with weights that are polynomial in $n$.
\end{lemma}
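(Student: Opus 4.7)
The plan is to construct a polynomial-time many-one reduction from $\mathtt{SV}$-$\mathtt{BP}$ with polynomially-bounded weights to the decision version of \mmspp~on broom instances. Given weights $\W=\{w_1,\ldots,w_n\}$ with $|w_i|\le\mathrm{poly}(n)$, I will build a broom $D$ with handle $r=u_0\to u_1\to\cdots\to u_n\to b$, attach leaves to the brush node $b$, specify a commodity set $\K$, and choose a target $T$, so that $\W$ is a $\mathtt{YES}$-instance iff $\Delta^*(\I)\leq T$. Each handle node $u_i$ will correspond to item $i$, and the partition of $[n]$ into bins will be encoded by the handle positions that serve each leaf's entering arc. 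Since the weights are polynomially bounded, I can afford to create $|w_i|$ leaves or commodities per item while keeping the broom of polynomial size.

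Before constructing the broom I reformulate $\mathtt{SV}$-$\mathtt{BP}$ into a cleaner matching form. Recall we may assume no weight equals $0$ or $1$, so positives satisfy $p_i\ge 2$ and negatives satisfy $-q_j\le -1$. A splitting argument shows that at most one positive need be placed in each bin of an optimal partition: if a bin contains $p_i,p_j$ together with negatives summing to $-N$ with $N\ge p_i+p_j-1$, then the two positives can be separated into singleton-positive bins using only $p_i+p_j-2$ units of negative mass. Under this reformulation, the problem becomes: find disjoint subsets $A_1,\ldots,A_a$ of the negatives so that $\sum_{j\in A_i}q_j\ge p_i-1$ for each positive $p_i$. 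The broom encoding uses two gadget types. For each positive $w_i=p_i$, I attach $p_i$ ``demand leaves'' to $b$ with commodities $(u_i,\ell)$, so each such leaf's entering arc must originate from some $u_j$ with $j\ge i$ and the $+1$ contribution to $u_j$'s out-degree plays the role of one unit of demand on the corresponding bin. For each negative $w_i=-q_i$, I install a capacity gadget at $u_i$: $q_i$ auxiliary leaves attached to $b$ together with commodities chosen so that routing them through $u_i$ (rather than through a later handle node) is free, effectively granting $u_i$ an extra $q_i$ units of absorbable out-degree. I then calibrate $T$ so that, after accounting for the handle arcs and the gadget baselines, the per-node degree bound $\le T$ translates exactly to $\sum_{j\in N_i}w_j\le 1$ for the associated bin.

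To complete the proof I argue both directions. For the forward direction, a valid partition $N_1,\ldots,N_\ell$ of $\W$ yields a feasible $H$: the demand leaves of positive item $i$ are routed to the handle node representing its bin, the negative items are routed in accordance with their assignment, and the gadget baselines combined with the bin-sum bound guarantee $\Delta^+(H)\le T$. For the reverse direction, given any feasible $H$ with $\Delta^+(H)\le T$, I first apply a normalization step using arc-exchange arguments in the spirit of those in Section~\ref{subsec:multi_out_trees}, so that all $p_i$ demand leaves of each positive item are served by a single handle position; this rerouting never increases the max out-degree thanks to the single-positive-per-bin property. The resulting handle position defines a bin assignment, and the degree bound at each $u_i$ converts back to a bin-sum bound. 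The main obstacle will be the design of the negative-weight gadgets: since arcs only add to (never subtract from) out-degrees in a broom, the ``negative'' contribution must be baked into a baseline load that the target $T$ tolerates, and this baseline must interact correctly with the shared handle arcs so that absorbing one demand leaf at $u_i$ consumes exactly one unit of the $q_i$ slack rather than creating a fresh out-arc. A secondary subtlety is the normalization step in the reverse direction, which crucially uses the single-positive-per-bin reformulation to aggregate split demand leaves without loss. Since each $|w_i|$ is polynomial in $n$, the broom has polynomially many nodes, arcs, and commodities, so the reduction runs in polynomial time, yielding the claim.
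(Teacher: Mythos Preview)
Your reduction has a concrete error in the reformulation step. You claim that $\mathtt{SV}$-$\mathtt{BP}$ is equivalent to the version where each bin contains at most one positive item, arguing that two positives in a bin can always be split using no more negative mass than is already present. But the negatives are \emph{items}, not divisible mass. Take $\W=\{3,3,-5\}$: the single bin $\{3,3,-5\}$ sums to $1$, so this is a $\mathtt{YES}$-instance. Under your reformulation you would need disjoint subsets $A_1,A_2$ of the negatives with $\sum_{j\in A_1}q_j\ge 2$ and $\sum_{j\in A_2}q_j\ge 2$, which is impossible with a single negative item. Since both your gadget design and your reverse-direction normalization ``crucially use the single-positive-per-bin reformulation,'' the argument collapses here.

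Even bracketing this, the ``capacity gadget'' for negatives is not specified at a level where one can check it, and with only one handle node per item it is unclear how the degree bookkeeping could be made to match the bin-sum condition exactly. The paper sidesteps all of this by taking a rather different route: it assigns a \emph{block} of $\alpha_i+1$ sources and $\beta_i+T-1$ sinks to each item $i$, with $\alpha_i$ chosen minimally so that $\beta_i:=w_i+\alpha_i(T-1)\ge 0$, and forces the sources of a block into the same weakly-connected component via a shared sink. Then for any feasible $H$ with $\Delta^+(H)\le T$, each weakly-connected component $H_j$ on source set $\S_j$ and sink set $\T_j$ satisfies $|\S_j|+|\T_j|-1\le T|\S_j|$, and a direct calculation shows this inequality is \emph{equivalent} to $\sum_{i\in N_j}w_i\le 1$. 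No reformulation of $\mathtt{SV}$-$\mathtt{BP}$ and no normalization of $H$ is needed; the partition is read off from the components as-is. The reverse direction is a careful but elementary greedy assignment of sinks to sources within each component.
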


\begin{proof}
We show that there is a reduction from $\mathtt{SV}$-$\mathtt{BP}$ to broom instances of \mmspp, when the integers are polynomial in $n$. Let $\W = \{w_1, w_2, \ldots, w_n\}$ be the input integers of an instance of $\mathtt{SV}$-$\mathtt{BP}$, where the integers are polynomial in $n$, $w_1 \geq w_2 \geq \ldots, w_n$. We say that a partition of $\W$ is \emph{valid} if each part sums to at most 1. 

Let $T \geq 2$. We will construct a broom instance of \mmspp, with sets of sources $S_1, S_2, \ldots, S_n$, so that there is solution with max out-degree $T$ if and only if the weakly-connected sets of sources in an optimal solution corresponds to a valid partition of the index set $[n]$. We construct a corresponding instance of the broom problem with target $T$ as follows. 

\textbf{Construction of broom instance, given $T \geq 2$}: we create a set of sources, $S_i$, for each index $i \in [n]$. In this set, we generate $\alpha_i + 1$ sources, where~$\alpha_i$ is the smallest positive integer such that $w_i + \alpha_i(T - 1) \geq 0$. Let $\beta_i := w_i + \alpha_i(T - 1)$. We introduce $\beta_i + T - 1$ sinks, $T_i = \{t_i^1, t_i^2, \ldots, t_i^{\beta_i + T - 1}\}$. 

We construct a set $\K_i$ of commodities with sources in $S_i$ and sinks in $T_i$ as follows. The first source, $s^1_i$, is assigned $\beta_i$ sinks. The final source, $s_i^{\alpha_i + 1}$ is assigned the remaining $T-1$ sinks. Finally, we ensure that each source in $S_i$ must have a path to a common sink node by defining a commodity with source $s$ and sink $t_i^{\beta_i + 1 - 1}$ for each source $s \in S_i$. Specifically, we have the following commodities. 
\begin{itemize}
\item[$\bullet$] $\{(s_i^1, t_i^j): j \in [\beta_i]\}$
\item[$\bullet$] $\{(s_i^{\alpha_i + 1}, t_i^j): j \in [\beta_i + T - 1] \setminus [\beta_i] \}$
\item[$\bullet$] $\{(s_i^{j}, t_i^{\beta_i + T - 1}): j \in [\alpha_i]\}$
\end{itemize}
Note that each node in $\{s_i^2, \ldots, s_i^{\beta_i + T - 2}\}$ is the source of only a single commodity. 

The corresponding broom instance $\I = (D, \K)$ has the digraph $D$ as depicted in Figure \ref{fig:num_hardness}, with a dipath on the source sets $S_1$ to $S_n$, and the sinks are the leaves. The commodity set is $\K = \cup_{i \in [n]} \K_i$. 

\begin{figure}[ht]
	\begin{center}
		\scalebox{0.4}{\includegraphics{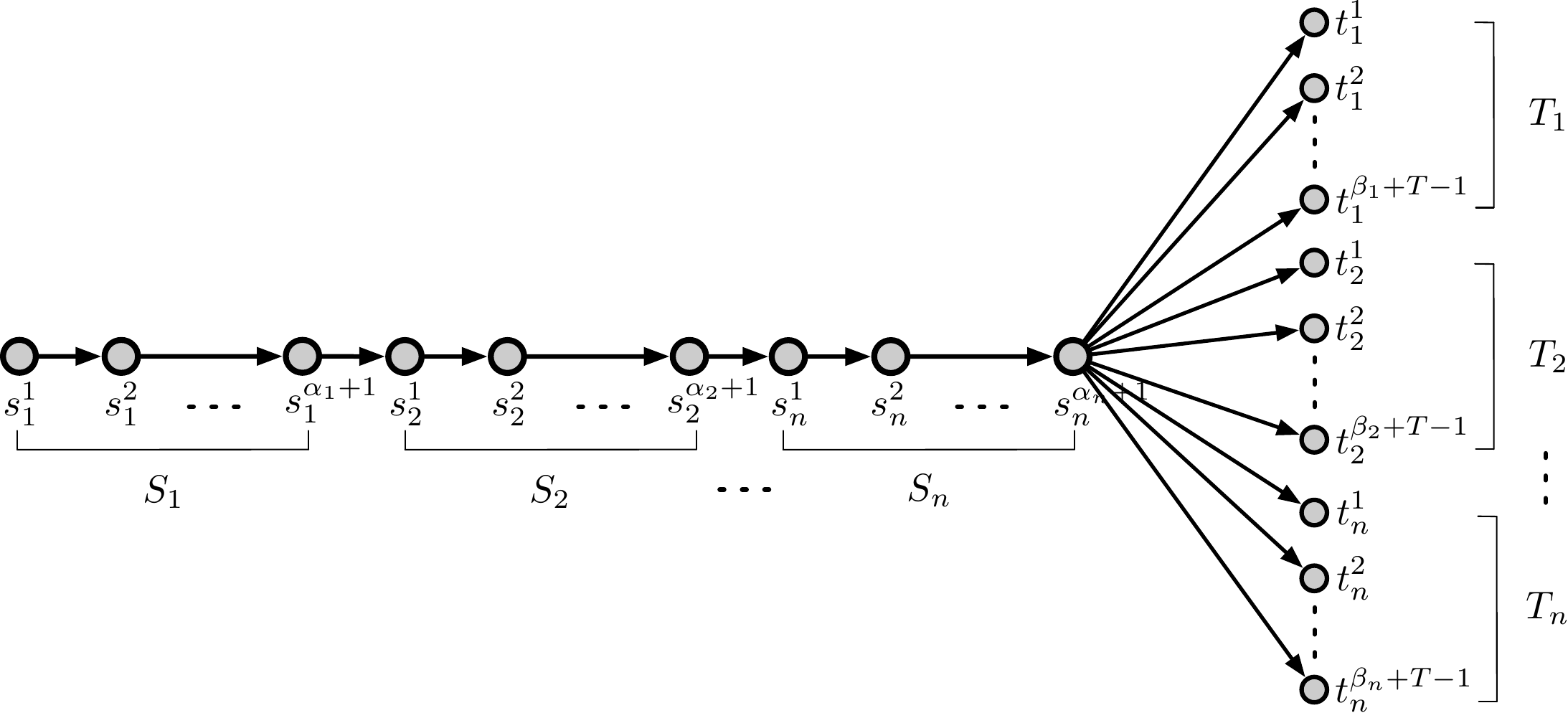}}
		\caption{Digraph $D$ for the broom instance.}
		\label{fig:num_hardness}
\end{center}
\end{figure}
\FloatBarrier

We now prove that $\Delta^* \leq T$ if and only if there is a valid partition of $\W$. 

\noindent$(\Rightarrow)$ Suppose there is a solution, $H$, to the instance $\I = (D, \K)$ with~${\Delta^+(H) \leq T}$. The solution splits into weakly-connected components, $H_1, H_2, \ldots, H_{\ell}$. Recall that each set of sources $S_i$ will be in the same weakly-connected component due to the shared sink, $t_i^{\beta_i + T - 1}$. For each component $j \in [\ell]$, let~${N_j = \{i: S_i \in H_j\}}$. We will prove that $N_1, N_2, \ldots, N_{\ell}$ is a valid partition of $\W$. 

Let $\S_j$ and $\T_j$ denote the set of sources and sinks in component $H_j$ respectively. By definition of $N_j$, it follows that $\S_j = \bigcup_{i \in N_j} S_i$ and $\T_j = \bigcup_{i \in N_j} T_i$. Furthermore, the node set of $H_j$ is $\S_j \cup \T_j$. The fewest arcs possible for the subgraph $H_j$, since it is weakly-connected, is $|\S_j| + |\T_j| - 1$. Since every arc in~$H_i$ departs a node in $\S_j$, $\Delta^+(H_j) \leq T$ implies that $\frac{|\S_j| + |\T_j| - 1}{|\S_j|} \leq T$. Therefore, 
\begin{align*}
& |\S_j| + |\T_j| - 1 \leq T |\S_j| \\
\iff & \sum_{i \in N_j} (\alpha_i + 1) + \sum_{i \in N_j} (\beta_i + T - 1) - 1 \leq T \sum_{i \in N_j} (\alpha_i + 1) \\
\iff & \sum_{i \in N_j} \alpha_i + |N_j| + \sum_{i \in N_j} (w_i + \alpha_i(T - 1)) + (T - 1) |N_j| - 1 \leq T \sum_{i \in N_j} \alpha_i + T |N_j|\\
\iff & \sum_{i \in N_j} \alpha_i + \sum_{i \in N_j} (w_i + \alpha_i(T - 1)) - 1 \leq T \sum_{i \in N_j} \alpha_i \\
\iff & \sum_{i \in N_j} \alpha_i + \sum_{i \in N_j} w_i + T \sum_{i \in N_j} \alpha_i - \sum_{i \in N_j} \alpha_i - 1 \leq T \sum_{i \in N_j} \alpha_i\\
\iff & \sum_{i \in N_j} w_i \leq 1.
\end{align*}
Since this holds for each $j \in [\ell]$, the partition $N_1, N_2, \ldots, N_{\ell}$ satisfies~${\sum_{i \in N_j} w_i \leq 1}$ for each $j \in [\ell]$. 

\noindent$(\Leftarrow)$ Suppose there is a valid partition, $N_1, N_2, \ldots, N_{\ell}$, for $\W$. We will form a solution $H$ to the broom instance $\I = (D, \K)$, consisting of $\ell$ connected components,~$H_1, H_2, \ldots, H_{\ell}$, where each component has maximum out-degree at most $T$. 

\noindent\textbf{Construction of $H_j$}. Let $j \in [\ell]$ and let $\S_j = \bigcup_{i \in N_j} S_i$ and $\T_j = \bigcup_{i \in N_j} T_i$. First, $H_j$ is formed by taking the unique dipath in $\cl(D)$ on the nodes in $\S_j$. Let $\P_j$ denote this path. It remains to add arcs to each of the sinks in $\T_j$ so that there is an $s_k, t_k$-dipath for each commodity $k \in \K_i$, for each $i \in N_j$, while ensuring that $\Delta^+(H_j) \leq T$.

Let $\{s_1, s_2, \ldots, s_{|\S_j|}\}$ be the set of sources in $\S_j$, ordered so that there is an~${s_i, s_j}$-dipath in $D$ whenever $i \leq j$. Let $\{t_1, t_2, \ldots, t_{|\T_j|}\}$ be the set of sinks in~$\T_j$, ordered first by increasing subscript, and then increasing superscript (top to bottom from Figure \ref{fig:num_hardness}). 

We add arcs to $H_j$ from $\S_j$ to $\T_j$ as follows. First we add an arc from $s_{|\S_j|}$ to each of the final $T$ sinks in $\T_j$. In decreasing order of sources, we continue to add arcs from the current sink to the last $T-1$ sinks remaining in $\T_j$, until no sinks remain, or we run out of sources. Observe that the resulting graph, $H_j$, has $\Delta^+(H_j) \leq T$. 

If $|\T_j| \leq (T-1) |\S_j| + 1$, then $H_j$ will include all sinks in $\T_j$. This is indeed the case, since 
\begin{align*}
|\T_j| \leq (T-1) |\S_j| + 1  \iff |\T_j| + |\S_j| - 1 \leq T |\S_j| \iff \sum_{i \in N_j} w_i \leq  1, 
\end{align*}
as proven in the previous direction.

Thus, it remains to show that for each commodity $k \in \K$ with $s_k \in \S_j$ and~$t_k \in \T_j$, there is an $s_k, t_k$-dipath in $H_j$. We prove the following claims, and recall that we may assume all weights $w_i$ are not equal to 0 or 1. 

\noindent\textbf{Claim 1:} If $w_i \geq 2$, then $|T_i| \geq |S_i|T$. 
\par
\leftskip=1cm
\noindent\emph{Proof.} If $w_i \geq 2$, then $\alpha_i = 1$ and $|S_i| = 2$. Furthermore, it holds that~${\beta_i = w_i + (T-1) \geq T + 1}$, and so $|T_i| = \beta_i + T - 1 \geq 2T$. Therefore,~$|T_i| \geq |S_i| T$. 
\par\leftskip=0cm
\noindent\textbf{Claim 2:} If $w_i \leq 0$, then $|T_i| \leq |S_i|(T-1)$.
\par\leftskip=1cm
\noindent \emph{Proof.} If $-(T-1) \leq w_i \leq 0$, then $\alpha_i = 1$ and $|S_i| = 2$. Furthermore, it must be that $0 \leq \beta_i \leq T-1$ and so $|T_i| = \beta_i + T - 1 \leq 2T-2$. Thus, it is $|T_i| \leq |S_i| (T - 1)$. 

If instead $w_i < -(T-1)$, then $\alpha_i \geq 2$, and $|S_i| \geq 3$. Furthermore, it holds~$\beta_i \leq T$, and so $|T_i| \leq 2T-1$. Note that for all $T \geq 0$, we have~$2T-1 \leq 3(T-1)$. Thus, $
|T_i| \leq 3(T-1) \leq |S_i|(T - 1)$. 
\par\leftskip=0cm

For a contradiction, suppose there is some commodity with source $s_p$ and sink~$t_q$ (indices from ordered nodes in $\S_j$ and $\T_j$), such that $H_j$ contains no~${s_p, t_q}$-dipath. Moreover, among all commodities of this form, assume that $p$ is the largest possible index. Since there is no $s_p, t_q$-dipath, the arc from $\S_j$ to $t_q$ departs a node $s_r$ with $r < p$. Similarly, since there is no $s_p, t_q$-dipath in $H_j$, it follows that 
\begin{align}\label{ineq1}
    |\bigcup_{r \geq p} \T(s_r)| > (|\S_j| - p + 1)(T - 1) + 1.
\end{align}
Since $|\T(s_i^r)| \leq T-1$ whenever $r \geq 2$, and $p$ is the largest index such that there is no $s_p, t$ dipath for some $t \in \T(s_p)$, it follows that $s_p = s_i^1$ for some~${i \in N_j}$. Thus, an equivalent statement to (\ref{ineq1}) is
\begin{align}\label{ineq2}
    \sum_{r \in N_j: r \geq i} |T_r| > 1 + \sum_{r \in N_j: r \geq i} |S_r| (T-1) .
\end{align}

Due to the ordering of the sets $S_1$ to $S_n$ in $D$ and by Claim 2, inequality (\ref{ineq2}) cannot be satisfied if $w_i \leq 0$. 
Alternatively, $w_i \geq 2$. By Claim 1 and the ordering of the source sets, 
\begin{align}\label{ineq3}
\sum_{r \in N_j: r < i} |T_r| \geq \sum_{r \in N_j: r < i} |S_r|T. 
\end{align}
But then by combining inequalities (\ref{ineq2}) and (\ref{ineq3}), we see that $|\T_j| > |\S_j| (T-1) + 1$, a contradiction. Therefore, $H_j$ is feasible as claimed, and $\Delta^* \leq T$.
\end{proof}

\end{document}